\newcounter{protocol}
\newenvironment{protocol}[1][htb]{
  \let\c@algorithm\c@protocol
  \renewcommand{\ALG@name}{Protocol}
  
  \begin{algorithm}[#1]
  }{\end{algorithm}
}
\newcommand{\ket}[1]{|#1\rangle}
\definecolor{maroon}{rgb}{144,12,63}
\definecolor{darkblue}{rgb}{27,12,144}
\definecolor{mypurple2}{RGB}{170,0,255}
\definecolor{myred}{RGB}{255, 0, 90}
\definecolor{mycyan}{RGB}{0, 191, 255}
\theoremstyle{definition}
\newtheorem{claim}{Claim}
\newcommand{\Q}{{\mathcal{Q}}}
\newcommand{\Gc}{{G_{\mathrm{c}}}}
\newcommand{\Gbcc}{{G_{\mathrm{bcc}}}}
\begin{document}
\title{Fault-tolerant qubit from a constant number of components}
\author{Kianna Wan}
\affiliation{Stanford Institute for Theoretical Physics, Stanford University, Stanford, CA 94305, USA}
\author{Soonwon Choi}
\affiliation{Department of Physics, University of California Berkeley, Berkeley, CA 94720, USA}
\author{Isaac H. Kim}
\affiliation{Department of Computer Science, UC Davis, Davis, CA 95616, USA}
\affiliation{School of Physics, The University of Sydney, Sydney, NSW 2006, Australia}
\author{Noah Shutty}
\affiliation{Stanford Institute for Theoretical Physics, Stanford University, Stanford, CA 94305, USA}
\author{Patrick Hayden}
\affiliation{Stanford Institute for Theoretical Physics, Stanford University, Stanford, CA 94305, USA}
\date{\today}

\begin{abstract}
With gate error rates in multiple technologies now below the threshold required for fault-tolerant quantum computation, the major remaining obstacle to useful quantum computation is scaling, a challenge greatly amplified by the huge overhead imposed by quantum error correction itself. We propose a fault-tolerant quantum computing scheme that can nonetheless be assembled from a small number of experimental components, potentially dramatically reducing the engineering challenges associated with building a large-scale fault-tolerant quantum computer. Our scheme has a threshold of $0.39\%$ for depolarising noise, assuming that memory errors are negligible. In the presence of memory errors, the logical error rate decays exponentially with $\sqrt{T/\tau}$, where $T$ is the memory coherence time and $\tau$ is the timescale for elementary gates. Our approach is based on a novel procedure for fault-tolerantly preparing three-dimensional cluster states using a single actively controlled qubit and a pair of delay lines. Although a circuit-level error may propagate to a high-weight error, the effect of this error on the prepared state is always equivalent to that of a constant-weight error. We describe how the requisite gates can be implemented using existing technologies in quantum photonic and phononic systems. With continued improvements in only a few components, we expect these systems to be promising candidates for demonstrating fault-tolerant quantum computation with a comparatively modest experimental effort. 
\end{abstract}
\maketitle

\section{Introduction}
In recent years, significant experimental progress has been made towards building a large-scale quantum computer. In platforms such as superconducting qubits and trapped ions, the error rates for small systems have been successfully suppressed below the threshold error rate of the surface code~\cite{Barends2014,Harty2014,Ballance2016}. Using newly developed techniques for neutral atoms trapped in optical tweezer arrays, the coherence time, gate fidelity, and read-out fidelity for large assemblies of qubits are being rapidly improved~\cite{Kaufman2018control,Thompson2019narrow,Endres2019repeated,levine2019parallel}. These advances give us hope that we will one day be able to perform fault-tolerant quantum computation by scaling up these systems while maintaining low error rates.

However, the scalability of leading approaches remains an important issue. Current estimates suggest that the engineering effort needed to build even a \emph{single} logical qubit with logical error rate low enough for useful quantum computation could be enormous~\cite{Fowler2009}. Quantum algorithms with practical ramifications can involve applying at least $\sim\!10^{8}$ logical gates to $\sim\!100$ logical qubits~\cite{Bauer2016,Babbush2018}. To ensure that the outcome of the computation is correct with high probability, the logical error rate would then need to be below $\sim\!10^{-8}$. Based on the sub-threshold error scaling in Ref.~\cite{Fowler2012}, this would require at least $\sim\!400$ physical qubits per logical qubit if the physical error rate is half the threshold.

Manufacturing, calibrating and controlling physical qubits in such large numbers will be tremendously difficult. The fabrication process for components of solid-state quantum devices, such as quantum dots or superconducting circuits~\cite{Barends2014}, is inevitably imperfect, leading to variations in the properties of individual qubits and their interactions. Even in systems where qubits are encoded in identical particles, \textit{e.g.}, trapped ions~\cite{Harty2014,Ballance2016,monroe2013scaling} or neutral atoms~\cite{Kaufman2018control,Thompson2019narrow,Endres2019repeated,levine2019parallel}, experimental control parameters such as the strengths of laser excitation pulses or trapping potentials may exhibit inhomogeneity. Thus, in order to control these qubits with high fidelity, an experimental system needs to be accurately calibrated across the entire quantum computer. In superconducting circuits, for instance, inhomogeneity is unavoidable, and stray couplings between ideally independent qubits are an experimental fact of life that must be mitigated through control logic (see \textit{e.g.}, \cite{Barends2014}.) The difficulty of doing so increases significantly with the number of qubits~\cite{arute2019quantum}.

To circumvent these challenges, we propose a novel approach to fault-tolerant quantum computation, in which a well-protected logical qubit can be built using only a handful of experimental components. Consequently, the engineering effort required to develop the computer's components can be significantly reduced, potentially opening a simpler and more easily scalable route to fault-tolerant quantum computation. At a high level, our approach succeeds by shedding the limitations implicit in two assumptions that usually guide fault-tolerant circuit design: first, that the computer's qubits are all of the same type so are fairly homogeneous, and second, that good fault-tolerant gates should not propagate errors.

Specifically, we construct a fault-tolerant protocol for generating the three-dimensional cluster state of Ref.~\cite{Raussendorf2006}, using which universal fault-tolerant computation can be performed via adaptive single-qubit measurements. While there are already well-known procedures for preparing this state~\cite{Raussendorf2006,Cluster1,Cluster2,Cluster3,Cluster4,Cluster5}, our method has the advantage of being compatible with a much simpler experimental setup than what was originally envisaged in Refs.~\cite{Raussendorf2006,Raussendorf2007a,Raussendorf2007}. We take an approach similar to existing proposals for building large one- and two-dimensional cluster states using a small number of physical components~\cite{Economou2010_cluster2D,Yokoyama2013,Junichi2016,Pichler2017,Asavanant2019}. However, while two-dimensional cluster states are universal for quantum computation, they are not known to support \textit{fault-tolerant} quantum computation.\footnote{More precisely, two-dimensional cluster states of \textit{unprotected physical qubits} are not known to be a particularly useful resource for fault-tolerant quantum computation. Two-dimensional cluster states can be used to perform local gates on a one-dimensional array of qubits~\cite{Raussendorf2003}, for which fault-tolerant quantum computing schemes have been developed~\cite{Gottesman1999,Stephens2009}.  However, the threshold (in the circuit model) is likely prohibitively low (estimated to be $10^{-5}$~\cite{Stephens2009}).} The step from universality to fault-tolerance is not obvious and, in fact, quite surprising considering the architecture of the system.

Our protocol is built around a special ancilla qubit, $\Q$, which interacts sequentially with a stream of data qubits propagating through a delay line. These data qubits are encoded in degrees of freedom sharing a common physical implementation, \textit{e.g.}, different temporal modes of photons or phonons in a waveguide. The only interactions are between $\Q$ and data qubits (and not between data qubits themselves), and these interactions are fixed and periodic, requiring a modest amount of calibration. We show, moreover, that all of the operations required in our protocol can be implemented using existing technologies in quantum photonic and phononic systems.

To demonstrate fault-tolerance, we analyse the robustness of our protocol against both circuit errors and memory errors. We use a standard depolarising model to describe circuit errors, which are associated with imperfect gates, measurements, and state initialisation. Memory errors refer to errors that occur while qubits are idle, for which we study the effect of dephasing and qubit loss.

In the absence of memory errors, there is a threshold of $0.39\%$ for the circuit error rate, below which the logical error can be arbitrarily suppressed by increasing the number of physical qubits. In the presence of memory errors, the logical error rate cannot be arbitrarily suppressed. However, provided that the circuit error rate is below threshold, the logical error rate decays rapidly with the inverse of the memory error rate. More precisely, suppose that the coherence time of the data qubits is lower-bounded by $T$. Then, for a sufficiently large but finite $T$, the logical error rate can be made exponentially small in $\sqrt{T/\tau}$. Here, $\tau$ is the inverse of the frequency with which gates are applied, which is ultimately limited by the timescale for interactions between $\Q$ and data qubits. The number of logical gates that can be reliably executed will therefore scale exponentially with $\sqrt{T/\tau}$.

A large separation between $T$ and $\tau$ is often observed in certain experimental platforms, such as trapped ions or neutral atoms utilising atomic clock transitions~\cite{monroe2013scaling,Kaufman2018control,Thompson2019narrow,Endres2019repeated}. Indeed, because of the strict separation in the roles of $\Q$ and the data qubits, maximising the ratio $T/\tau$ while maintaining high gate fidelity is an invitation to design a hybrid system consisting of two types of qubits with different physical substrates. That is the context in which we expect our scheme to be the most promising. Photonic~\cite{Tamura2018} and phononic~\cite{PhysRevLett.121.040501} delay lines are known to be good quantum memories, and can be coupled to controllable qubits capable of playing the role of $\Q$. 

To illustrate the potential of our scheme, suppose that memory errors are dominated by loss. Then, if the circuit error rate is $10^{-3}$---an aspirational but realistic target---our protocol can in principle attain a logical error rate of $10^{-8}$ for $\tau/T \approx 1.4 \times 10^{-5}$, and $10^{-15}$ for $\tau/T \approx 3.2\times 10^{-6}$. Although these numbers are beyond the reach of current experiments, these estimates suggest that extremely low logical error rates can be achieved by improving a very small number of experimental components. In particular, if the operations involving $\Q$ can be calibrated such that circuit error rate is below the threshold value of $0.39\%$, incremental improvements of a \emph{single} component---the delay line---can lead to drastic reductions in the logical error rate. 

Although our scheme was primarily motivated by the aforementioned experimental considerations, it also has a novel feature that is counterintuitive from the point of view of fault-tolerance. The design of fault-tolerant protocols usually aims to prevent the propagation of single-qubit errors to many qubits. This is achieved, naturally enough, by applying gates that do not spread errors, \textit{e.g.}, transversal gates, or ``long" gates that are interspersed with error correction steps, such as in lattice surgery~\cite{Horsman2011}. In all of these methods, one actively avoids interacting one qubit with many others in a code block, since errors occurring on that qubit could propagate to the others, exceeding the error-correcting capabilities of the code.

In our protocol, we are actually deliberately taking this seemingly ill-advised approach: a single qubit ($\Q$) is coupled to \emph{every} data qubit. The depth of the circuit scales linearly with the number of data qubits, and no error detection or correction is performed during the process. Nevertheless, the procedure is fault-tolerant in that any single-qubit error occurring in the circuit results in a constant-weight error on the final state. An interesting subtlety is that even though a single-qubit circuit-level error can in general be propagated by the subsequent gates to a highly nonlocal error, this nonlocal error is always equivalent under stabilisers of the prepared cluster state to some geometrically local error. More generally, we show that any $m$-qubit circuit-level error results in at most $m$ geometrically local errors on the final state. 

To summarise, our proposal and analysis indicate that fault-tolerant quantum computation could be achieved through the incremental improvement of a small number of key components, avoiding most of the systems engineering challenges inherent in leading approaches. This is possible because of three important features of our scheme. First, it only requires manufacturing and calibrating a constant number of experimental components, independent of the number of data qubits. Second, there are readily available experimental platforms that can realise our protocol. Third, any constant-weight error occurring during our protocol results in a constant-weight error on the prepared cluster state.

The rest of this paper is structured as follows. We provide the necessary background and a summary of our main results in Section~\ref{section:background_summary}. We then present a hardware-independent description of our protocols in Section~\ref{sec:cluster}. In Section~\ref{sec:error}, we analyse how errors propagate through our circuits, and numerically calculate thresholds for the circuit error rate. In Section \ref{sec:experiment}, 
we outline possible experimental implementations of our proposal in photonic and phononic systems. In Section~\ref{sec:delay_line}, we study the effect of memory errors, estimating the logical error rates we can expect to achieve in various experimental platforms. We conclude with a discussion in Section~\ref{sec:discussion}.

\section{Summary}\label{section:background_summary}
We start by briefly reviewing the subject of fault-tolerant measurement-based quantum computation using cluster states in Section~\ref{section:background}, focusing on the aspects that are relevant to this paper. We then summarise our main results in Section~\ref{section:summary}.

\subsection{Background}\label{section:background}
The \textit{cluster state} $\ket{\psi_G}$ corresponding to an undirected graph $G = (V,E)$ is defined as\footnote{States of the form of Eq.~\eqref{cluster_def} are also referred to as \textit{graph states} in the literature.}
\begin{equation} \label{cluster_def}
\ket{\psi_G} \coloneqq \left[\prod_{(i,j) \in E} Z_{i,j}\right]\bigotimes_{i' \in V}\ket{+}_{i'},
\end{equation}
where each vertex $i \in V$ is identified with a qubit, and $Z_{a,b}$ denotes the controlled-$Z$ gate on qubits $a$ and $b$. The stabilisers of $\ket{\psi_G}$ are generated by $\{S_i: i \in V\}$, where~\cite{Raussendorf2003}
\begin{equation} \label{stabilisers} S_i \coloneqq X_i \prod_{j: (i,j) \in E}Z_j. \end{equation} Here, $X_a$ and $Z_a$ denote Pauli $X$ and $Z$ on qubit $a$. 

The importance of cluster states in the theory of fault-tolerant quantum computation was established by the seminal works of Raussendorf, Harrington, and Goyal~\cite{Raussendorf2006,Raussendorf2007a, Raussendorf2007}, which demonstrated that universal fault-tolerant quantum computation can be performed via single-qubit measurements on a particular cluster state. This cluster state corresponds to the body-centered cubic (bcc) lattice shown in Figs.~\ref{fig:bcc} and~\ref{fig:bcc_labels}. Their scheme (for constructing this cluster state and extracting the syndrome) boasts a high threshold of $p_{\textrm{th}}\approx 0.58\%$~\cite{Barrett2010}
under the standard depolarising model for circuit errors, making it one of the most promising approaches for building a large-scale quantum computer.

To prepare the cluster state $\ket{\psi_{\Gbcc}}$ corresponding to the bcc lattice $\Gbcc$, Refs.~\cite{Raussendorf2006,Barrett2010} consider a simple constant-depth circuit, which follows directly from Eq.~\eqref{cluster_def}. Each qubit is initialised in the state $\ket{+}$, and the controlled-$Z$ gates in Eq.~\eqref{cluster_def} for $G = \Gbcc$ are applied in four layers. It is straightforward to see that any single-qubit error in this circuit propagates to a constant-weight error on $\ket{\psi_{\Gbcc}}$.
Together with the fact that $\ket{\psi_\Gbcc}$ is a foliation of the surface code~\cite{Bolt2016}, this implies that there is a finite threshold for the circuit error rate below which the logical error rate decays exponentially with the system size.

Given a cluster state on an $L \times L \times N$ bcc lattice, one can perform fault-tolerant quantum computation by adaptively measuring the qubits in one of three bases (the eigenbases of the operators $X, Z,$ and $e^{i\frac{\pi}{8}Z} X e^{-i\frac{\pi}{8}Z}$),
depending on the logical gates that are to be executed. Note that a qubit can be measured before the full cluster state has been prepared, provided that all of the controlled-$Z$ gates in Eq.~\eqref{cluster_def} involving that qubit have been applied. Thus, the cluster state could alternatively be prepared and measured in such a way that only $O(L^2)$ physical qubits are in use at any given time. Roughly speaking, $L$ determines the number of logical qubits that can be encoded and the distance of the underlying code, while $N$ is related to the length of the logical computation. We refer the reader to Refs.~\cite{Fowler2009,Raussendorf2006,Raussendorf2007,Raussendorf2007a} for further details.

Even though fault-tolerant computation can be in principle  performed on such a cluster state, in this paper, we focus on realising a fault-tolerant quantum memory. In particular, we consider using a cluster state on an $L \times L \times L$ bcc lattice to store a single logical qubit. From the perspective of quantum error correction, this cluster state can be viewed as a space-time history of the surface code~\cite{Kitaev2003,Bravyi1998} with $L$ rounds of syndrome measurements, the bottom and the top boundaries of the cluster state corresponding to the surface codes at the initial and the final step of the error-correction protocol. Our estimates for the logical error rate, which decays exponentially with $L$ under local noise models (cf. Sections~\ref{sec:error} and~\ref{sec:delay_line}), quantifies the probability that there is a logical bit or phase flip between the bottom and the top layer. 

The leading architecture for implementing this scheme is based on a two-dimensional array of physical qubits~\cite{Raussendorf2007,Raussendorf2007a,Fowler2012}. This approach suffers from an important practical problem, however. The \emph{space overhead}, which is the ratio between the number of physical qubits and the number of logical qubits, is quite large in practice. For instance, the space overhead for running Shor's algorithm~\cite{Shor1997}, assuming a physical error rate of $10^{-3}$, is estimated to be at least a few hundred~\cite{Fowler2009,Gidney2019}. Thus, building even a single logical qubit with low enough error rate will require hundreds if not thousands of physical components. Moreover, these components will need to be carefully calibrated to ensure that the physical error rates across all of the qubits are sufficiently low. While this is not impossible, it certainly requires a Herculean effort.

\subsection{Main results}\label{section:summary}

Generally speaking, large space overhead is undesirable because the effort to build a fault-tolerant quantum computer may grow proportionately with the number of physical qubits. However, for the purpose of assessing the feasibility of a given architecture, it is important to distinguish the mathematical definition of space overhead from the engineering difficulty of building a quantum computer. We believe that a useful figure of merit for the latter is the \emph{component overhead}, which is the number of basic experimental components used to build a single logical qubit. Of course, the precise definition of ``experimental component" depends on the degrees of freedom that encode the quantum information. Once those degrees of freedom are identified, one can compare different protocols in terms of the required experimental components. This information can be related more directly to the feasibility of the protocol. 

Component overhead can be an informative metric because the basic building blocks that constitute a large-scale fault-tolerant quantum computer may be difficult to mass manufacture. Even though there are several experiments that report error rates below the thresholds of various fault-tolerant quantum computing schemes~\cite{Raussendorf2006,Raussendorf2007,Raussendorf2007a,Fowler2009,Knill2004},
these numbers are often obtained in a manner that is incompatible with scalability. This is due to the practical reality that when the components are manufactured, they have sample-to-sample variations which lead to imperfect gates. Often, the reported numbers come from the very best of those samples, but if the variation is not negligible, many of the other samples will generally suffer from higher error rates. Therefore, given that high-quality components are difficult to come by, scalable fault-tolerant quantum computing protocols should aim to minimise the number of such components. 

Motivated by this observation, we construct simple abstract protocols for fault-tolerant quantum computation that are amenable to extremely low component overhead. We also present concrete experimental proposals for realising the protocols using a single transmon qubit interacting with a stream of phonons, or alternatively, an atom interacting with a stream of photons. Our protocols may be applicable more generally, \textit{e.g.,} to systems consisting of ions or neutral atoms. There are two distinguishing features of all these systems that are crucial. First, the degrees of freedom that encode the quantum information are either identical by nature or can be made to be nearly identical. Second, the qubits have long coherence times, leading to low memory error rates.

For systems that fulfill these conditions, we describe a simple method for preparing cluster states corresponding to the bcc lattice. A schematic illustration of the setup is given in Fig.~\ref{fig:protocol_schematic}. The procedure involves two types of qubits, a single actively controlled qubit $\Q$ and a large number of data qubits. Each data qubit interacts with $\Q$ several times, and these interactions are separated by time delays determined by the size of the bcc lattice. The data qubits do not interact with each other. The gates applied between $\Q$ and the data qubits are specified in Section~\ref{sec:cluster}, and experimental techniques for realising these gates are described in Section~\ref{sec:experiment}. The procedure is an extension of the photonic machine gun proposal of Ref.~\cite{Lindner2009} and variants thereof~\cite{Pichler2017}. These works advocated methods for creating cluster states on one- and two-dimensional lattices, respectively, neither of which are known to be useful resources for fault-tolerant quantum computation. In contrast, our protocol prepares the cluster state on the bcc lattice, which (as discussed in Section~\ref{section:background}) can be straightforwardly used as a resource state for fault-tolerant measurement-based quantum computation.

\begin{figure}
    \centering
    \includegraphics[width=0.9\columnwidth]{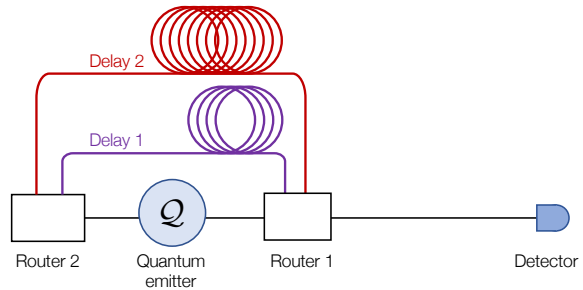}
    \caption{A schematic illustration of the apparatus for implementing our protocols. Propagating modes, \textit{e.g.}, photons or phonons, are stored in delay lines between interactions with the control qubit $\Q$ (which, in the context of the experimental platforms considered in Section~\ref{sec:experiment}, is a quantum emitter). Each qubit is measured after a constant number of interactions with $\Q$. See Section~\ref{sec:cluster} for the abstract description of our protocols, and Section~\ref{sec:experiment} for experimental details. 
    \label{fig:protocol_schematic}}
\end{figure}

Independent of the precise sequence of gates between the control qubit $\Q$ and the data qubits, any protocol of the form depicted in Fig.~\ref{fig:protocol_schematic} is at risk of being strongly susceptible to noise. There are two potential sources of concern. The first is that $\Q$ interacts with every single data qubit, without any intermediate syndrome measurements being performed. This creates the danger that an error occurring on $\Q$ could propagate to all of the data qubits that subsequently interact with $\Q$. The second issue is that there is a time delay between successive interactions of the same data qubit with $\Q$. For generating an $L \times L \times N$ bcc lattice, the total time delay is proportional to $L^2$. 
Thus, the total memory error accumulated during these time delays may be significant.

The first of these is actually a non-issue. As discussed in Section~\ref{sec:error}, an important feature of our protocols is that even though single-qubit errors, including those on $\Q$, may propagate to highly nonlocal errors, the effect of these errors on the prepared cluster state is always equivalent to that of geometrically local errors. Hence, using the standard depolarising noise model for circuit errors and the usual minimum-weight perfect matching decoder, there is a finite threshold for the circuit error rate. We find threshold values of $0.23\%$ and $0.39\%$, depending on the details of the protocol (see Sections~\ref{sec:prep_3d} and~\ref{sec:threshold_result}). Therefore, if memory errors are negligible, the logical error rate can be arbitrarily suppressed by increasing $L$.

In contrast, for non-negligible memory error rates, the logical error rate cannot be made arbitrarily small, since increasing $L$ also leads to an increase in the total error incurred during the time delays. We study these effects in Section~\ref{sec:delay_line} by assuming a nonzero error rate $\eta$ per time step. As long as the circuit error rate is below threshold, we argue that by judiciously choosing $L$, the logical error rate can be made exponentially small in $\eta^{-{1}/{2}}$. We perform extensive numerical simulations, whose results show excellent agreement with this prediction. Since the logical error rate decays significantly faster than $\eta$ for small values of $\eta$, the effect of memory error can be mitigated.

The fact that our scheme leads to small but not arbitrarily small logical error rates is reminiscent of the fault-tolerant quantum computing schemes using anyons~\cite{Kitaev2003,Nayak2008,Alicea2011,Nakamura2020} or $0$-$\pi$ qubits~\cite{Kitaev2006,Manucharyan2009,Brooks2013}. In these approaches, the logical error rate is exponentially small in some large physical parameter. In ours, this parameter is $\eta^{-{1}/{2}}$.

\section{Cluster state preparation} \label{sec:cluster}

In this section, we present a general algorithm for preparing cluster states associated with arbitrary graphs $G = (V,E)$ (Section~\ref{sec:general prep}). We then apply this algorithm in two different ways to prepare cluster states on the bcc lattice of Ref.~\cite{Raussendorf2006} (Section~\ref{sec:prep_3d}). 

The standard procedure for preparing cluster states is to initialise each qubit in the $\ket{+}$ state and apply controlled-$Z$ gates according to Eq.~\eqref{cluster_def}. Since a controlled-$Z$ gate between qubits $a$ and $b$ is required for every $(a,b) \in E$, this approach involves $|E|$ distinct gates. All of these gates must be carefully calibrated and implemented, making the experimental realisation of this protocol daunting. 

In contrast, our protocols
bypass the need to calibrate and implement a large number of physically distinct operations, allowing for simple experimental realisations, as explained in Section~\ref{sec:experiment}. In our algorithm, there is a single ancilla $\Q$ that interacts with the data qubits (which correspond to the vertices $V$ of $G$) one by one. Physically, $\Q$ is an actively controlled qubit, while the data qubits are identical degrees of freedom (\textit{e.g.}, phonons or photons generated from a single source, ions, or neutral atoms) that passively interact with $\Q$. The data qubits do not ever need to interact with each other. In this setting, one can simply tune a constant number of interactions between the controllable qubit and the physical system representing the data qubits to calibrate \emph{every} gate. 

\subsection{Algorithm for arbitrary graphs} \label{sec:general prep}

In this subsection, we provide an algorithm, Algorithm~\ref{alg1}, for preparing cluster states $\ket{\psi_G}$ on arbitrary graphs $G$. The correctness proof for this algorithm is given in Appendix~\ref{appendix:correctness_measurement_based}.

First, we define some notation. Here and throughout the paper, $H_a$ denotes the Hadamard gate acting on qubit $a$, and $P_a$ the Pauli operator $P \in \{X,Y,Z\}$ on qubit $a$. We write $X_{a,b}$ to represent the controlled-$X$ gate with control qubit $a$ and target qubit $b$, and $Z_{a,b}$ the controlled-$Z$ gate between qubits $a$ and $b$, with \[ \text{$Z_{\Q,i} \equiv I$ \enspace for any $i \not\in V$}. \] 
We also use the convention that the operators $A_j$ in the product $\prod_{j=1}^k A_j$ are ordered as $A_kA_{k-1}\dots A_1$, and that an empty product of operators acts as the identity. 

The main idea behind Algorithm~\ref{alg1} is to generate progressively larger cluster states related to subgraphs of $G = (V,E)$ by adding in one qubit at a time. Specifically, let $n \coloneqq |V|$ be the number of data qubits and fix an ordering of the qubits by labelling them from $1$ to $n$. For a given ordering, the qubit labelled $1$ is added first, followed by the qubit labelled $2$, and so on.

\begin{algorithm}[H] \caption{prepare the cluster state $\ket{\psi_G}$ given a graph $G = (V, E)$ with $V = [n]$} \label{alg1}
\begin{algorithmic}[1]
    \State initialise $\Q$ in the state $\ket{+}$ \label{alg1: initialise Q}
    \For{$j=1$ to $n$} \State initialise qubit $j$ in the state $\ket{0}$ \label{alg1: initialise j}
    \State apply $H_\Q X_{\Q,j}\prod\limits_{\substack{i< j-1 \\(i,j) \in E}}Z_{\Q,i}$ \label{alg1: Bj}
    
    \textit{\hspace{1em} // the $Z_{\Q,i}$ gates may be applied in any order}
    \If{$(j,j+1) \not\in E$ or $j=n$} \label{alg1: if1} 
    \State measure $\Q$ in the $Z$-basis
    \label{alg1: measure}
    \If{the outcome is $\ket{1}$} \State apply $Z_j$ \label{alg1: Zj}
    \EndIf
    \State re-initialise $\Q$ in $\ket{+}$ \hfill \textit{\hspace{1em}// not necessary for $j = n$} \label{alg1: reset}
    \EndIf
    \EndFor
\end{algorithmic}
\end{algorithm}

To explain the algorithm, it will be convenient to define graphs $G[k]'$ as follows. For each $k \in [n] \coloneqq \{1,\dots, n\}$, let $E[k]$ denote the set of edges in the subgraph of $G$ induced by the vertex subset $[k]$, \textit{i.e.},
\begin{equation} \label{E[k]} E[k] \coloneqq \{(i,j) \in E: i,j \in [k]\}. \end{equation}
Then, let
\begin{equation} \label{G[k]'}
    G[k]' \coloneqq \left([k] \cup \{\Q\}, E[k] \cup \{(Q,k)\}\right)
\end{equation}
be the graph with vertex set $[k] \cup \{\Q\}$ and edge set $E[k] \cup \{(\Q,k)\}$. See Fig.~\ref{fig:G[k]'}(a) for an example. Note that
$G[k]'$ differs from the subgraph of $G$ induced by $[k]$ only in that it contains an extra vertex $\Q$ and an extra edge $(\Q, k)$. 
\begin{figure*}
    \centering
    \includegraphics[width=1.7\columnwidth]{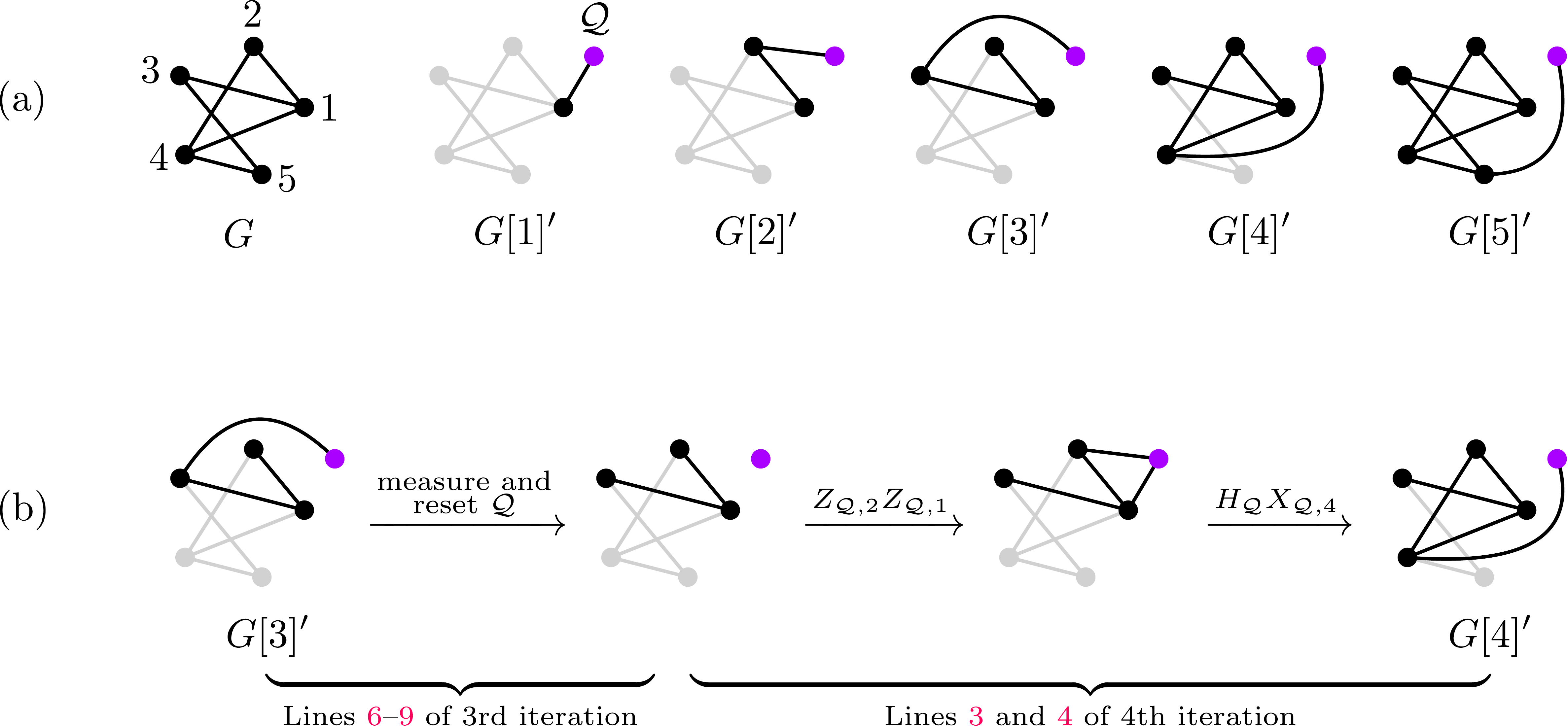}
    \caption{(a) An example of a graph $G = (V,E)$ and an ordering of its $n=5$ vertices, along with the corresponding graphs $G[k]'$ [cf.~Eq.~\eqref{G[k]'}] for $k \in [5]$. (The grey vertices and edges in $G[k]'$ for $k \in [4]$ are not part of the graph.) For each $k \in [n]$, the state of $\Q$ and the first $k$ data qubits at the end of Line~\ref{alg1: Bj} in the $k$th \textbf{for} loop iteration in Algorithm~\ref{alg1} is the cluster state $\ket{\psi_{G[k]'}}$. (b) Since $(3, 4) \not\in E$ in this example, the \textbf{if} condition of Line~\ref{alg1: if1} is satisfied in the $j=3$ iteration, and Lines~\ref{alg1: measure}--\ref{alg1: reset} are executed. Then, in the next iteration ($j = 4$), Lines~\ref{alg1: initialise j} and~\ref{alg1: Bj} change the state to $\ket{\psi_{G[4]'}}$. Note that $H_\Q X_{\Q,i}$ has the effect of swapping the state of $\Q$ onto qubit $i$ and adding an edge between $\Q$ and $i$; see Eq.~\eqref{HQ identity} in Appendix~\ref{appendix:correctness_measurement_free}.}
    \label{fig:G[k]'}
\end{figure*}

The cluster state $\ket{\psi_{G[k]'}}$ corresponding to $G[k]'$ is defined via Eq.~\eqref{cluster_def}. In Appendix~\ref{appendix:correctness_measurement_based}, we prove that for each $k \in [n]$, after Line~\ref{alg1: Bj} in the $k$th iteration of the \textbf{for} loop has been executed, the state of $\Q$ and the first $k$ data qubits is $\ket{\psi_{G[k]'}}$. Thus, Algorithm~\ref{alg1} prepares $\ket{\psi_G}$ by introducing a new data qubit in each iteration,
sequentially generating $\ket{\psi_{G[1]'}}, \ket{\psi_{G[2]'}}, \dots, \ket{\psi_{G[n]'}}$. The main steps are illustrated schematically in Fig.~\ref{fig:G[k]'}(b).

Once we have the state $\ket{\psi_{G[n]'}}$ (at the end of Line~\ref{alg1: Bj} in the last iteration), the desired state $\ket{\psi_G}$ can be easily obtained. Since the only difference between the two states is that $\ket{\psi_{G[n]'}}$ has an extra edge between $\Q$ and $n$, \textit{i.e.}, $\ket{\psi_{G[n]'}} = Z_{\Q,n}\ket{+}_\Q\ket{\psi_G}$, we can either apply $Z_{\Q,n}$ or measure $\Q$ in the $Z$-basis (and apply $Z_n$ if the outcome is $\ket{1}$).

As shown in Appendix~\ref{appendix:correctness_measurement_based}, the purpose of applying $Z_j$ in Line~\ref{alg1: Zj} is to ``fix" the cluster state in the case where $\Q$ is measured in Line~\ref{alg1: measure} of the $j$th iteration and the outcome is $\ket{1}$. Observe that all of the necessary $Z_j$ corrections could be deferred to the end of the procedure, instead of being implemented immediately. Alternatively, the $Z_j$ need not be applied at all if we keep track of all of the measurement outcomes and the modified cluster state stabilisers in the subsequent computation.

Note that different orderings of the qubits (\textit{i.e.}, different assignments of the labels $1$ through $n$ to the vertices in $V$) give rise to different circuits via Algorithm~\ref{alg1}, but every such circuit correctly produces the same state $\ket{\psi_G}$. One may choose an ordering that is more conducive to experimental realisation of the algorithm. 
Furthermore, in the case where $G$ contains a Hamiltonian path, Algorithm~\ref{alg1} does not require any intermediate measurements of $\Q$. By ordering the qubits such that $(i,i+1) \in E$ for all $i \in [n-1]$, Lines~\ref{alg1: measure}--\ref{alg1: reset} are skipped in every iteration of the main loop, which simplifies the procedure.

\subsection{3D cluster states}
\label{sec:prep_3d}

In this subsection, we describe two protocols, Protocols~\ref{protA} and~\ref{protB}, for preparing cluster states on the bcc lattice $\Gbcc$ of Ref.~\cite{Raussendorf2006} [cf.~Figs.~\ref{fig:bcc} and~\ref{fig:bcc_labels}]. Protocol~\ref{protA} involves first using Algorithm~\ref{alg1} to prepare the cluster state on a cubic lattice, then measuring out certain qubits to obtain $\ket{\psi_\Gbcc}$. Protocol~\ref{protB} applies Algorithm~\ref{alg1} to $\Gbcc$ directly. We propose experimental implementations of both protocols in Section~\ref{sec:experiment}.

\begin{figure}[h]
    \centering
    \includegraphics{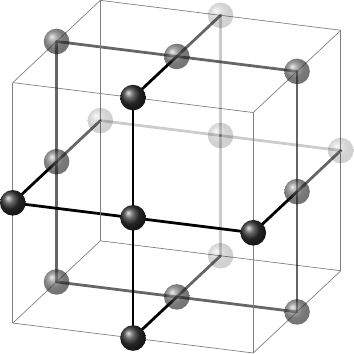}
    \caption{Elementary cell of the bcc lattice $G_{\mathrm{bcc}} = (V_{\mathrm{bcc}}, E_{\mathrm{bcc}})$.}
    \label{fig:bcc}
\end{figure}

These protocols have different strengths and weaknesses. Unlike Protocol~\ref{protB}, Protocol~\ref{protA} requires no intermediate measurements of the controllable qubit $\Q$, and is therefore expected to be simpler to implement. However, as we show in Section~\ref{sec:threshold_result}, the error threshold of Protocol~\ref{protA} is lower than that of Protocol~\ref{protB}.

\subsubsection{Protocol A} \label{sec:protA}
Protocol~\ref{protA} consists of two main steps. First, we use Algorithm~\ref{alg1} to prepare the cluster state $\ket{\psi_{\Gc}}$ on a certain cubic lattice $\Gc$, defined below, that contains $\Gbcc$ as a subgraph (Line~\ref{protA: 1}). Second, we obtain $\ket{\psi_\Gbcc}$ from $\ket{\psi_\Gc}$ by removing the qubits that are not in $\Gbcc$ via single-qubit $Z$-measurements (Lines~\ref{protA: for}--\ref{protA: correct}). 

The cubic lattice we consider is the graph $G_{\mathrm{c}} = (V_{\mathrm{c}},E_{\mathrm{c}})$ with vertex set $V_{\mathrm{c}} = [n]$ and edge set $E_{\mathrm{c}}$, defined for $L, M \in \mathbb{N}$ by
\begin{equation} \label{cubic edges}
\begin{aligned}
    E_{\mathrm{c}} &\coloneqq \{(i,i+1):i \in [n-1]\} \\
    &\quad \cup \{(i, i + L) : i \in [n-L] \} \\
    &\quad \cup \{(i,i+LM): i \in [n-LM]\}.
\end{aligned}
\end{equation}
If $n = LMN$ for some $N \in \mathbb{N}$, then $G_{\mathrm{c}}$ is a $L \times M\times N$ cubic lattice with shifted periodic boundary conditions; $\Gc$ differs from a standard cubic lattice with open boundary conditions only in that $\Gc$ has various additional edges between vertices on the boundary.

Note from Eq.~\eqref{cubic edges} that for every $i \in [n-1]$, $(i, i+1)$ is an edge in $\Gc$. Consequently, when we apply Algorithm~\ref{alg1} to $\Gc$, the \textbf{if} condition of Line~\ref{alg1: if1} is never satisfied and Lines~\ref{alg1: measure}--\ref{alg1: reset} are not executed, except in the very last iteration ($j = n$) of the \textbf{for} loop. Thus, Algorithm~\ref{alg1} reduces to a unitary circuit that prepares $\ket{\psi_{\Gc[n]'}}$, together with a single measurement of $\Q$ at the end to change $\ket{\psi_{\Gc[n]'}}$ to $\ket{\psi_{\Gc}}$. This circuit is shown in Fig.~\ref{fig:circ_example}.

\begin{figure*}
\includegraphics[width=0.9\textwidth]{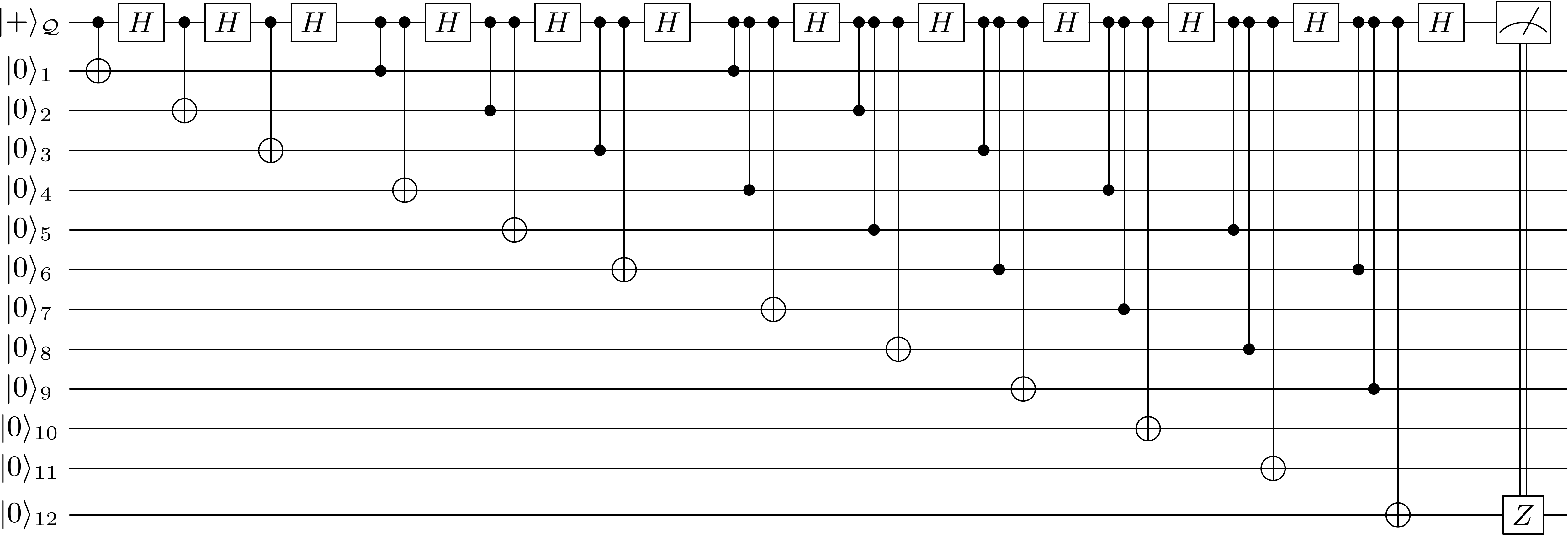}
\caption{\label{fig:circ_example}Algorithm~\ref{alg1} applied to the cubic lattice $\Gc$ [cf.~Eq.~\eqref{cubic edges}] for $L=3$, $M=2$, and $n=2LM$. This circuit prepares $\ket{\psi_\Gc}$ on the data qubits, and is the first step of Protocol~\ref{protA}. The last operation could be replaced by $Z_{\Q,n}$.}
\end{figure*}

Since the bcc lattice $\Gbcc$ is a subgraph of $\Gc$, we can then measure the qubits of $\ket{\psi_{\Gc}}$ that are not in $\Gbcc$ in the $Z$-basis to remove them. We also need to measure all of the qubits on the boundary in $\Gc$ in the $Z$-basis, in order to get rid of the shifted periodic boundary conditions.\footnote{Therefore, to prepare the cluster state on a $L \times M \times N$ bcc lattice using Protocol~\ref{protA}, we would generate the cluster state on a $(L + 2) \times (M + 2) \times (N+ 2)$ cubic lattice in Line~\ref{protA: 1}.} After applying the appropriate Pauli corrections based on the outcomes of these measurements, we obtain the desired cluster state $\ket{\psi_\Gbcc}$. 

\begin{protocol}[H] \caption{prepare the cluster state $\ket{\psi_{\Gbcc}}$ on the bcc lattice $\Gbcc = (V_{\mathrm{bcc}}, E_{\mathrm{bcc}})$ of Ref.~\cite{Raussendorf2006}} \label{protA}
\begin{algorithmic}[1]
\State apply Algorithm~\ref{alg1} to $\Gc = (V_{\mathrm{c}}, E_{\mathrm{c}})$ [cf.~Eq.~\eqref{cubic edges}] \label{protA: 1}
\For{$i \in V_{\mathrm{c}} \setminus V_{\mathrm{bcc}}$} \label{protA: for}
\State measure qubit $i$ in the $Z$-basis  \label{prota: measure} 
\If{the outcome is $\ket{1}$}
\State apply $\prod\limits_{j:(i,j) \in E_{\mathrm{c}}}Z_j$ \label{protA: correct} \label{protA: last}
\EndIf
\EndFor
\end{algorithmic}
\end{protocol}

\subsubsection{Protocol B} \label{sec:protB}
Protocol~\ref{protB} prepares $\ket{\psi_\Gbcc}$ by directly applying Algorithm~\ref{alg1} to $\Gbcc$.

\begin{protocol}[H] \caption{prepare the cluster state $\ket{\psi_{\Gbcc}}$ on the bcc lattice $\Gbcc = (V_{\mathrm{bcc}}, E_{\mathrm{bcc}})$ of Ref.~\cite{Raussendorf2006}} \label{protB}
\begin{algorithmic}[1]
\State apply Algorithm~\ref{alg1} to $\Gbcc$
\end{algorithmic}
\end{protocol}

For notational convenience in Section~\ref{sec:error}, we adopt the following convention for the bcc lattice. We label the qubits of an $L \times M \times N$ bcc lattice as we would an $L\times M\times N$ cubic lattice, omitting the numbers corresponding to the cubic lattice sites that are ``missing"---see Fig.~\ref{fig:bcc_labels} for an example illustrating this convention. This is a slight departure from the notation in Algorithm~\ref{alg1} (which assumes that the qubits are numbered from $1$ through $n$), but the instructions of Algorithm~\ref{alg1} can be adapted straightforwardly. Fig.~\ref{fig: protB circuit} shows part of the resulting circuit for the lattice in Fig.~\ref{fig:bcc_labels}. 

Using our labelling convention, the nearest neighbours of a qubit $i$ are simply $\{i \pm 1$, $i \pm L$, and/or $i \pm LM\} \cap V_{\mathrm{bcc}}$. Each of the qubits, except those on the boundary, has four nearest neighbours, all of which lie in the same plane. Thus, we divide the qubits into three groups, $V^{xy}$, $V^{yz}$, and 
$V^{zx}$, where qubit $i$ is in $V^{xy}$ (resp.\ $V^{yz}, V^{zx}$) if the nearest neighbours of $i$ are in the $xy$ (resp.\ $yz$, $zx$) plane. Letting $N_{\mathrm{bcc}}(i)$ denote the set of nearest neighbours of $i$ in $\Gbcc$, we have [cf.~Fig.~\ref{fig:bcc_labels}]
\begin{equation} \label{Nbcc}
N_{\mathrm{bcc}}(i) \subseteq \begin{dcases}
\{i \pm 1, i \pm L\} \quad &\text{$i \in V^{xy}$} \\
\{i \pm L, i\pm LM\} \quad &\text{$i \in V^{yz}$} \\
\{i \pm LM, i\pm 1\} \quad &\text{$i \in V^{zx}$}.
\end{dcases}
\end{equation}
For qubits $i$ that are in the bulk of the lattice, the above holds with equality.


\begin{figure}[H]
\begin{center}
\includegraphics[width=0.8\linewidth]{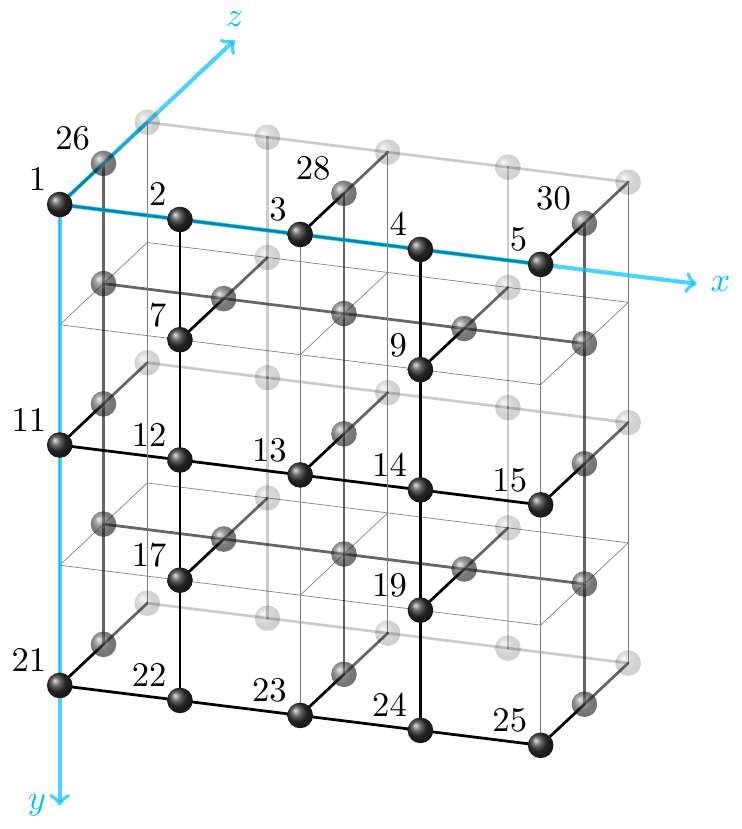}
\end{center}
\caption{An example of a bcc lattice $\Gbcc$, with $L = M = 5$, and an ordering of its vertices. As discussed in the main text, certain labels are skipped, so that the neighbours of vertex $i$ are given by $\{i \pm 1, i \pm L, i \pm LM\} \cap V_{\mathrm{bcc}}$. Labels after $30$ have been omitted. \label{fig:bcc_labels}}
\end{figure}

\begin{figure*}
\includegraphics[width=0.9\textwidth]{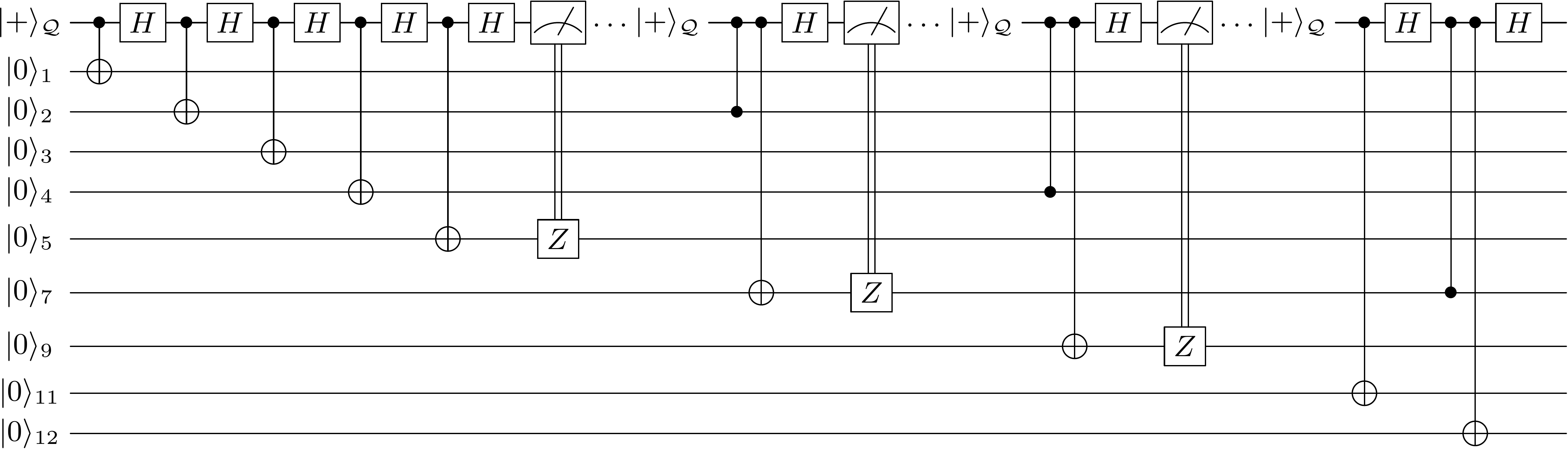}
\caption{\label{fig: protB circuit}Part of Protocol~\ref{protB} for the bcc lattice in Fig.~\ref{fig:bcc_labels}. Only the first $10$ iterations of the \textbf{for} loop in Algorithm~\ref{alg1} are shown. This abridged circuit prepares $\ket{\psi_{\Gbcc[12]'}}$ [cf.~Eq.~\eqref{G[k]'}]. Note that all Pauli $Z$ corrections conditioned on the outcomes of the measurements of $\Q$ can be deferred to the end of the circuit.} 
\end{figure*}

\section{Error analysis} \label{sec:error}

The protocols described in Section~\ref{sec:cluster} are useful only insofar as they are fault-tolerant. The operations used in the protocols will generally be noisy, resulting in the preparation of imperfect cluster states. Since the ancilla qubit $\Q$ interacts with every data qubit in Algorithm~\ref{alg1}, single-qubit errors occurring during the procedure may propagate through the subsequent operations to highly nonlocal errors. We show, however, that the effect of these errors on the target cluster state is always equivalent to that of geometrically local errors. This allows us to demonstrate that for both Protocols~\ref{protA} and~\ref{protB}, there is a threshold for the circuit error rate below which the logical error rate rapidly decays with the system size.

To make our reasoning precise, let $g_1, \dots, g_D$ denote the sequence of Clifford gates in Algorithm~\ref{alg1},\footnote{Lines~\ref{alg1: measure}--\ref{alg1: reset} of Algorithm~\ref{alg1} have the same combined effect as that of a deterministic Clifford gate, and can be treated as such for this discussion. A Pauli error occurring between Lines~\ref{alg1: measure} and \ref{alg1: reset} is equivalent to a Pauli error occurring after these steps.} and for $j,k \in [D]$, let $C_{j}^{k} \coloneqq \prod_{i = j}^{k} g_i$. Then, if a Pauli error $P_a$ occurs on some qubit $a$ between the gates $g_{\ell-1}$ and $g_\ell$, the erroneous circuit implements $C_\ell^D P_a C_1^{\ell-1}$. The prepared state is
\begin{equation} \label{propagateQ}
    C_\ell^D P_a C_1^{\ell - 1}\ket{\phi_{\textrm{initial}}} = Q\ket{\phi_{\textrm{final}}},
\end{equation}
where $\ket{\phi_{\textrm{initial}}}$ is the input state, $\ket{\phi_{\textrm{final}}} = C_1^D\ket{\phi_{\textrm{initial}}}$ denotes the state prepared by the error-free circuit, and $Q \coloneqq C_\ell^D P_a(C_\ell^D)^\dagger$. In other words, the circuit-level error $P_a$ propagates to an error $Q$, which may be highly nonlocal in general. In fact, for certain choices of $P_a$ and $\ell$, the weight of $Q$ scales with the total number of qubits.

However, Eq.~\eqref{propagateQ} holds for arbitrary $\ket{\phi_{\textrm{initial}}}$, with $Q$ independent of the initial state. The fact that errors propagate nonlocally for \textit{generic} input states is not necessarily an issue---the purpose of Algorithm~\ref{alg1} is not to perform some computation on arbitrary inputs, but rather, to prepare a fixed resource state. Therefore, the only relevant analysis is that for the particular input state $\ket{\phi_\textrm{initial}} \coloneqq \ket{+}_\Q \bigotimes_{i=1}^n\ket{0}_i$ to Algorithm~\ref{alg1}, which leads to the particular output state $\ket{\phi_{\textrm{final}}} = \ket{+}_\Q \ket{\psi_G}$. Clearly, $Q\ket{\phi_{\textrm{final}}} = QS\ket{\phi_{\textrm{final}}}$ for any stabiliser $S$ of $\ket{\phi_\textrm{final}}$. Therefore, even if $Q$ is a high-weight operator, it may have the same effect on $\ket{\phi_{\textrm{final}}}$ as a low-weight operator. 

It will hence be useful to define the notion of \textit{effective errors}. We say that a circuit-level Pauli error $P_a$ occurring at depth $\ell$ results in an effective error $E$ if
\begin{equation} \label{effective error} C_\ell^D P_a C_1^{\ell-1}\ket{\phi_{\textrm{initial}}} = E\ket{\phi_{\textrm{final}}}. \end{equation}
This definition generalises straightforwardly to arbitrary circuit-level errors. Note that unlike Eq.~\eqref{propagateQ}, Eq.~\eqref{effective error} is \textit{not} a gate identity, as it may depend crucially on the input state $\ket{\phi_{\textrm{initial}}}$. Note also that $E$ is not unique.

If multiple Pauli errors occur in the circuit, their joint effect is multiplicative up to a sign. To see this, consider two Pauli errors $P_a$ and $P_b$ occurring at depths $\ell_1$ and $\ell_2$, respectively, with $\ell_1 \leq \ell_2$. Suppose that the circuit-level error $P_a$ (at depth $\ell_1$) results in an effective error $E_1$, in the sense of Eq.~\eqref{effective error}, and $P_b$ (at depth $\ell_2$) results in an effective error $E_2$. Then, the circuit containing both errors prepares
\begin{align}
    &C_{\ell_2}^D P_b C_{\ell_1}^{\ell_2 - 1}P_a C_{1}^{\ell_1 - 1}\ket{\phi_{\textrm{initial}}} \nonumber\\ &\qquad= \left[C_{\ell_2}^D P_b (C_{\ell_2}^D)^\dagger\right] C_{\ell_1}^D P_a C_1^{\ell_1 - 1}\ket{\phi_{\textrm{initial}}} \nonumber\\
    &\qquad= \left[C_{\ell_2}^D P_b (C_{\ell_2}^D)^\dagger\right] E_1 C_1^D\ket{\phi_{\textrm{initial}}}  \nonumber\\
    &\qquad= (-1)^s E_1 C_{\ell_2}^D P_b C_1^{\ell_2 -1}\ket{\phi_{\textrm{initial}}}  \nonumber\\
    &\qquad= (-1)^s E_1E_2 \ket{\phi_{\textrm{final}}},
\label{multiple errors}
\end{align}
where the second and fourth equalities use Eq.~\eqref{effective error}, and the phase $(-1)^s$ is either $+1$ or $-1$ depending on whether $E_1$ and $C_{\ell_2}^D P_b (C_{\ell_2}^D)^\dagger$ (which are both Pauli products) commute or anticommute. Thus, the two circuit-level errors collectively result in an effective error $E_1 E_2$, up to a sign. Analogous results hold for more than two errors.

It follows that in order to study a stochastic noise model involving Pauli errors, it suffices to analyse the effective errors resulting from single-qubit circuit-level errors. The effect of multi-qubit circuit-level errors can then be inferred from Eq.~\eqref{multiple errors}.

As we discuss in Section~\ref{sec:4A}, any single-qubit error occurring during Protocols~\ref{protA} or~\ref{protB} results in a local effective error on the final state. This is a special case of the more general result, proven in Appendix~\ref{appendix:errors1}, for Algorithm~\ref{alg1} applied to arbitrary graphs. In Section~\ref{sec:threshold_result}, we estimate the threshold circuit error rates for both protocols, obtaining $0.23\%$ for Protocol~\ref{protA} and $0.39\%$ for Protocol~\ref{protB}.

\subsection{Effective errors} \label{sec:4A}

In this subsection, we consider the effect of errors that occur during Protocols~\ref{protA} and~\ref{protB}, both of which prepare the cluster state $\ket{\psi_{\Gbcc}}$ on the bcc lattice $\Gbcc$. These protocols both apply Algorithm~\ref{alg1} (but to different graphs). In Appendix~\ref{appendix:errors1}, we prove that for any graph $G = (V,E)$, any single-qubit error occurring between the elementary operations of Algorithm~\ref{alg1} results in an effective error [\textit{cf.}~Eq.~\eqref{effective error}] that is geometrically local, in the sense that it is supported within $\{i\} \cup N(i)$ for some data qubit $i \in [n]$, where $N(i) \coloneqq \{j:(i,j) \in E\}$ denotes the nearest neighbours of $i$ in $G$. 

The proof uses the following key observations. 
\begin{enumerate}[1)]
\item First, it is clear from Figs.~\ref{fig:circ_example} and~\ref{fig: protB circuit} that any $Z_i$ error on a data qubit $i \in [n]$ either occurs before the $X_{\Q,i}$ gate and has no effect, as the initial state of $i$ is $\ket{0}$, or it occurs after the $X_{\Q,i}$, in which case it commutes with all subsequent operations and ends up as a $Z_i$ error on the final state. Thus, any single-qubit $Z$ error on a data qubit results in either no error or a $Z$ error on the same qubit. 
\item Second, the instantaneous state of the qubits at any point in the procedure is a cluster state,\footnote{up to a Hadamard $H_{\Q}$ on the ancilla $\Q$} as illustrated by Fig.~\ref{fig:G[k]'}(b). In the underlying graph of any of these intermediate cluster states, every edge between data qubits is also an edge in the graph $G$ of the target state $\ket{\psi_G}$, and the only edges involving the ancilla $\Q$ are between $\Q$ and $j \in S$ for a subset $S$ of $N(i)$ for some $i \in [n]$. It then follows from the stabiliser condition, Eq.~\eqref{stabilisers}, that any single-qubit $X$ error in the circuit has the same effect as a set of $Z$ errors confined to the neighbours of some data qubit (and possibly $\Q$). 
\end{enumerate}
Combining these two observations with Eq.~\eqref{multiple errors}, it can be shown that any single-qubit Pauli error leads to an effective error of the form $\prod_{j \in S'} Z_j$, where $S' \subset \{i\} \cup N(i)$ for some $i \in [n]$. We fill in the details in Appendix~\ref{appendix:errors1}. Here, we simply summarise the results that are relevant to the threshold calculations in the following subsection.

We start by considering the effective errors in Protocol~\ref{protA}. Recall that the first step (Line~\ref{protA: 1}) applies Algorithm~\ref{alg1} to the cubic lattice $\Gc$ defined by Eq.~\eqref{cubic edges}, yielding a circuit of the form of Fig.~\ref{fig:circ_example}.
Table~\ref{table:errors_cubic} lists all of the $X$ and $Z$ errors that may occur in this circuit and the effective errors they give rise to. Note that it suffices to consider the effect of single-qubit $X$ and $Z$ errors, as the effect of arbitrary errors can then be inferred by decomposing them in terms of Pauli operators and using Eq.~\eqref{multiple errors}. To clearly distinguish between the gates in the circuit, we use $A_j$ to denote the $j$th ``block" of gates [\textit{cf.}~Line~\ref{alg1: Bj} of Algorithm~\ref{alg1}],\footnote{Here and in Table~\ref{table:errors_cubic}, we have chosen to apply $Z_{\Q,j-LM}$ before $Z_{\Q,j-L}$. The order of these controlled-$Z$ gates could of course be changed, in which case the second and third entries in Table~\ref{table:errors_cubic} would be slightly different (see Table~\ref{table:errors1} in Appendix~\ref{appendix:errors1}).}
\begin{equation} \label{Aj}
A_j \coloneqq H_\Q X_{\Q,j}Z_{\Q,j-L}Z_{\Q,j-LM}.
\end{equation}
Spatially, circuit-level errors may occur on the ancilla $\Q$ or a data qubit $i \in [n]$, and temporally, they may be located between two blocks $A_j$ and $A_{j+1}$, before the first block $A_1$, after the last block $A_n$, or between two gates in the same block. Table~\ref{table:errors_cubic} covers all of these possibilities. 

\begin{table*}
  \begin{tabular}{ c | l | l }
    circuit-level error &location in circuit &effective error on final state ($\ket{\psi_\Gc}$) \\
    \hhline{=|=|=}
    \multirow{5}{*}{$X_\Q$} 
    &before $A_1$ &none \\ 
    &immediately after $Z_{\Q,k-LM}$ in $A_k$  & $Z_{k-LM}Z_{k-1}$  \\ 
    &immediately after $Z_{\Q,k-L}$ in $A_k$ &$Z_{k-LM}Z_{k-L}Z_{k-1}$ \\ 
    &immediately after $X_{\Q,k}$ in $A_k$ &$Z_{k+1}$ \\
    &immediately after $B_k$ (\textit{i.e.}, after $H_\Q$ in $A_k$)  & $Z_k$ \\ \hline
    $Z_\Q$ &before or within $A_k$ &$Z_k$ \\ \hline
    \multirow{3}{*}{$X_i$} 
    &before (the $Z_{Q,i}$ in) $A_{i+L}$ & $X_{i}Z_{i+L}Z_{i+LM}$ \\ 
    &after $A_{i+L}$ and before $A_{i+LM}$ &$X_iZ_{i+LM}$ \\
    &after $A_{i+LM}$ &$X_i$ \\
    \hline 
    \multirow{2}{*}{$Z_i$} 
    &{before $X_{\Q,i}$ (in $A_i$)} &none\\
    &{after $X_{\Q,i}$ (in $A_i$)} &$Z_i$\\
\end{tabular}\caption{\label{table:errors_cubic}A complete list of $X$ and $Z$ errors that could occur during Line~\ref{protA: 1} of Protocol~\ref{protA}, which prepares a cluster state on the cubic lattice $\Gc$, and their effect on the final state [\textit{cf.}~Eq.~\eqref{effective error}] up to a sign. $A_j$ is defined in Eq.~\eqref{Aj}. We use the convention that $Z_j \equiv I$ for any $j \notin [n]$. Additionally, any $Z_i$ error occurring during Lines~\ref{protA: for}--\ref{protA: last} results in a $Z_i$ error, while an $X_i$ error results in either $X_i$ or $\prod_{j \in S} Z_j$ for some $S \subseteq N_{\mathrm{c}}(i)$, depending on its precise location.} 
\end{table*}

By Eq.~\eqref{cubic edges}, the set $N_\mathrm{c}(i)$ of nearest neighbours of qubit $i$ in $\Gc$ is \begin{equation*} \label{Nc} N_\mathrm{c}(i) = \{i \pm 1, i \pm L, i \pm M\} \cap [n].\end{equation*} Hence, we can see from Table~\ref{table:errors_cubic} (and Eq.~\eqref{stabilisers}) that any single-qubit $X$ error results in an effective error of the form $\prod_{j \in S} Z_j$ up to a sign, where $S \subset N_\mathrm{c}(i)$ for some $i \in [n]$, while any $Z$ error results in either no effective error or a single-qubit $Z$ error. Moreover, $X$ and $Z$ errors occurring at the same spacetime location in the circuit result in effective ($Z$) errors supported within
$\{i\} \cup N_{\mathrm{c}}(i)$ \emph{for the same $i$}, which implies that \emph{any} single-qubit error occurring at that location leads to an effective error supported within $\{i\} \cup N_{\mathrm{c}}(i)$. This is easily verified using Table~\ref{table:errors_cubic}.
As an example, an $X$ error on $\Q$ between gates $Z_{\Q,k-LM}$ and $Z_{\Q,k-L}$ in $A_k$ results in an effective error $Z_{k-LM}Z_{k-1}$, while a $Z$ error at this location results in a $Z_k$ error, and $k, k-LM, k-1 \in \{k\} \cup N_{\mathrm{c}}(k)$.

Therefore, at the end of Line~\ref{protA: 1} of Protocol~\ref{protA: 1}, the effective error induced by any single-qubit error can be decomposed into $Z$ operators supported within some neighbourhood of $\Gc$. Note that the remaining steps, Lines~\ref{protA: for}--\ref{protA: last}, of Protocol~\ref{protA} do not propagate this effective error further, as $Z$ errors commute with $Z$ gates and do not affect $Z$-measurements. By the same argument, a single-qubit $Z$ error occurring during Lines~\ref{protA: for}--\ref{protA: last} does not propagate to other qubits. It is also clear that a single-qubit $X$ error on qubit $i$ occurring during these steps is equivalent to $Z$ errors on a subset of $N_\mathrm{c}(i)$. It follows that the effective error on $\ket{\psi_\Gbcc}$ resulting from any single-qubit error in Protocol~\ref{protA} is geometrically local with respect to $\Gc$, \textit{i.e.,} supported within the neighbourhood $\{i\} \cup N_\mathrm{c}(i)$ of some qubit $i \in [n]$. Such an error is also geometrically local with respect to $\Gbcc$ if $i \in V_{\mathrm{bcc}}$, while if $i \not\in V_{\mathrm{bcc}}$, it is still confined to an elementary cell of $\Gbcc$ [\textit{cf.}~Fig.~\ref{fig:bcc}].

An even stronger result holds for Protocol~\ref{protB}, which directly prepares $\ket{\psi_\Gbcc}$ using Algorithm~\ref{alg1}. Table~\ref{table:errors_bcc} lists the effective errors resulting from all possible single-qubit $X$ and $Z$ errors. In the table, $B_j$ denotes the block of gates applied in the \textbf{for} loop iteration of Algorithm~\ref{alg1} (for $G = \Gbcc$) corresponding to qubit $j$. Recalling the labelling convention for $\Gbcc$ described in Section~\ref{sec:protB},
\begin{equation} \label{Bj}
    B_j \coloneqq \begin{dcases}
    H_\Q X_{\Q,j} Z_{\Q,j-L} &\quad j \in V^{xy} \\
    H_\Q X_{\Q,j} Z_{\Q,j-L}Z_{\Q,j-LM} &\quad j \in V^{yz} \\
    H_\Q X_{\Q,j}Z_{\Q,j-LM} &\quad j \in V^{zx}
    \end{dcases}
\end{equation}
As shown in Fig.~\ref{fig: protB circuit}, $\Q$ is measured and reset between certain gate blocks, and
Table~\ref{table:errors_bcc} includes the effects of measurement and reset errors as well. It is clear from Table~\ref{table:errors_bcc} and Eq.~\eqref{Nbcc} that the effective error induced by any single-qubit Pauli error is equivalent to a product of $Z$ operators supported within $\{i\} \cup N_{\mathrm{bcc}}(i)$ for some $i \in V_{\mathrm{bcc}}$. Thus, single-qubit errors occurring at any spacetime location in Protocol~\ref{protB} result in effective errors on $\ket{\psi_{\Gbcc}}$ that are geometrically local with respect to $\Gbcc$. 

Since all vertices in $\Gc$ and $\Gbcc$ have constant degree, it follows (from Eq.~\eqref{multiple errors}) that any $m$-qubit circuit-level error results in an effective error of weight $cm$ for some constant $c$ independent of the system size. Standard arguments then imply that for both Protocols~\ref{protA} and~\ref{protB}, there is a finite threshold for the circuit error rate~\cite{Dennis2002,Bravyi2020}. We compute these thresholds in the following subsection.

\begin{table*}
\begin{center}
\begin{tabular}{c | l | c | c | c}
circuit-level error &location in circuit &\multicolumn{3}{c}{effective error on final state ($\ket{\psi_{\Gbcc}}$)} \\
\hhline{=|=|=|=|=}
\multirow{6}{*}{$X_\Q$} & &$k \in V^{xy}$ &$k \in V^{yz}$ &$k \in V^{zx}$ \\ \cline{3-5} 
&immediately after $Z_{\Q,k-LM}$ in $B_k$ &n/a &$Z_{k-LM}$ &$Z_{k-LM}Z_{k-1}$ \\
&immediately after $Z_{\Q,k-L}$ in $B_k$ &$Z_{k-L}Z_{k-1}$ &$Z_{k-LM}Z_{k-L}$ &n/a \\
&immediately after $X_{\Q,k}$ in $B_k$ &$Z_{k+1}$ &none & $Z_{k+1}$ \\
&immediately after $B_k$ (\textit{i.e.}, after $H_\Q$ in $B_k$) &$Z_k$ &$Z_k$ &$Z_k$ \\ \cline{2-5}
&before $B_1$ or after a re-initialisation of $\Q$ &\multicolumn{3}{c}{none} \\ \hline
\multirow{2}{*}{$Z_\Q$} &before or within $B_k$ &\multicolumn{3}{c}{$Z_k$} \\
&before a $Z$-measurement of $\Q$ &\multicolumn{3}{c}{none} \\ \hline
\multirow{4}{*}{$X_i$} & &$i \in V^{xy}$ &$i \in V^{yz}$ &$i \in V^{zx}$ \\ \cline{3-5}
&before $B_{i+L}$ &$X_i Z_{i+L}$ &$X_i Z_{i+L}Z_{i+LM}$ &$X_i Z_{i+LM}$ \\
&after $B_{i+L}$ and before $B_{i+LM}$ &$X_i$ &$X_i Z_{i+LM}$ &$X_i Z_{i+LM}$ \\
&after $B_{i+LM}$ &$X_i$ &$X_i$ &$X_i$ \\ \hline
\multirow{2}{*}{$Z_i$} &before $X_{\Q,i}$ (in $B_i$) &\multicolumn{3}{c}{none} \\
&after $X_{\Q,i}$ (in $B_i$) &\multicolumn{3}{c}{$Z_i$}
\end{tabular}
\end{center}
\caption{A complete lists of $X$ and $Z$ errors that could occur during Protocol~\ref{protB}, which prepares a cluster state on the bcc lattice $\Gbcc$, and their effect on the final state [\textit{cf.}~Eq.~\eqref{effective error}] up to a sign. The qubits are labelled according to the convention described in Section~\ref{sec:protB}, and $B_j$ is defined in Eq.~\eqref{Bj}. $Z_j \equiv I$ for any indices $j$ that are out of range. The results for $X_\Q$ errors occurring immediately after $X_{\Q,k}$ (row 3) hold for all qubits $k$ that are in the bulk of the bcc lattice. However, as can be seen from Fig.~\ref{fig:bcc_labels}, there are certain qubits $k \in V^{xy}, V^{zx}$ on the boundary for which $(k,k+1) \not\in E$. In these cases, $\Q$ is measured after $B_k$ (Line~\ref{alg1: measure} of Algorithm~\ref{alg1}) and there is no effective error. \label{table:errors_bcc}}
\end{table*} 

We make a side remark on the role of intermediate measurements. It is tempting to guess that these measurements are responsible for the locality of the effective errors, but that is emphatically not the case. In Protocol~\ref{protA}, no intermediate measurements are ever performed during the preparation of $\ket{\psi_\Gc}$, yet all of the effective errors are geometrically local with respect to $\Gc$ [cf.~Table~\ref{table:errors_cubic}]. It is surprising that there is a nontrivial \textit{extensive-depth} fault-tolerant protocol without intermediate measurements; the usual approach involves frequent intermediate measurements to extract syndrome information, so that one can catch the errors. In contrast, we only perform error correction at the very end, after an extensive-depth circuit has been executed. Finding necessary and sufficient conditions under which this is possible is an important open problem left for future work.

\subsection{Thresholds}
\label{sec:threshold_result}

Using the results of the previous subsection, we can calculate error thresholds for our protocols via Monte Carlo simulations. 
In order to compare Protocols~\ref{protA} and~\ref{protB} to the standard cluster state preparation circuit in Ref.~\cite{Raussendorf2006}, we consider the standard depolarising model (Error Model 1 below) and use the minimum-weight perfect matching (MWPM) decoder \cite{Raussendorf2006,edmonds_1965}. We also study the effect of qubit loss (Error Model 2) using the decoder of Ref.~\cite{Barrett2010}, which is also based on MWPM.

For various values of the circuit error rate $p$, loss error rate $p_{\textrm{loss}}$, and size $L$, we estimate the logical error rate $\overline{p}$ for generating an $L\times L\times L$ cluster state $\ket{\psi_{\Gbcc}}$ (storing one logical qubit). We average over at least $10^6$ independent instances and at least $10^4$ logical errors for each set of parameters. For each $p_{\textrm{loss}}$, we then estimate the threshold circuit error rate $p_{\textrm{th}}$ by fitting the data to a quadratic scaling ansatz
\begin{equation} \label{scaling} \overline{p} = a + b(p-p_{\textrm{th}})d^{1/\nu} + c(p-p_{\textrm{th}})^2d^{2/\nu}, \end{equation}
where $d = (L+1)/2$.

\subsubsection{Error Model 1}

Error Model 1 is the standard depolarising model. In this model, every single-qubit gate on a qubit $a$ is followed by a single-qubit depolarising channel
\begin{equation} \label{depolarising1} \mathcal{D}_a^{(p)}(\rho) = (1-p)\rho + \frac{p}{3}\sum_{P \in \{X,Y,Z\}}P_a\rho P_a \end{equation}
on $a$. In addition, every (re-)initialisation of $a$ is followed by $\mathcal{D}_a^{(p)}$, and every measurement of $a$ is preceded by $\mathcal{D}_a^{(p)}$. Here, measurements include not only those in Protocols~\ref{protA} and~\ref{protB}, but also the eventual $X$-measurements on data qubits that are required for extracting the syndrome.
Similarly, every two-qubit gate on qubits $a$ and $b$ is followed by a two-qubit depolarising channel 
\begin{equation} \label{depolarising2}  \mathcal{D}_{a,b}^{(p)}(\rho) = (1-p)\rho + \frac{p}{15}\sum_{\substack{P,P' \in \{I,X,Y,Z\}\\ (P,P') \neq (I,I)}} P_a P'_b \rho P_a P_b'. \end{equation} 
We refer to $p$ as the circuit error rate.

For Protocols~\ref{protA} and~\ref{protB}, we can simulate the effect of each of these depolarising errors on the final state using Tables~\ref{table:errors_cubic} and~\ref{table:errors_bcc}. Our results (along with the fits to Eq.~\eqref{scaling}) are plotted in Fig.~\ref{fig:thresholds}. The threshold circuit error rate $p_{\textrm{th}}$ is found to be $0.23\%$ for Protocol~\ref{protA} and $0.39\%$ for Protocol~\ref{protB}.

\begin{figure}
    \centering
    \subfloat[]{
    \includegraphics[clip, trim=0 1em 0 4em, width=0.95\columnwidth]{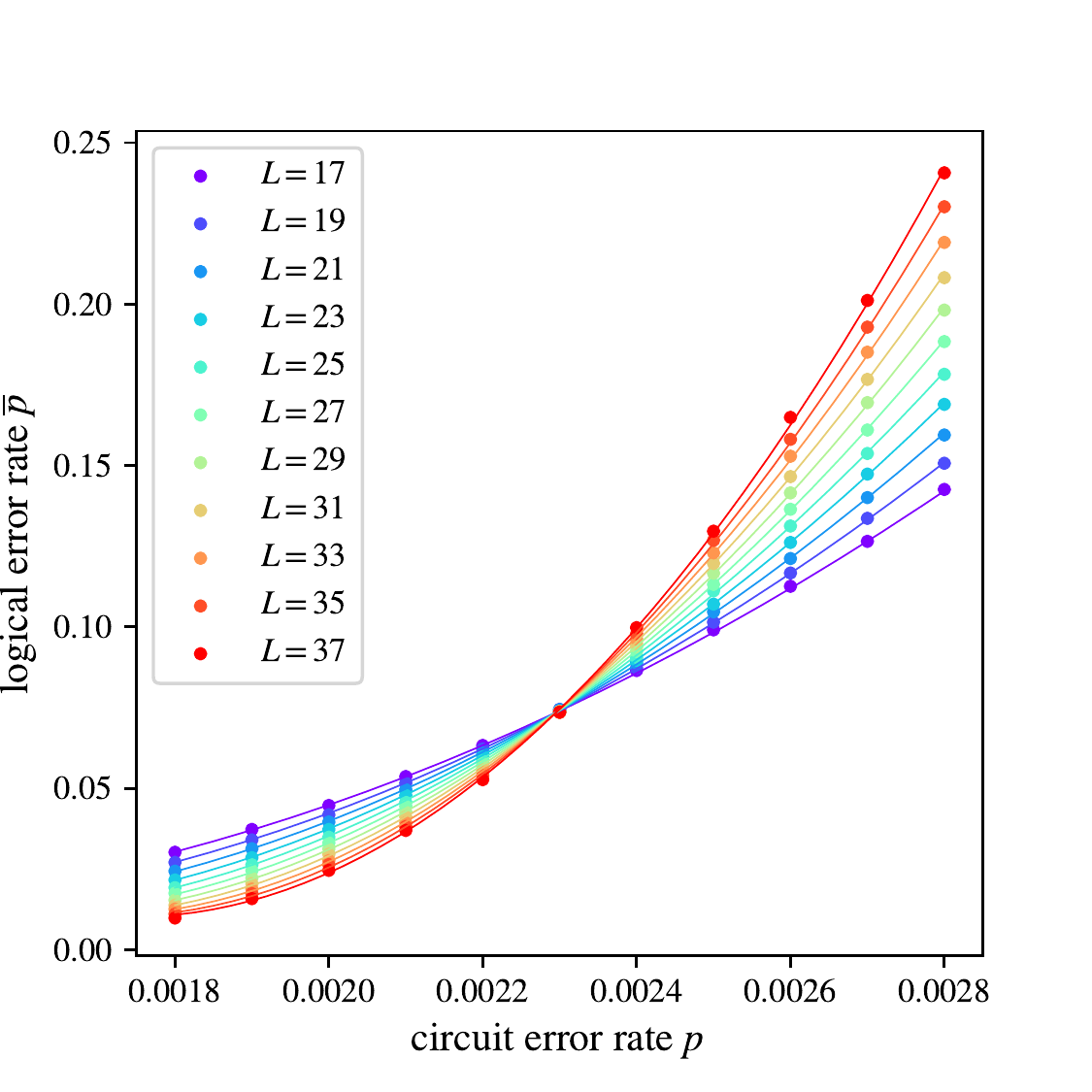}
    }
    
    \subfloat[]{
    \includegraphics[clip, trim=0 1em 0 4em, width=0.95\columnwidth]{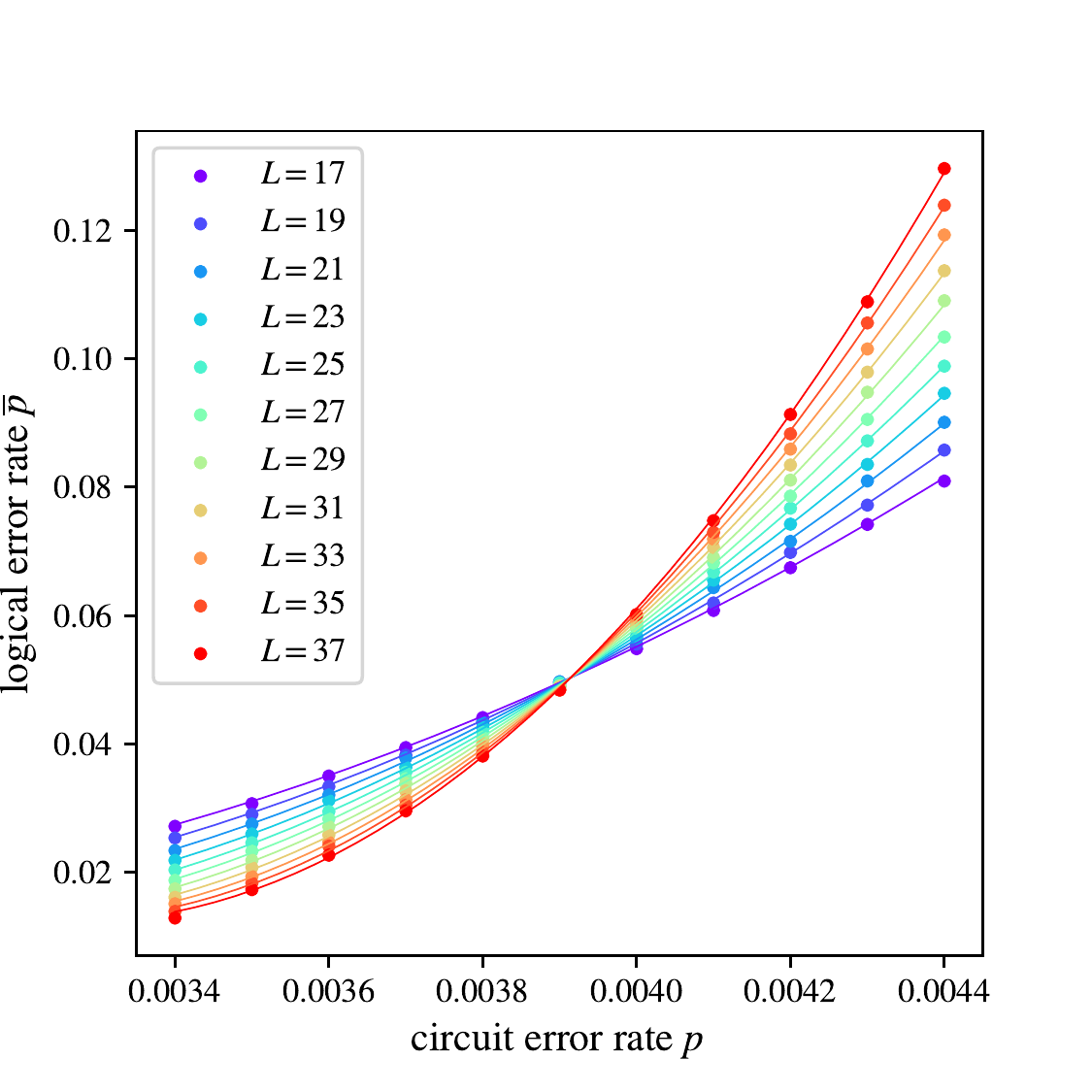}
    }
    \caption{Logical error rate $\overline{p}$ vs. circuit error rate $p$ for Error Model~1, for (a)~Protocol~\ref{protA} and (b)~Protocol~\ref{protB}. Solid curves are fits to Eq.~\eqref{scaling}, which give $p_{\textrm{th}} = 0.23\%$ for Protocol~\ref{protA} and $p_{\textrm{th}} = 0.39\%$ for Protocol~\ref{protB}. \label{fig:thresholds}}
\end{figure}

In comparison, the threshold for the scheme of Ref.~\cite{Raussendorf2006} under the same error model is $0.58\%$. Refs.~\cite{Raussendorf2007,Raussendorf2007a}  improve this to $0.75\%$ by exploiting sublattice correlations, and Ref.~\cite{Barrett2010} obtains $0.63\%$ by accounting for the degeneracies of different matchings. We do not exploit correlations nor account for degeneracy in our decoder.

We surmise that the threshold for Protocol~\ref{protA} is lower than that for Protocol~\ref{protB} due to the following reasons. First, Protocol~\ref{protA} uses substantially more qubits and operations than Protocol~\ref{protB} to prepare a cluster state of the same size, giving rise to more error locations under Error Model~1.
Second, all of the effective errors in Protocol~\ref{protB} are geometrically local with respect to the bcc lattice $\Gbcc$, whereas some of the effective errors in Protocol~\ref{protA} are only geometrically local with respect to the cubic lattice $\Gc$. For example, suppose that an $X$ error occurs on a qubit $i \in V_{\mathrm{c}} \setminus V_{\mathrm{bcc}}$ immediately before the $Z$-measurement of $i$ in Line~\ref{prota: measure} of Protocol~\ref{protA}. By Eq.~\eqref{stabilisers}, this results in a $Z$ error on all of the neighbours of $i$ in $G_{\mathrm{c}}$, which constitutes a weight-$6$ error on the face qubits of an elementary cell of $\Gbcc$ [\textit{cf.}~Fig.~\ref{fig:bcc}]. In contrast, all of the effective errors resulting from single-qubit errors in Protocol~\ref{protB} are geometrically local with respect to $\Gbcc$, and, moreover, have weight at most $4$ (when restricted to either the primal or dual lattice). 

\subsubsection{Error Model 2}

Next, we add detectable loss errors to the standard depolarising noise model. In Error Model~2, every elementary operation is followed or preceded by a depolarising channel with error rate $p$ in exactly the same way as in Error Model~1. In addition, each data qubit is lost by the end of the procedure with probability $p_{\textrm{loss}}$. Hence, Error Model~2 reduces to Error Model~1 for $p_{\textrm{loss}} = 0$. We assume that if a qubit $i$ is lost at some point, then any subsequent operation on $i$ is replaced by the identity operator followed by depolarising noise with rate $p$. The assumption that losses are detectable and that operations involving lost qubits implement the identity is consistent with the experimental setup considered in Section~\ref{sec:experiment}. 

Fig.~\ref{fig:loss_thresholds} shows our estimates for the threshold circuit error rate $p_{\textrm{th}}$ at various values of $p_{\textrm{loss}}$. The solid line in each plot is a quadratic fit to the data. Extrapolating to $p_{\textrm{th}} = 0$, these fits give rough estimates for the loss threshold of $5.7\%$ for Protocol~\ref{protA} and $21.6\%$ for Protocol~\ref{protB}. 
Both plots have the same structure as Figure~3 in Ref.~\cite{Barrett2010}, which provides thresholds for the circuit of Ref.~\cite{Raussendorf2006} under the same error model (but using a slightly better decoder, as discussed above). 

\begin{figure}
    \centering
    \subfloat[]{
    \includegraphics[clip, trim=1.5em 1em 0 1em, width=0.85\columnwidth]{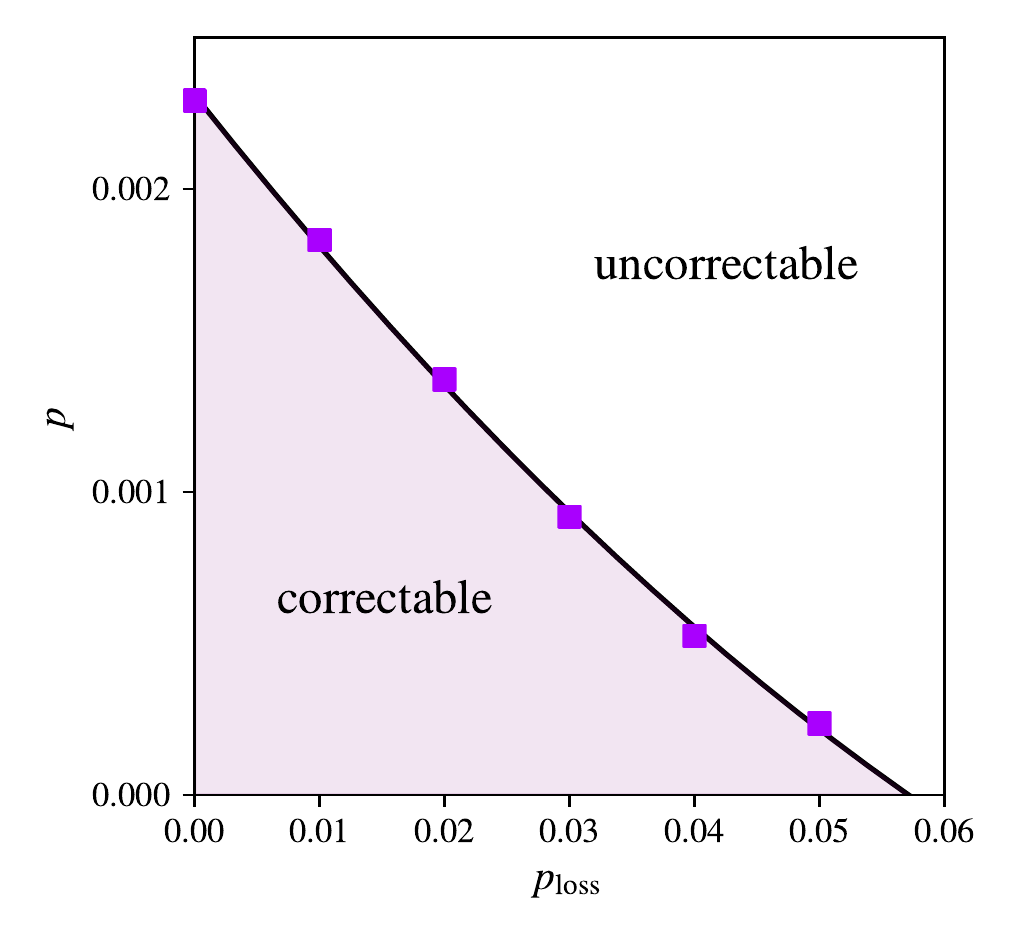}
    }

    \subfloat[]{
    \includegraphics[clip, trim=1.5em 1em 0 1em, width=0.85\columnwidth]{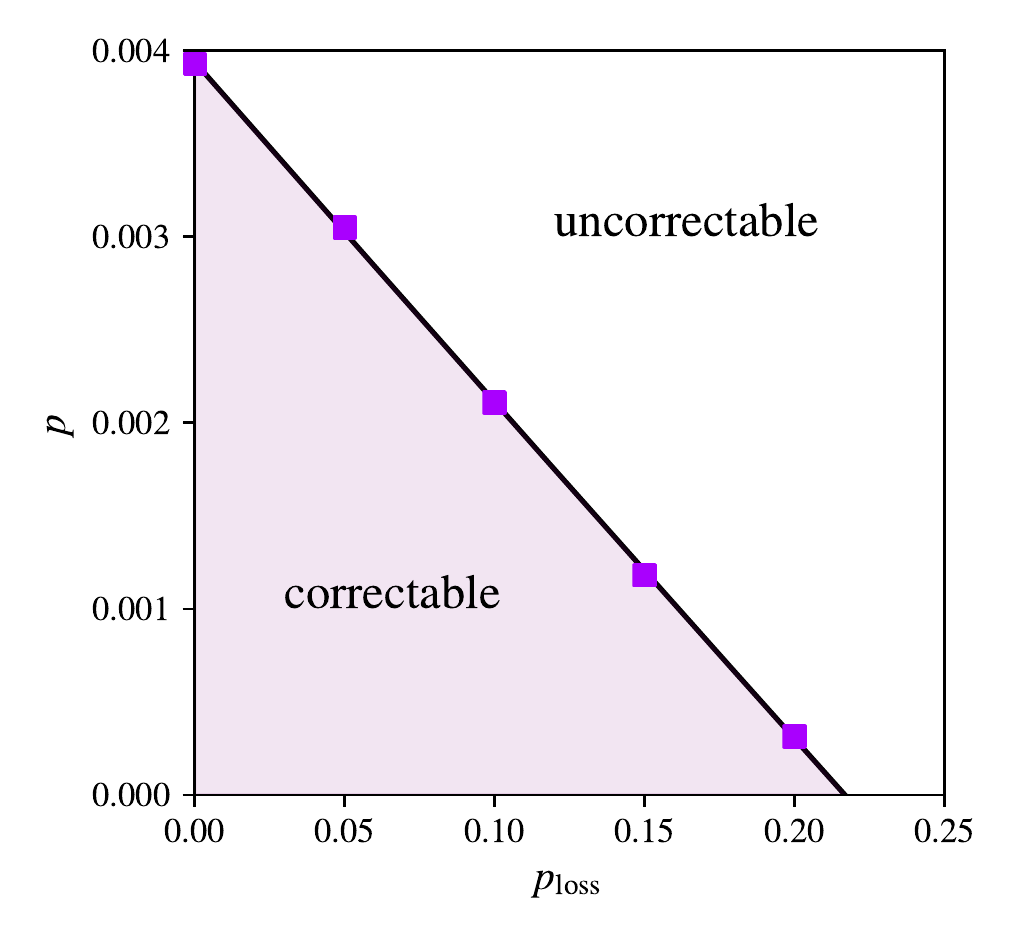}
    }
    \caption{Squares indicate thresholds $p_\textrm{th}$ for the circuit error rate $p$ at various loss rates $p_{\text{loss}}$ in Error Model~2, for (a)~Protocol~\ref{protA} and (b)~Protocol~\ref{protB}. Solid curves are quadratic fits to the data. The shaded region ($p < p_{\textrm{th}}$) represents the correctable region of parameter space, in which the logical error rate can be suppressed by increasing the system size. \label{fig:loss_thresholds}}
\end{figure}

The loss threshold for Protocol~\ref{protA} is significantly lower than that for Protocol~\ref{protB} due to the fact that in our simulations, losing a qubit in $V_{\mathrm{c}} \setminus V_{\mathrm{bcc}}$ amounts to losing (up to) six qubits in $V_{\mathrm{bcc}}$. This is because if a qubit $i \in V_{\mathrm{c}} \setminus V_{\mathrm{bcc}}$ is lost, we would not know whether the correction $\prod_{j \in N_\mathrm{c}(i)}Z_j$ should be applied in Line~\ref{protA: last} of Protocol~\ref{protA}. Instead of simulating this as a weight-$6$ $Z$ error (with probability $1/2$), we simply treat all of the qubits in $N_{\mathrm{c}}(i)$ as having been lost in the decoding algorithm. Thus, the total probability of ``losing" a qubit in $V_{\mathrm{bcc}}$ is greater than $p_{\textrm{loss}}$ for Protocol~\ref{protA}.

While Error Model~2 allows for a direct comparison to Ref.~\cite{Barrett2010}, and may be an informative model for settings where the total loss probability is constant, it does not properly capture the structure of the noise expected when storing the data qubits in delay lines. Informed by the description of possible experimental implementations in the next section, we revisit the effect of delay line noise in Section~\ref{sec:delay_line}.

\section{Experimental realisation} \label{sec:experiment}

In this section, we outline potential experimental realisations of the abstract protocols in Section~\ref{sec:prep_3d}, focusing on implementations in quantum nanophotonic and acoustic systems~\cite{RevModPhys.87.347,Schoelkopf2017quantum}. Recent advances in the deterministic generation of single photons and single phonons~\cite{PhysRevX.4.041010,PhysRevLett.116.020401} and their coherent interactions with a single quantum emitter~\cite{gustafsson2014propagating,volz2014nonlinear,sipahigil2016integrated,goban2014atom,tiecke2014nanophotonic,reiserer2014quantum,PhysRevLett.121.040501,maccabe2019phononic} make these systems promising platforms for quantum information processing. Indeed, single and double chains of one-dimensional cluster states have already been produced in experiments using photons emitted from quantum dots~\cite{schwartz2016deterministic}. These experiments implement modified versions of the circuit in Ref.~\cite{Lindner2009}, which is a specific instance of Algorithm~\ref{alg1}. The techniques detailed in Refs.~\cite{Pichler2017,Lindner2009} can be adapted to our more general protocols, to create cluster states on different graphs. In particular, the experimental setup considered in Ref.~\cite{Pichler2017} can be directly extended to implement the first step of Protocol~\ref{protA} [\textit{cf.}~Fig.~\ref{fig:circ_example}], providing a simple procedure for preparing a three-dimensional cluster state on a cubic lattice. Universal fault-tolerant quantum computation can then be performed by making adaptive single-qubit measurements on this state~\cite{Raussendorf2006}.

There are several key ingredients required for realising Protocols~\ref{protA} and~\ref{protB}. First, we must be able to implement the elementary operations in these protocols, namely, the single-qubit operations on $\Q$, the controlled-$X$ gates $X_{\Q,i}$ and controlled-$Z$ gates $Z_{\Q,i}$ between $\Q$ and data qubits, and single-qubit measurements of the data qubits. Second, we must be able to coordinate the interactions between $\Q$ and the data qubits such that these operations are applied in the correct order. Additionally, to be able to perform error correction when the loss rate is significant, the qubit states must be encoded in such a way that losses are detectable. 

These capabilities can be naturally achieved in a system consisting of a single quantum emitter (\textit{e.g.}, an atom, ion, transmon, or quantum dot) coupled to a photonic or phononic waveguide [\textit{cf.}~Fig.~\ref{fig:exp_gates}(a)].
In such a system, any stable internal states of the emitter can be used to encode qubit degrees of freedom for $\Q$, while any radiative states of the emitter that are coupled to the waveguide can be leveraged to realise certain gates between $\Q$ and a photon or phonon propagating in the waveguide. We show below that the set of available gates is sufficient for Algorithm~\ref{alg1}. Moreover, the routing of the photons or phonons required to realise the geometry of the target graphs of Protocols~\ref{protA} and~\ref{protB} is rather simple.

\subsection{Encoding schemes and elementary gates} \label{sec:encodings_physical}

In this subsection, we describe two encoding schemes and the gates that can be implemented in each. We will refer to these as the \emph{single-rail} and the \emph{dual-rail} encoding schemes, summarised in Figs.~\ref{fig:exp_gates} and~\ref{fig:exp_gates_dual_rail}, respectively. 

\subsubsection{Single-rail encoding} \label{sec:single_rail}

\begin{figure*}
    \centering
    \includegraphics[width=0.8\textwidth]{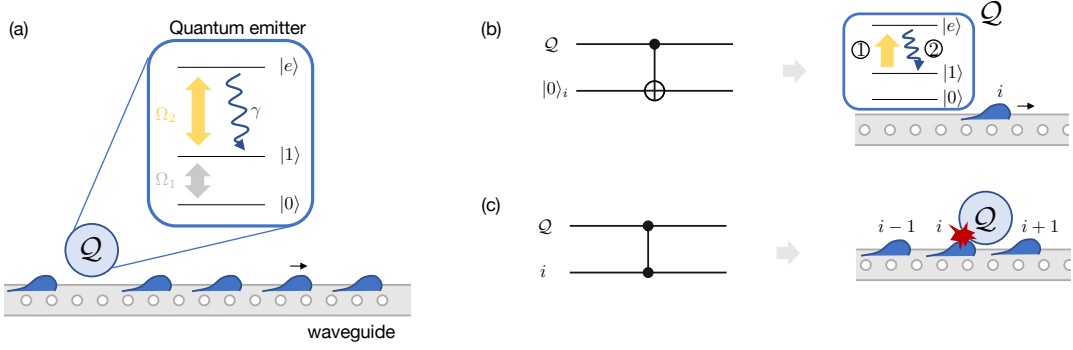}
    \caption{Schematic diagrams illustrating the implementation of required gates in quantum photonic or acoustic systems.
    (a)~The quantum emitter $\Q$ has three relevant quantum states. Two stable states, $\ket{0}$ and $\ket{1}$, form a qubit, while an extra unstable excited state $\ket{e}$ is used to generate propagating photons or phonons in a guided mode.
    (b)~The operation $\widetilde{X}_{\Q,i}$, which has the same effect as $X_{\Q,i}$ when acting on $\ket{0}_i$~[\textit{cf.}~Eq.~\eqref{cX_single}], can be implemented via selective emission of a photon/phonon to the guided mode.
    (c)~A controlled-$Z$ gate $Z_{\Q,i}$ can be implemented via the scattering of a photon/phonon against the emitter.}
    \label{fig:exp_gates}
\end{figure*}

In the single-rail scheme, the $\ket{0}$ (resp.\ $\ket{1}$) state of each data qubit $i$ is encoded by the absence (resp.\ presence) of a photon or phonon. Multiple data qubits can be encoded in a single waveguide by controlling the rate of the excitation pulses, in the so-called time multiplexing technique.
More specifically, if the pulse-to-pulse time separation $\tau$ is sufficiently long compared to the temporal extent of an emitted photon/phonon mode, different modes separated by $\tau$ have exponentially small overlap~\cite{Pichler2017}. We note that the temporal extent of each emitted mode, or equivalently the effective emission rate $\gamma'$, can be controlled using advanced techniques such as pulse shaping~\cite{Pichler2017,PhysRevX.4.041010}. 

For the emitter $\Q$, we consider a three-level system consisting of two stable states, $\ket{0}$ and $\ket{1}$, along with a radiative state, $\ket{e}$ [\textit{cf.}~Fig.~\ref{fig:exp_gates}(a)]. Arbitrary single-qubit gates on $\Q$ can be realised via resonant coherent excitations between $\ket{0}$ and $\ket{1}$. 

Then, for the two-qubit gates $X_{Q,i}$, note that each $X_{\Q,i}$ in Algorithm~\ref{alg1} is applied when data qubit $i$ is in its initial state $\ket{0}$. This means that instead of implementing a controlled-$X$ gate $X_{\Q,i}$ that correctly transforms arbitrary states of $i$, we can use an operation $\widetilde{X}_{\Q,i}$ that has the same effect as $X_{\Q,i}$ when $i$ is in the specific state $\ket{0}$ (\textit{i.e.}, $\widetilde{X}_{\Q,i}\ket{\varphi}_\Q\ket{0}_i = X_{\Q,i}\ket{\varphi}_\Q\ket{0}_i$ for any state $\ket{\varphi}$ of $\Q$, potentially entangled with the rest of the system). $\widetilde{X}_{\Q,i}$ can be realised by applying a rapid resonant excitation pulse $\ket{1}\rightarrow \ket{e}$, which is followed by the spontaneous emission of a photon/phonon into the waveguide [cf.~Fig.~\ref{fig:exp_gates}(b)]. This excitation-emission process deterministically generates a single photon/phonon in a particular temporal mode (controlled by the timing of the excitation pulse and the decay rate $\gamma$), conditioned on the state of $\Q$ being $\ket{1}$. Thus, the net effect is
\begin{equation} \label{cX_single}
\begin{aligned}
    &\ket{0}_{\Q}\ket{0}_i \ket{\phi_0}_\textrm{rest} + \ket{1}_{\Q}\ket{0}_i \ket{\phi_1}_\textrm{rest} \\ 
    &\mapsto
    \ket{0}_{\Q}\ket{0}_i \ket{\phi_0}_\textrm{rest} + \ket{1}_{\Q}\ket{1}_i \ket{\phi_1}_\textrm{rest},
\end{aligned}
\end{equation}
where $\ket{\phi_0}_\textrm{rest}$ and $\ket{\phi_1}_\textrm{rest}$ are (unnormalised) states of the rest of the system.

The controlled-$Z$ gate $Z_{\Q,i}$ can be naturally realised by scattering a propagating photon/phonon against the emitter $\Q$~\cite{volz2014nonlinear,sipahigil2016integrated,goban2014atom,tiecke2014nanophotonic,reiserer2014quantum} [cf.~Fig.~\ref{fig:exp_gates}(c)].
If $\Q$ is in the state $\ket{0}$, the propagating photon/phonon remains unaffected due to the absence of any resonant couplings. On the other hand, if $\Q$ is in the state $\ket{1}$, the propagating photon (phonon) is scattered by $\Q$ owing to the resonant transition $\ket{1} \leftrightarrow \ket{e}$, giving rise to a scattering phase $e^{i\theta}$.  By engineering $\gamma' \ll \gamma$, this scattering phase approaches $e^{i\theta} \approx -1$, and this process effectively applies $Z_{\Q,i}$.

\subsubsection{Dual-rail encoding} \label{sec:dual_rail}

\begin{figure*}
    \centering
    \includegraphics[width=0.8\textwidth]{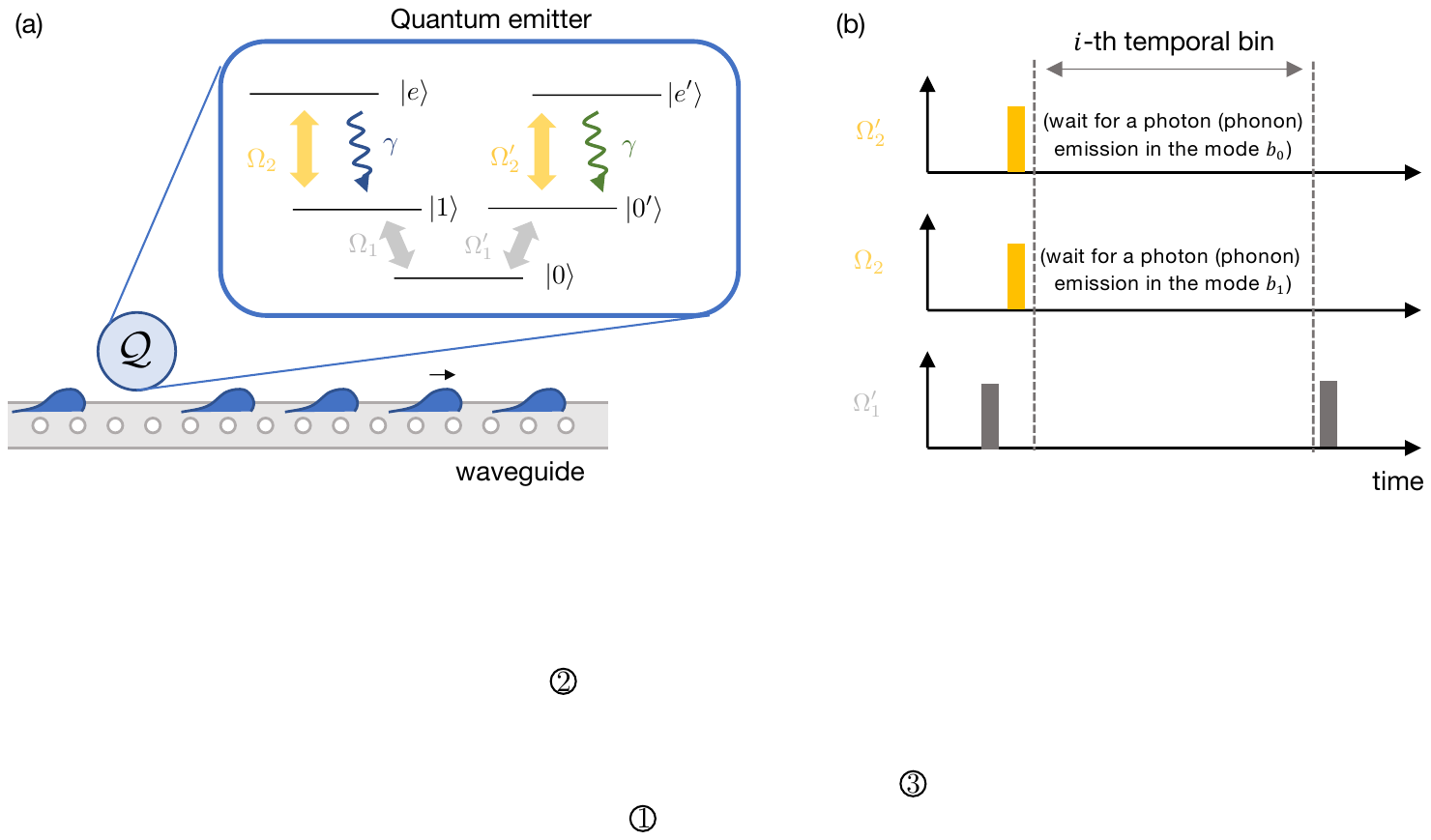}
    \caption{Schematic diagrams illustrating how to encode quantum information in two distinct modes of a single photon state (dual-rail schemes).
    (a) Two extra quantum states, $\ket{0'}$ and $\ket{e'}$, of the emitter allows to map the $\ket{0}$ into a distinct photonic (phononic) mode in the waveguide.
    (b) An example of the pulse sequence that effectively realises the $X_{\Q,i}$ gate acting on a quantum state with the $i$th data qubit in $\ket{0}$. [cf.~Eq.~\eqref{cX_dual}]
    All yellow and grey boxes represent resonant $\pi$-pulses. 
    }
    \label{fig:exp_gates_dual_rail}
\end{figure*}

In the dual-rail scheme, a qubit degree of freedom is encoded in two distinct internal modes of a single photon/phonon, such as different polarisations or frequencies. When photon/phonon loss is the dominant source of error, the dual-rail scheme can be advantageous since the detection of a single photon/photon heralds the absence of loss errors (assuming no false-positive detections). As shown in Section~\ref{sec:threshold_result}, the threshold for loss errors is significantly higher than that for depolarising noise, for both Protocols~\ref{protA} and~\ref{protB}.

The gate implementations proposed for the single-rail scheme can be readily extended to the dual-rail scheme. For example, we can use two additional internal quantum states $\ket{0'}$ and $\ket{e'}$ of the emitter $\Q$ [cf.~Fig.~\ref{fig:exp_gates_dual_rail}(a)]. Similar to the states $\ket{1}$ and $\ket{e}$, we assume that $\ket{0'}$ and $\ket{e'}$ are stable and radiative, respectively. In particular, $\ket{e'}$ rapidly decays into $\ket{0'}$ by emitting a photon/phonon into the waveguide. In general, the photons/phonons emitted from $\ket{e}$ and $\ket{e'}$ are distinguishable by their internal modes.
We denote these modes using two distinct annihilation operators, $b_{1}$ and $b_{0}$.

Then, the realisation of the ${X}_{\Q,i}$ gate (more precisely, the preparation of the state $X_{\Q,i}\ket{\varphi}_\Q\ket{0}_i$, for arbitrary $\ket{\varphi}$) in the dual-rail scheme can be achieved via a sequence of resonant $\pi$-pulses between the $\ket{0} \leftrightarrow \ket{0'}$, $\ket{1} \leftrightarrow \ket{e}$, and $\ket{0'} \leftrightarrow \ket{e'}$ transitions [cf.~Fig.~\ref{fig:exp_gates_dual_rail}(b)]. First, a rapid resonant excitation $\ket{0} \rightarrow \ket{0'}$ is applied, leading to the process
\begin{equation*}
\begin{aligned}
&\ket{0}_{\Q} \ket{\varnothing}_i \ket{\phi_0}_\textrm{rest} + \ket{1}_{\Q} \ket{\varnothing}_i \ket{\phi_1}_\textrm{rest}  \\
&\mapsto 
\ket{0'}_{\Q} \ket{\varnothing}_i \ket{\phi_0}_\textrm{rest} + \ket{1}_{\Q} \ket{\varnothing}_i \ket{\phi_1}_\textrm{rest} ,
\end{aligned}
\end{equation*}
where $\ket{\varnothing}_i$ is the vacuum initial state of the $i$th temporal bin in the waveguide and $\ket{\phi_{0}}_\textrm{rest}$ and $\ket{\phi_{1}}_\textrm{rest}$ are unnormalised states of the rest of the system. Second, resonant excitation pulses are applied to both the $\ket{1}\rightarrow \ket{e}$ and $\ket{0'} \rightarrow \ket{e'}$ transitions, which is followed by the emission of a photon/phonon at the $i$th bin in $b_1$ or $b_0$, depending on the internal state of the emitter.
The state of the system after this emission is
\begin{align*}
\ket{0'}_{\Q} \left( b^\dagger_{0} \ket{\varnothing}_i \right)  \ket{\phi_0}_\textrm{rest}
+ 
\ket{1}_{\Q} \left( b^\dagger_{1} \ket{\varnothing}_i\right)  \ket{\phi_1}_\textrm{rest}.
\end{align*}
Finally, another resonant $\pi$-pulse is used to move the population from $\ket{0'}$ to $\ket{0}$.
The net effect of these processes is the map
\begin{equation} \label{cX_dual}
\begin{aligned}
&\ket{0}_{\Q} \ket{\varnothing}_i \ket{\phi_0}_\textrm{rest} + \ket{1}_{\Q} \ket{\varnothing}_i \ket{\phi_1}_\textrm{rest} \\
&\mapsto 
\ket{0}_{\Q} \left( b^\dagger_{0} \ket{\varnothing}_i \right)  \ket{\phi_0}_\textrm{rest}
+ 
\ket{1}_{\Q} \left( b^\dagger_{1} \ket{\varnothing}_i\right)  \ket{\phi_1}_\textrm{rest}.
\end{aligned}
\end{equation}
Hence, by identifying $\ket{0}_i  \equiv b^\dagger_0 \ket{\varnothing}_i$ and $\ket{1}_i  \equiv b^\dagger_1 \ket{\varnothing}_i$, these operations achieve the desired effect.

The realisation of the controlled-$Z$ gate remains unmodified from Section~\ref{sec:single_rail}. That is, $Z_{\Q,i}$ can be implemented via a simple resonant photon/phonon scattering process, since a photon/phonon in the mode $b_0$ does not interact with the $\ket{0}$ nor $\ket{1}$ states of the emitter.

\subsection{Implementation details}

We now explain how to use the encoding schemes and elementary operations described in Section~\ref{sec:encodings_physical} to implement Protocols~\ref{protA} and~\ref{protB}. On top of being able to realise the required gates individually, we need to route the data qubits so that these gates are applied in the correct order. Moreover, for Protocol~\ref{protB}, we require the ability to perform intermediate measurements on the emitter $\Q$. We provide the details below.

For both protocols, we must control the ordering of the sequential interactions between $\Q$ and the photons/phonons representing the data qubits. This can be achieved by introducing time-delayed feedback. In the proposal of Ref.~\cite{Pichler2017}, a single delay line is used to generate a cluster state on a two-dimensional square lattice with shifted periodic boundary conditions. This procedure can be generalised to prepare the cluster state on the cubic lattice $\Gc$ defined in Section~\ref{sec:protA}, by introducing two delay lines of appropriate lengths to realise the circuit of Line~\ref{protA: 1} of Protocol~\ref{protA} [cf.~Fig.~\ref{fig:circ_example}].

Specifically, consider the setup illustrated in Fig.~\ref{fig:protocol_schematic}, which involves two routers and two delay lines.
The routers are configured such that a propagating photon/phonon travels through each delay line only once before being measured at the output port. By setting the delay times for Delay~$1$ and Delay~$2$ to $\tau_{D1} = (3L-1)\tau $ and $\tau_{D2} = (3L(M-1)-1)\tau$, respectively, where $\tau$ is the pulse-to-pulse time separation between distinct data qubit modes, one obtains a cluster state on an $L\times M \times N$ cubic lattice $\Gc$. The multiplicative factor of $3$ accounts for the fact that there are up to three interactions in every block [cf.~Eq.~\eqref{Aj}].
The delay times $\tau_{D1}$ and $\tau_{D2}$ can be effectively tuned using well-established techniques such as electromagnetically induced transparency~\cite{Fleischhauer2005EIT_review} for coherent atomic media or band-structure engineering for photonic/phononic crystals~\cite{baba2008photonic_crystal_review}.   
Fig.~\ref{fig:exp_dual_rail} shows the ``circuit diagram'' for $L = 3$, $M = N= 2$ implemented in a quantum photonic or acoustic system (compare to Fig.~\ref{fig:circ_example}), in the single-rail encoding scheme. The data qubits are initialised in vacuum $\ket{\varnothing} = \ket{0}$, and the ${X}_{\Q,i}$ gates in Fig.~\ref{fig:circ_example} are implemented via selective photon/phonon emissions, represented by $b^\dagger$. The diagram for the dual-rail scheme is analogous, with the controlled-$b^\dagger$ operations replaced by the process in Eq.~\eqref{cX_dual}. 
\begin{figure*}
    \centering
    \includegraphics[width=0.9\textwidth]{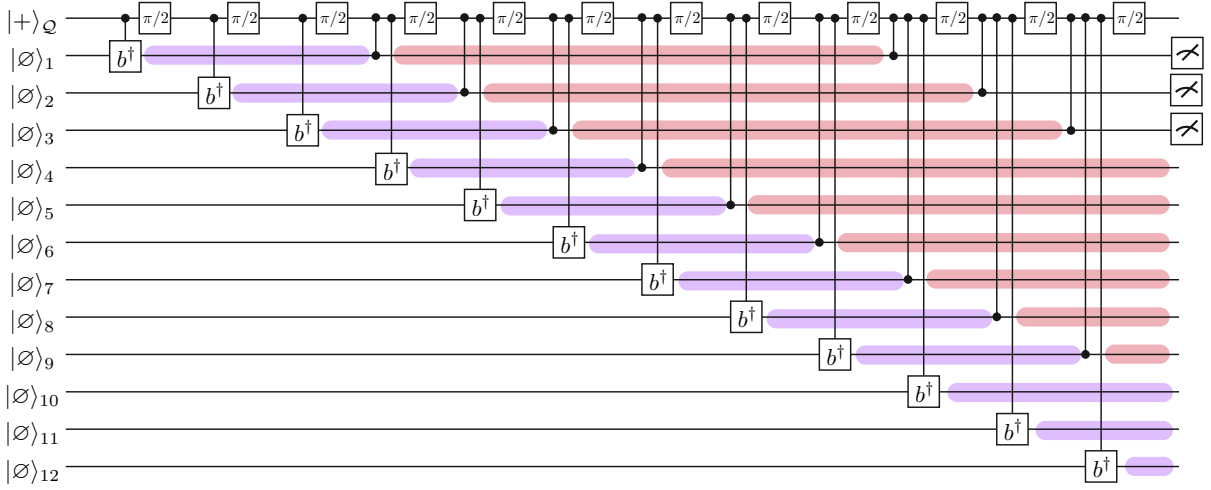}
    \caption{Explicit implementation of the circuit of Protocol~\ref{protA} for $L = M=3$
    in a quantum nanophotonic or acoustic system (compare to Fig.~\ref{fig:circ_example}).
    The quantum emitter $\Q$ sequentially creates and interacts with the data qubits, whose states in the single-rail scheme are encoded by the absence or presence of a photon/phonon in a guided, propagating mode.
    The generalisation of this procedure to the dual-rail scheme is straightforward.
    As detailed in Section~\ref{sec:encodings_physical}, the ${X}_{\Q,i}$ gates in Fig.~\ref{fig:circ_example} can be implemented via selective photon/phonon emission, while the $Z_{\Q,i}$ gates can be realised via resonant scattering.
    The colours correspond to those in Fig.~\ref{fig:protocol_schematic}, indicating which delay lines the photons/phonons are propagating through in the setup therein.}
    \label{fig:exp_dual_rail}
\end{figure*}

Protocol~\ref{protB} can be implemented in a similar way, with the following modifications. First, since the bcc lattice $\Gbcc$ is a subgraph of the cubic lattice $\Gc$, a subset of the cubic lattice sites should not host data qubits. Second, one needs to perform projective measurements on the emitter $\Q$, followed by re-initialisations of $\Q$ in a predetermined state, \textit{e.g.}, $\ket{+}$. The first task can be easily achieved by simply skipping the excitation pulses for the ${X}_{\Q,i}$ gates at the appropriate times. The data qubits at the corresponding locations then remain in the vacuum state, decoupled from the rest of the system throughout the procedure.

The measurements of $\Q$ can be implemented via quantum non-demolition measurements. A practical challenge is that the time duration $\tau_\textrm{meas}$ of the measurement and re-initialisation process can be substantial, constraining the minimum separation $\tau > \tau_\textrm{meas}/3$ between the temporal modes of consecutive data qubits. In turn, a longer $\tau$ implies that fewer data qubits can be stored in delay lines with non-negligible loss rates. It may therefore be more practical to use Protocol~\ref{protA} in some settings, to avoid intermediate measurements of $\Q$ altogether.\footnote{Further, for an alternative algorithm that prepares arbitrary cluster states without any measurements of $\Q$, see Appendix~\ref{appendix:alternative}.}

\section{Effect of delay line errors}
\label{sec:delay_line}

In the experimental proposals of Section~\ref{sec:experiment}, the amount of time a photon/phonon spends in the delay lines grows with the size of the target cluster state (more precisely, with the cross-sectional area of the underlying bcc lattice). As a result, the cumulative effect of delay line errors may be non-negligible. If the cluster state size becomes too large, the total error incurred in the delay lines overwhelms the improved error correction properties due to the increased code distance, leading to high logical error rates.
In this section, we study two phenomenological models that incorporate the effect of delay line errors, and determine how the optimal logical error rate depends on the delay line error rate in each model. These models address the increase in loss probability with delay line length as well as dephasing errors on the data qubits, two effects that were ignored in Section~\ref{sec:error}.

\subsection{Analysis} \label{sec:delay_lineA}

The dominant sources of error in delay lines (for photons and phonons) are dephasing and loss, which we consider in Error Models~3a and~3b, defined below. We parametrise these models using the delay line error \textit{per time step}. Here, the time it takes to execute the block of gates in Line~\ref{alg1: Bj} of Algorithm~\ref{alg1} constitutes one time step, and the time to measure and reset $\Q$ in Lines~\ref{alg1: measure} and~\ref{alg1: reset} constitutes another time step. We assume that the time steps are equal.\footnote{To elaborate, we assume that the operations are temporally arranged (by pulsing or measuring $\Q$ at appropriately spaced intervals) such that these time steps are all of the same length. This allows for the geometry of the bcc lattice to be naturally realised in Protocol~\ref{protB}, since the measurements of $\Q$ coincide with the sites in the bcc lattice that are ``missing" relative to the cubic lattice [cf.~Fig.~\ref{fig:bcc_labels}].} In Protocol~\ref{protB}, for instance, each time step consists either of a controlled-$X$ gate, a Hadamard gate, and up to two controlled-$Z$ gates, or of a measurement and re-initialisation of $\Q$. Thus, for preparing a cluster state on an $L\times M \times N$ lattice, there are $L$ time steps in Delay~1 and $L(M-1)$ time steps in Delay~2 [cf.~Figs.~\ref{fig:protocol_schematic} and~\ref{fig:exp_dual_rail}]. We will use $\eta_{Z}$ and $\eta_{\textrm{loss}}$ to represent the dephasing error and loss per time step, respectively.

Let us make a brief comment on how these delay line errors change the analysis in Section~\ref{sec:error}. Dephasing errors are generally equivalent to stochastic Pauli $Z$ errors, while the effect of losing a qubit during the procedure depends on the encoding scheme and gate implementations. In the experimental setup described in Section~\ref{sec:experiment}, the loss of a data qubit simply results in any subsequent gates involving that qubit not being applied. This is because these gates are realised via interactions between a photon/phonon wavepacket with the emitter. If the wavepacket is not present, the interaction does not occur. Moreover, in the dual-rail scheme of Section~\ref{sec:dual_rail}, losses are detectable. We can therefore use the decoder of Ref.~\cite{Barrett2010} as we did in Section~\ref{sec:threshold_result}. 

For circuit errors, we use the same depolarising noise model, Error Model~1, as in Section~\ref{sec:threshold_result}, but with one modification. We omit the depolarising channel that occurs after the initialisation of each data qubit.
This is motivated by the fact that in our experimental setup, each photon/phonon is created by the process that implements the controlled-$X$ gate [cf.~Eq.~\eqref{cX_dual}].
Strictly, this differs from Algorithm~\ref{alg1} (as it is formally stated), in which a data qubit $i$ is initialised first, before $X_{\Q,i}$ is applied to it in a separate step.
In order to be able to suppress error, the circuit error rate $p$ should be below the threshold for this modified error model. Since the threshold for Error Model~1 was estimated to be $p_{\textrm{th}} = 0.39\%$ (for Protocol~\ref{protB}), we will assume that $p = 10^{-3}$, which is a standard number used in the literature for studying the sub-threshold behavior of the surface code~\cite{Fowler2012}.

With the above considerations in mind, we define Error Models~3a and~3b as follows. We fix $p = 10^{-3}$. In both models, every single-qubit gate on a qubit $a$ is followed by a single-qubit depolarising channel $\mathcal{D}_a^{(p)}$ [cf.~Eq.~\eqref{depolarising1}]. Every measurement of a qubit $a$ is preceded by $\mathcal{D}_a^{(p)}$, and every (re-)initialisation of $\Q$ is followed by $\mathcal{D}_\Q^{(p)}$. Similarly, every two-qubit gate on qubits $a$ and $b$ is followed by a two-qubit depolarising channel $\mathcal{D}_{a,b}^{(p)}$ [cf.~Eq.~\eqref{depolarising2}]. In Error Model~3a, in addition to these circuit errors, a dephasing channel 
\[ \mathcal{Z}^{(\eta_Z)}(\rho) = (1-\eta_Z)\rho + \eta_Z Z\rho Z \]
is applied to every data qubit in each time step. In Error Model~3b, each data qubit is lost by the end of the procedure with probability $1 - \exp({-\eta_{\textrm{loss}}\ell})$, where $\ell$ is the total number of time steps the qubit spends in the delay lines. If a qubit $i$ is lost at some point, then any subsequent operation on $i$ is replaced by the identity operator followed by $\mathcal{D}_i^{(p)}$. In the following discussion, we will often refer to the delay line error rate as $\eta$, where $\eta$ is $\eta_Z$ for Error Model~3a and $\eta_\textrm{loss}$ for Error Model~3b.\footnote{In reality, one would expect both dephasing and loss errors to occur in an experiment. Analysing their effects separately greatly simplifies the numerics, however. In many circumstances, one form of noise will dominate, in which case the results for the corresponding error model should provide a good guide to the performance of the protocol.}

For each of these error models, we estimate the logical error rate $\overline{p}$ for generating a cluster state on an $L \times L \times L$ bcc lattice (storing one logical qubit) using Protocol~\ref{protB}, for various values of $L$ and $\eta$. As in Section~\ref{sec:threshold_result}, we infer the effect that each physical error has on the final state using Table~\ref{table:errors_bcc} and Eq.~\eqref{multiple errors}, and we use the generalised MWPM decoder of Ref.~\cite{Barrett2010} (without accounting for degeneracies) in our simulations. The results are shown in Fig.~\ref{fig:pertimestep1}. Each data point is an average of at least $10^6$ independent instances and at least $10^4$ logical errors.

\begin{figure}
    \centering
    \subfloat[]{
    \includegraphics[clip, trim=0 1em 0 4em, width=0.95\columnwidth]{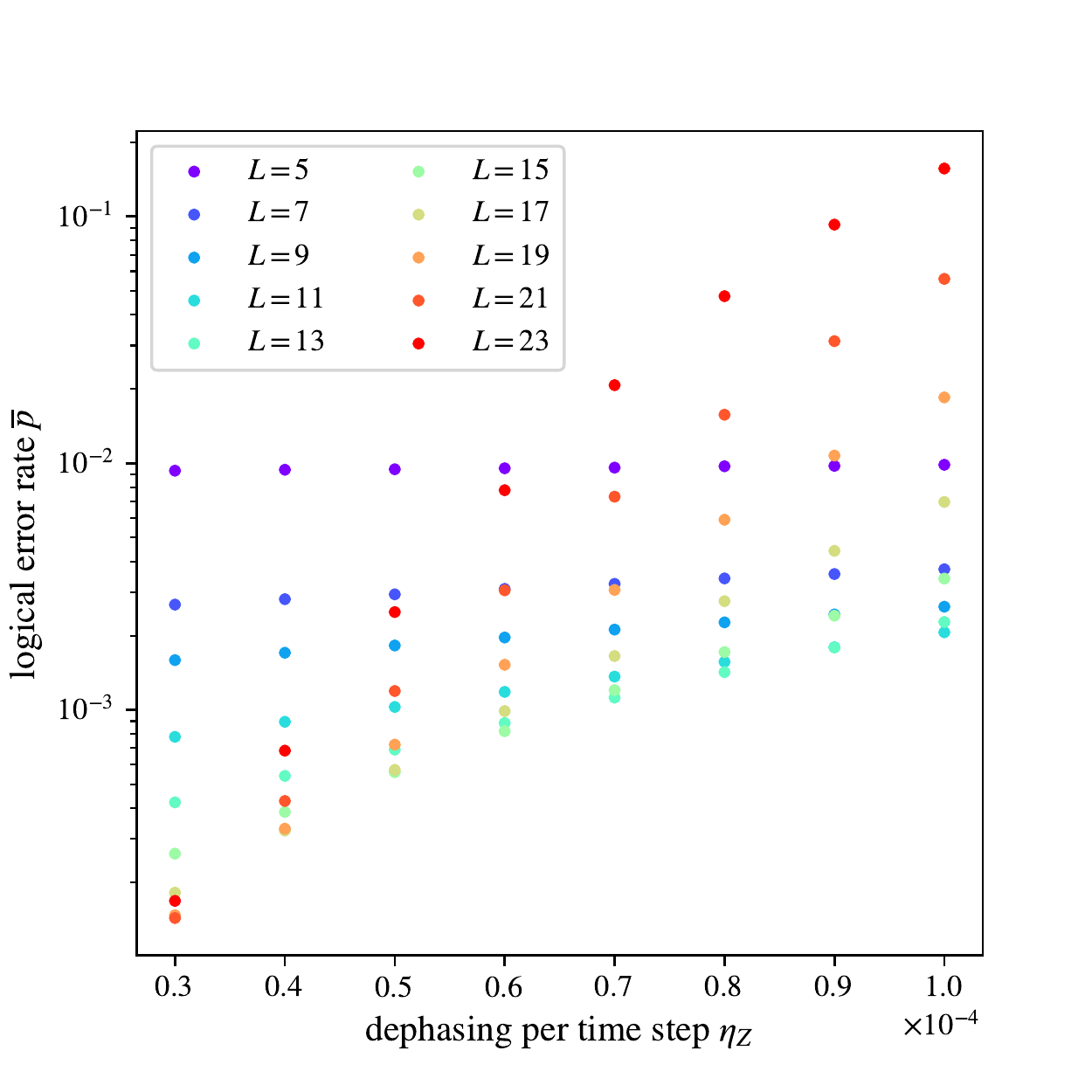}
    }
    
    \subfloat[]{
    \includegraphics[clip, trim=0 1em 0 4em, width=0.95\columnwidth]{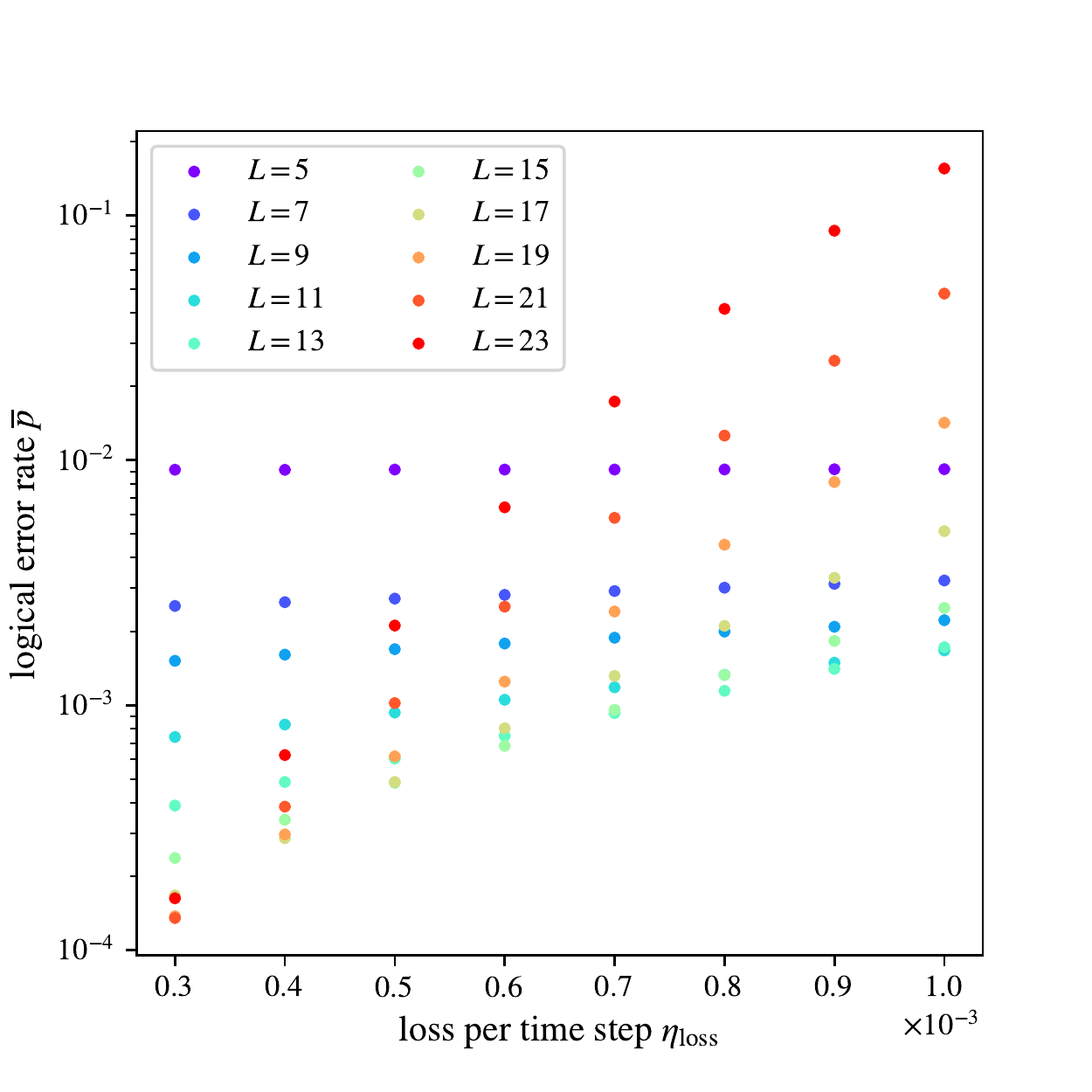}
    }
    \caption{(a) Logical error rate $\overline{p}$ vs. $\eta_{Z}$ for Error Model~3a (applied to Protocol~\ref{protB}). (b) Logical error rate $\overline{p}$ vs. $\eta_{\textrm{loss}}$ for Error Model~3b applied to Protocol~\ref{protB}. \label{fig:pertimestep1}}
\end{figure}

For an $L \times L \times L$ bcc lattice, the total number of delay line time steps is $L^2$. Hence, as we increase $L$ for a fixed delay line error rate $\eta$, there is a tradeoff between the better error suppression due to larger code distance, given by $d = (L+1)/2$, and the larger cumulative delay line error. Therefore, for each $\eta$, there is an optimal value $L_*$ of $L$ that minimises the logical error rate $\overline{p}$. We find the minimum logical error rate, which we denote by $\overline{p}_*$, by increasing $L$ until $\overline{p}$ starts to increase.

While we do not have an analytic expression for $L_*$, we can make an educated guess as to the scaling of $\overline{p}_*$ as a function of $\eta$. Since the circuit error rate $p = 10^{-3}$ in our models is well below threshold, there should be a threshold $p_{\textrm{delay},\textrm{th}}$ for the cumulative delay line error below which the logical error rate $\overline{p}$ decays exponentially with $L$.
For small $\eta$, the cumulative error is $\eta L^2$ to leading order, so $\overline{p}$ decays exponentially with $L$ for $\eta L^2 \lesssim p_{\textrm{delay},\textrm{th}}$. 
In particular, provided that 
$\eta L_*^2 \lesssim p_{\textrm{delay},\textrm{th}}$,
which can be achieved by 
$L_* = c(p_{\textrm{delay},\textrm{th}}/\eta)^{1/2}$ for some constant~$c$,
we expect $\overline{p}_*$ to roughly scale as
\begin{equation} \label{eq:logical_scaling_eta} \ln(1/\overline{p}_*) \approx c'\eta^{-1/2} + c'', \end{equation}
where $c'$ and $c''$ are constants. 

Numerically, we observe excellent agreement with Eq.~\eqref{eq:logical_scaling_eta}, as can be seen from Fig.~\ref{fig:optimal_vs_eta}. Fitting Eq.~\eqref{eq:logical_scaling_eta} to the data gives the following estimates for $c'$ and $c''$:
\begin{equation}
\begin{aligned}
&\text{Error Model~3a}: &\text{$c' \approx 0.032$, $c''\approx 2.93$} \\
&\text{Error Model~3b}: &\text{$c'\approx 0.096$, $c''\approx 3.37$}
\end{aligned}
\label{eq:extrapolation_constants}
\end{equation}

\begin{figure}
    \centering
    \subfloat[]{
    \includegraphics[clip, trim=0 1em 0 4em, width=0.95\columnwidth]{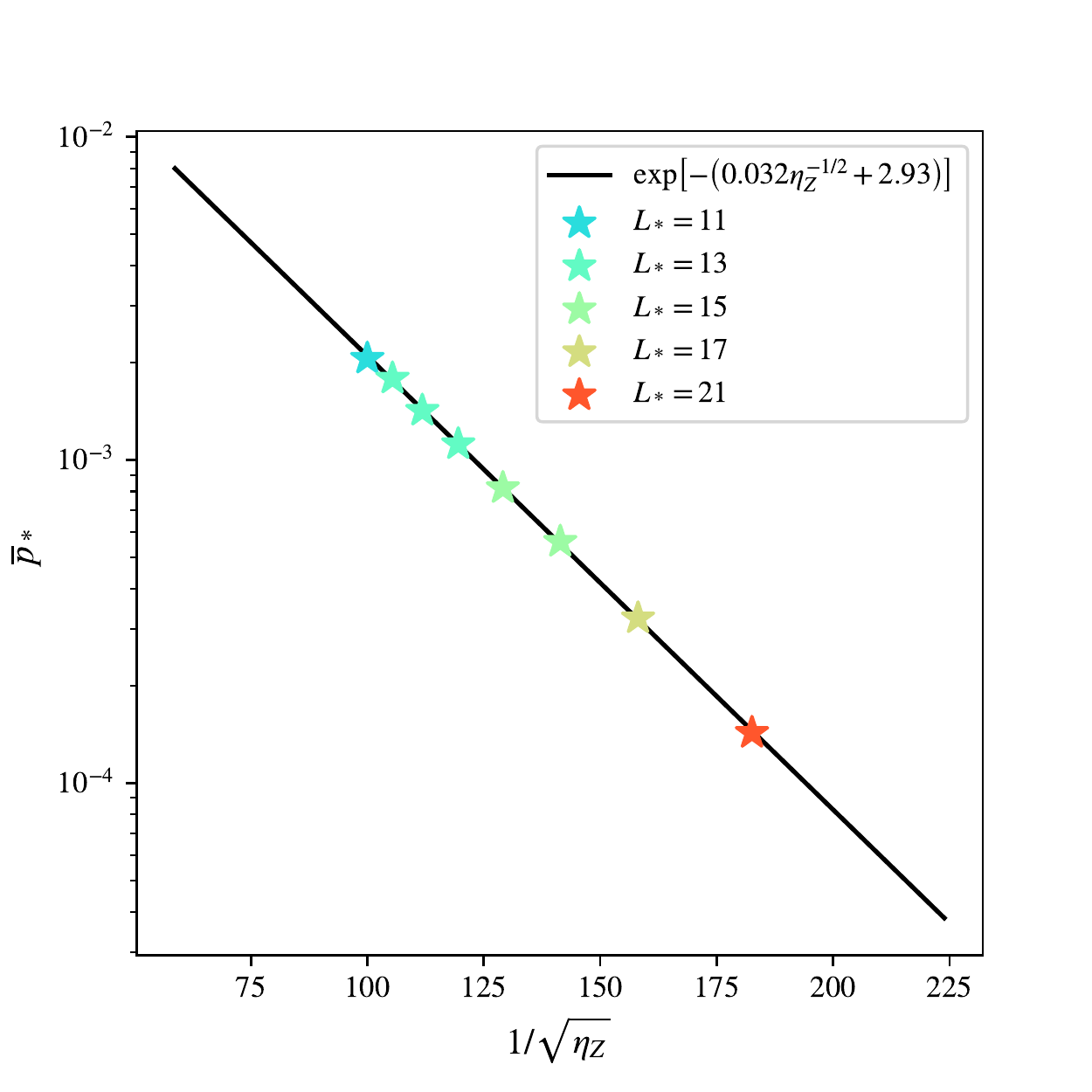}
    }
    
    \subfloat[]{
    \includegraphics[clip, trim=0 1em 0 4em, width=0.95\columnwidth]{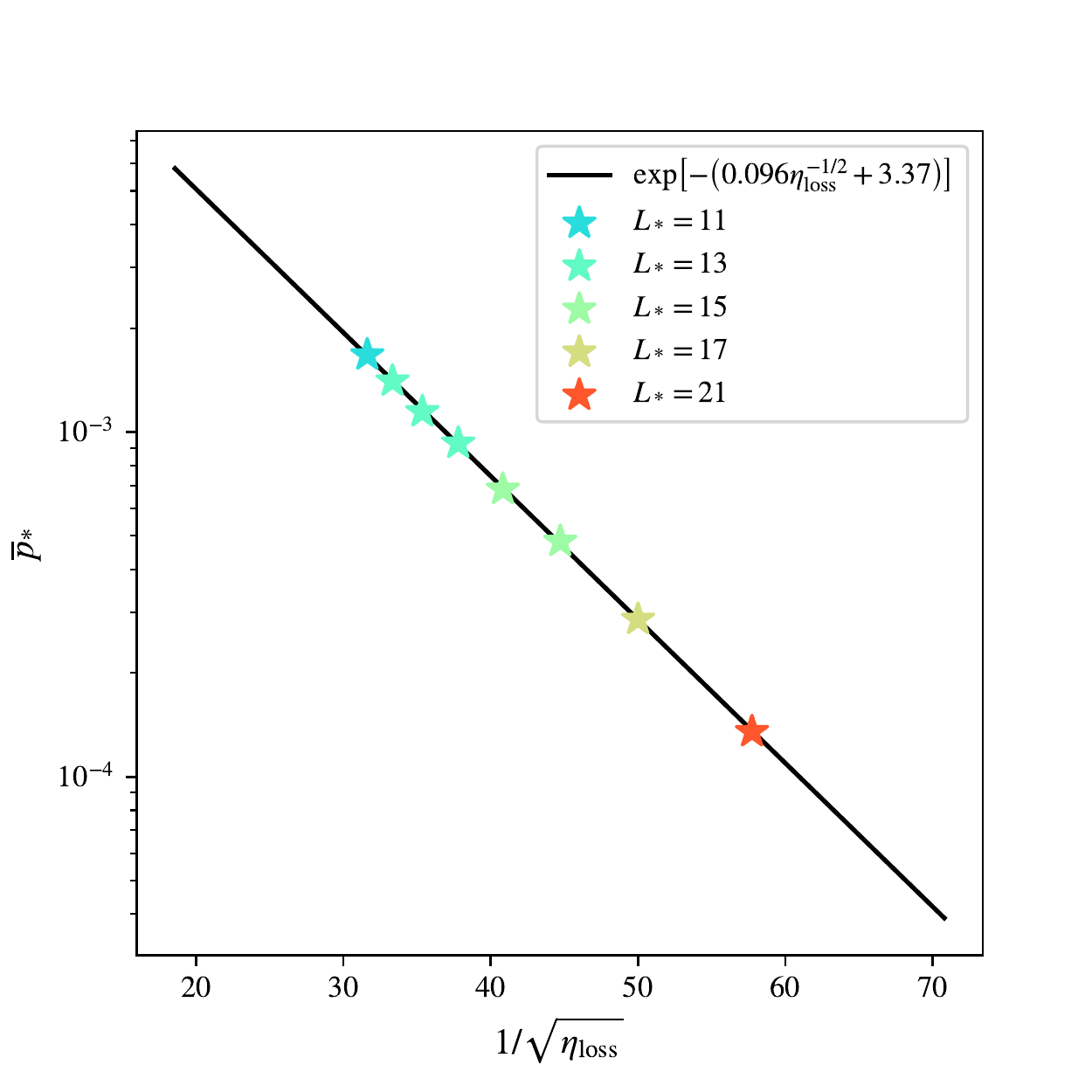}
    }
    
    \caption{(a) $\ln(1/\overline{p}_*)$ vs. $\eta_Z^{-1/2}$ for Error Model~3a (applied to Protocol~\ref{protB}). (b) $\ln(1/\overline{p}_*)$ vs. $\eta_\textrm{loss}^{-1/2}$ for Error Model~3b. Solid lines are fits to Eq.~\eqref{eq:logical_scaling_eta}. Colours indicate the value $L_{*}$ of $L$ that achieves the minimum logical error rate $\overline{p}_{*}$.\label{fig:optimal_vs_eta}}
\end{figure}

From Eq.~\eqref{eq:extrapolation_constants}, we can determine the ``break-even point" beyond which it is advantageous to use the delay lines. That is, since the depolarising noise rate is assumed to be $p = 10^{-3}$, using the experimental setup of Section~\ref{sec:experiment} would make sense only when the logical error rate $\overline{p}$ is below this value. If the delay line error is dominated by dephasing, the break-even point occurs at $\eta_{Z} = 6.5 \times 10^{-5}$. If the delay line error is dominated by loss, the break-even point occurs at $\eta_{\textrm{loss}} = 7.4\times 10^{-4}$. 

As discussed in the following subsection, current experimental estimates for delay line error rates are not below this break-even point. However, the above results show that small reductions in these error rates can lead to very large reductions in the logical error rate. As an example, consider Error Model~3b, in which delay line errors are dominated by qubit loss. For circuit error rates as high as $10^{-3}$, Eqs.~\eqref{eq:logical_scaling_eta} and~\eqref{eq:extrapolation_constants} give logical error rates $\overline{p}_* = 10^{-5}, 10^{-10}, 10^{-15}$ for $\eta_{\textrm{loss}} \approx 1.4 \times 10^{-4}, 2.4 \times 10^{-5}, 9.5 \times 10^{-6}$, respectively. Assuming that $L_{*} \propto \eta_{\textrm{loss}}^{-1/2}$ as above, the values of $L$ required can be estimated to be $L_* \approx 30, 75, 115$, respectively. Thus, even if the circuit error rate is relatively high, with continued improvements in the error rates and storage capacities of delay lines, extremely low logical error rates can potentially be achieved using our protocol. 

\subsection{Experimental prospects}

\begin{table}[]
\begin{tabular}{|c|c|c|}
\hline
{$\tau/T_2$} & {\begin{tabular}[c]{@{}c@{}}optimal code \\ distance ($L_*$)\end{tabular}} & {\begin{tabular}[c]{@{}c@{}}logical error \\ rate ($\overline{p}_*$)\end{tabular}} \\ \hline
$1\times 10^{-5}$ & $21$ & $1.43\times 10^{-4}$ \\ \hline
$1.\overline{3} \times 10^{-5}$ & $17$ & $3.24\times 10^{-4}$ \\ \hline
$1.\overline{6} \times 10^{-5}$ & $15$ & $5.59\times 10^{-4}$ \\ \hline
$2\times 10^{-5}$ & $15$ & $8.20 \times 10^{-4}$ \\ \hline
$2.\overline{3} \times 10^{-5}$ & $13$ & $1.12\times 10^{-3}$ \\ \hline
$2.\overline{6} \times 10^{-5}$ & $13$ & $1.42 \times 10^{-3}$ \\ \hline
$3 \times 10^{-5}$ & $13$ & $1.79 \times 10^{-3}$ \\ \hline
$3.\overline{3}\times 10^{-5}$ & $11$ & $2.06\times 10^{-3}$ \\ \hline
\end{tabular}
\caption{The optimal code distance $L_*$ and corresponding logical error rate $\overline{p}_*$ for different values of $\tau/T_2$ in quantum acoustic systems under Error Model~3a, where $\tau$ is the pulse-to-pulse separation between phonons.}\label{table:etaZ}
\end{table}

We believe that three different experimental platforms are particularly appealing for our purposes: 1) a system of optical photons in a waveguide coupled to an atom or an artificial atom, 2) an integrated superconducting circuit in which single microwave photons can be deterministically generated via a superconducting qubit, 3) a quantum acoustic system based on fabricated nanostructures coupled to a nonlinear quantum emitter, \textit{e.g.}, a transmon qubit piezoelectrically coupled to a phononic waveguide.

In the optical domain, commercially available optical fibers can provide excellent delay times, in principle allowing for an extremely large number of photons in the delay lines. For instance, Tamura \textit{et al.}\ reported a loss rate of $1.4\times 10^{-4}\, \textrm{dB/m}$~\cite{Tamura2018}. Assuming that a single time step lasts $50\,\textrm{ns}$, we obtain a loss rate of $1.4 \times 10^{-3}\,\textrm{dB}$ per time step, which amounts to $\eta_{\textrm{loss}} \approx 9.6 \times 10^{-4}$. This is close to the break-even point $7.4\times 10^{-4}$ estimated above. 

However, weak coupling strengths between a quantum emitter and relevant photon modes can be a limitation of this approach. In particular, the cooperativity $\mathcal{C}$ is the ratio between the probabilities that $\Q$ emits a photon into a guided mode versus into unwanted modes. In order to obtain logical error suppression, $\mathcal{C}$ needs to be sufficiently large, such that the total loss probability is below the loss thresholds found in Section~\ref{sec:threshold_result}. Achieving a high cooperativity, \textit{e.g.}, $\mathcal{C} \gtrsim 100$, is one of the major experimental challenges in the field and is yet to be accomplished. Reducing photon loss at the interfaces of different optical elements and improving the qubit coherence time (in the case of quantum dots) would be another challenge.

In microwave photonics with integrated superconducing circuits, a significantly higher cooperativity $\mathcal{C}\approx 172$ has been achieved~\cite{mirhosseini2019cavity}. 
In fact, more recently, coherent interactions between a quantum emitter and a single time-delayed photon that has propagated through a waveguide have been demonstrated experimentally~\cite{ferreira2020collapse}.
In Ref.~\cite{ferreira2020collapse}, an array of microwave resonators is used to realise a one-dimensional waveguide with delay time $\tau \approx 227\,\textrm{ns}$. The waveguide is coupled to a qubit with photon emission rate $\Gamma_{1D} \approx 2\pi \times 21\,\textrm{MHz}$. This capability implies that around $\tau \Gamma_{1D} \approx 30$ propagating photons can be stored inside the waveguide.
We believe that with further improvements, integrated superconducting circuits can potentially provide a proof-of-principle demonstration of our protocols in the near future. 

Finally, quantum acoustic systems with phononic crystals are also rapidly emerging as a promising platform for our scheme. A single-mode phononic waveguide~\cite{PhysRevLett.121.040501} and an extremely long phonon lifetime ($T_1 \approx 1.5\,\textrm{s}$ and $T_2 \gtrsim 0.3\,\textrm{ms}$)~\cite{maccabe2019phononic} have already been demonstrated in two separate experiments. Assuming that a strong coupling regime with a high cooperativity can be achieved by fabricating integrated nanostructures (similar to superconducting circuits~\cite{mirhosseini2019cavity}), we expect that quantum acoustic systems can realise our protocols for reasonably large system sizes in the near future. For example, with optimistic but reasonable estimates $T_2 \approx 1\,\textrm{ms}$ and $\gamma \approx 100\,\textrm{MHz}$, where $\gamma$ is the coupling strength, one can choose a pulse-to-pulse time separation $\tau \approx 160$~ns to realise high-fidelity gates with error rates below our threshold of $p_\textrm{th}\approx 0.39\%$.
(Here, we assumed that the gate fidelity scales as $1- 1/(\tau \gamma)^2$ based on symmetric wavepackets of phonons~\cite{Pichler2017}.)
This choice\footnote{The logical error rates achievable for other values of $\tau/T_2$ are tabulated in Table~\ref{table:etaZ}.}
of $\tau$ would lead to a delay line error rate per time step of $\eta_{Z}\approx 3\tau/T_2 \approx 4.8 \times 10^{-4}$. Although this is above the break-even point $6.5 \times 10^{-5}$, we note that the experimental technology for quantum acoustic systems is in its early stages and advancing rapidly. Through improving fabrication methods for integrated circuits and lowering the temperature, the coherence times of both qubits and a delay lines may be substantially increased.

\section{Discussion}
\label{sec:discussion}

In this paper, we proposed a method for preparing the well-known three-dimensional cluster state of Ref.~\cite{Raussendorf2006} using a simple experimental setup. The main advantage of our proposal is that it has low {component overhead}, meaning that we only need a handful of experimental components to build a well-protected logical qubit. In contrast, standard protocols based on three-dimensional cluster states or the surface code~\cite{Raussendorf2003,Raussendorf2007,Raussendorf2007a,Fowler2012,Fukui2018} are expected to require hundreds if not thousands of experimental components. 

If memory errors are non-negligible, our protocols do not have finite thresholds for the circuit error rate. Nevertheless, the logical error rate can be made exponentially small in $\eta^{-1/2}$, where $\eta$ represents the memory error rate. Although our estimates suggest that the error rates that have been attained experimentally are not yet small enough, the low component overhead of our approach means that improvements in only a few physical components can lead to extremely large reductions in the logical error rate. 

While the most mature approaches to quantum computation have high component overhead, ours is not the first proposal aiming to reduce component overhead. For example, the promise of anyon-based quantum computation in topological materials~\cite{Kitaev2003,Nayak2008,Nakamura2020} is that natural physical interactions would greatly reduce the component overhead. Likewise, the reader may wonder how our scheme fares in comparison to those based on the Gottesman-Kitaev-Preskill (GKP) code~\cite{Gottesman2001}. This quantum error-correcting code for a qubit in an oscillator was recently used to demonstrate error suppression~\cite{Campagne-Ibarcq2020} by coupling cavity modes that form a GKP code to a transmon. For the protocol used, the logical error rate is determined by (i) a number that decays exponentially with $\sigma^{-2}$, where $\sigma$ is the standard deviation of the Gaussian displacement channel~\cite{Vuillot2019} modelling the dominant source of error on the modes, and (ii) the transmon error rate $p$. The dominant source of error in Ref.~\cite{Campagne-Ibarcq2020} limiting the logical error rate is (ii), leading to a logical error rate that is significantly higher than what (i) might na\"ively suggest. The contribution from (ii) can be reduced to $O(p^2)$ by using a recently proposed fault-tolerant method for preparing GKP states~\cite{Shi2019}, in which case we expect the logical error rate to be limited by $O(p^2)$.

In contrast, in an analogous setting, the logical error rate of our protocols decays exponentially with $\eta^{-{1}/{2}}$, \emph{even if the transmon error rate is significantly higher.} Specifically, it suffices for the transmon error rate to be lower than some threshold value, which we have estimated to be $0.39\%$ in the standard depolarising noise model for circuit errors. Therefore, while approaches based on the GKP code may seem advantageous at the moment, with improved gate fidelities our scheme may be able to outperform them in the future.

From a more theoretical perspective, our protocols have a remarkable fault-tolerance property. Even though there is one physical qubit that interacts with every other qubit during the preparation of the cluster state, the procedure is nonetheless fault-tolerant because any single-qubit circuit-level error results in a constant-weight error on the final state. What is interesting about this phenomenon is that the propagated error may actually be highly nonlocal, yet its \emph{effect} on the specific state we wish to prepare is always the same as that of a geometrically local error. Similarly, the effect of any $m$-qubit circuit-level error on the final state is equivalent to that of at most $m$ geometrically local errors. In fact, this applies not only to our protocols (Protocols~\ref{protA} and~\ref{protB}) for preparing the specific cluster state of Ref.~\cite{Raussendorf2006}, but to our general algorithm (Algorithm~\ref{alg1}), which can be used to prepare the cluster state corresponding to \emph{any} graph. (For general cluster states, geometric locality is defined with respect to the underlying graph; see Appendix~\ref{appendix:error}.) 

By leveraging this fact, we were able to construct fault-tolerant quantum circuits whose depth necessarily scales with the total number of qubits. This is certainly unusual. Fault-tolerant quantum computing protocols usually avoid circuits structured like ours because of the danger that they will spread errors too widely. This often restricts the design of these protocols, leading them to rely on a small number of trusted and manifestly fault-tolerant building blocks, such as transversal gates or ``catch-and-correct''~\cite{Chao2018}. Our work shows that there can be a subtle form of fault-tolerance in which physical errors spread but without adverse effects. This observation may prove useful for generalising our methods to other fault-tolerance schemes. Indeed, Algorithm~\ref{alg1} can immediately be used to generate cluster states obtained by foliating arbitrary stabiliser codes~\cite{Bolt2016,Brown_2020}.\footnote{The fault-tolerance of the resulting protocols can be analysed with the help of Table~\ref{table:errors1} in Appendix~\ref{appendix:errors1}.
}

There are several directions for improvement to explore. For one, the decoder we used in our simulations was the most basic MWPM decoder, which did not take into account matching degeneracies nor the anisotropy of the underlying error model. Decoders that exploit additional information may well obtain better logical error rates and thresholds. Also, there has been a recent flurry of work on using so-called flag techniques to make error-correction schemes more efficient~\cite{Yoder2017,Chao2018,Chao2018a,Chamberland2018}. It would be interesting to investigate whether these techniques could be used to improve our protocols as well. More generally, it could be advantageous to trade a slowly growing component overhead for improved error tolerance. For instance, one could adapt our protocols to build cluster states on an $L\times L\times N$ lattice using $O(L)$ emitters instead of the single emitter we studied here. Such a scheme would still have component overhead parametrically smaller than the $O(L^2)$ physical qubits live in the system. An analogous trade-off was found in~\cite{hilaire2020} between the number of emitters and the entanglement generation rate in the context of quantum communication using cluster states~\cite{Varnava2006,Buterakos2017}. Another strategy would be to concatenate our scheme, replacing our bare single- or dual-rail qubits with qubits protected by error correction, using \textit{e.g.}, GKP~\cite{Gottesman2001} or binomial codes~\cite{Michael2016}. Conversely, our scheme could be used as the inner code, choosing a small value of $L$ to make the logical error rate sufficiently small, \textit{e.g.}, $10^{-5}$. We can then concatenate this with a lower-overhead outer code, which may have a low pseudo-threshold. All of these possibilities are left for future work.

Lastly, we note that our focus lied in the problem of storing a single logical qubit using a small number of experimental components. A natural future direction is to determine how best to perform logical computation using our architecture. A straightforward approach would be to create $n$ logical qubits using $n$ emitters, and implement logical gates between them using lattice surgery techniques~\cite{Horsman2011}. We leave the detailed analysis of such a scheme, as well as the exploration of potentially lower-overhead protocols for fault-tolerant computation, to follow-up work.

\section*{Acknowledgments}

We thank Agnetta Cleland, Sophia Economou, Kevin Multani, Marek Pechal, Hannes Pichler, Amir Safavi-Naeini, Alp Sipahigil, and Zhaoyou Wang for useful discussions. KW is supported by the Stanford Graduate Fellowship. SC acknowledges support from the Miller Institute for Basic Research in Science. IK is supported by the Simons Foundation It from Qubit Collaboration and by the Australian Research Council via the Centre of Excellence in
Engineered Quantum Systems (EQUS) project number CE170100009.
NS is supported by the National Science Foundation Graduate Research Fellowship under Grant No. DGE-1656518.
PH acknowledges the support of AFOSR (award FA9550-19-1-0369), CIFAR, and the Simons Foundation. Our implementation of the minimum-weight perfect matching decoder uses Blossom~V~\cite{Kolmogorov2009}.

\appendix

\section{Alternative algorithm} \label{appendix:alternative}

In this appendix, we present an alternative algorithm, Algorithm~\ref{alg2}, for preparing cluster states $\ket{\psi_G}$ [cf.~Eq.~\eqref{cluster_def}] on arbitrary graphs $G$. Algorithm~\ref{alg2} is similar in structure to Algorithm~\ref{alg1}. The main difference is that unlike Algorithm~\ref{alg1}, Algorithm~\ref{alg2} does not require any intermediate measurements of the ancilla $\Q$ for \textit{any} $G$. 

To understand the form of Algorithm~\ref{alg2}, recall the definition of the graphs $G[k]'$ from Eq.~\eqref{G[k]'}. For each $k \in [n]$, $G[k]'$ consists of the subgraph of $G$ induced by the vertices $[k]$ together with an additional vertex, $\Q$, and an additional edge, $(\Q,k)$. The cluster state $\ket{\psi_{G[k]'}}$ is defined by Eq.~\eqref{cluster_def}. In Appendix~\ref{appendix:correctness_measurement_free}, we prove that for any $k \in [n]$, $\ket{\psi_{G[k]'}}$ is equivalently given by
\begin{widetext}
\begin{equation} \label{psiG[k]'}
    \ket{\psi_{G[k]'}} = \left[\prod_{j=1}^k\left(H_\Q X_{\Q,j}Z_{\Q,j-1}\prod_{i:(i,j) \in E[j]} Z_{\Q,i}\right) \right] \ket{+}_\Q \bigotimes_{i'=1}^k\ket{0}_{i'},
\end{equation}
\end{widetext}
where $E[k] \coloneqq \{(i,j) \in E: i,j \in [k]\}$. Therefore, taking $k = n$ in Eq.~\eqref{psiG[k]'} yields a circuit that recursively generates $\ket{\psi_{G[n]'}}$, as described in the main loop of Algorithm~\ref{alg2}. Once we have prepared $\ket{\psi_{G[n]'}}$, the target cluster state $\ket{\psi_G}$ can then be obtained by applying $Z_{\Q,n}$ (or by measuring $\Q$ in the $Z$-basis).

Observe that if for a given $j$, if $(j-1,j)$ is an edge (Line~\ref{alg2:if} of Algorithm~\ref{alg2}), then 
\begin{equation} \label{cancel Zs} Z_{\Q,j-1} \prod_{i:(i,j) \in E[j]} Z_{\Q,i} = \prod_{\substack{i\neq j-1\\ (i,j) \in E[j]}} Z_{\Q,i},\end{equation}
so we do not apply $Z_{\Q,j-1}$ at all (instead of applying it twice in succession).

Algorithm~\ref{alg2} correctly prepares $\ket{\psi_G}$ for any ordering of the qubits, which is implicitly chosen by labelling the vertices in $V$ from $1$ to $n$. The proof of correctness is given in Appendix~\ref{appendix:correctness_measurement_free}.

\begin{algorithm}[H] \caption{prepare the cluster state $\ket{\psi_G}$ given a graph $G = (V, E)$  (with $V = [n]$)} 
\label{alg2}
\begin{algorithmic}[1]
    \State initialise $\Q$ in $\ket{+}$ \label{alg2: initialise Q}
    \For{$j=1$ to $n$} \State initialise qubit $j$ in 
    $\ket{0}$ \label{alg2: initialise j}
    \If{$(j - 1,j) \in E$} \label{alg2:if}
        \State apply $H_\Q X_{\Q,j}\prod\limits_{\substack{i \neq j-1 \\(i,j) \in E[j]}}Z_{\Q,i}$
    \Else
    \State apply $H_\Q X_{\Q,j} Z_{\Q,j-1}\prod\limits_{i:(i,j) \in E[j]} Z_{\Q,i}$
    
    \textit{\hspace{1em}// the $Z_{\Q,i}$ gates may be applied in any order}
    \EndIf
    \EndFor
    \State apply $Z_{\Q,n}$ \label{alg1:ZQn}
\end{algorithmic}
\end{algorithm}

\section{Correctness}
\label{appendix:correctness}

In this appendix, we prove that Algorithms~\ref{alg1} and~\ref{alg2} correctly prepare cluster states $\ket{\psi_G}$ on arbitrary graphs $G = (V,E)$. We begin with Algorithm~\ref{alg2}.

\subsection{Correctness of Algorithm~\ref{alg2}}
\label{appendix:correctness_measurement_free}

The correctness of Algorithm~\ref{alg2} follows immediately from Eq.~\eqref{psiG[k]'}, which we now prove.
For any graph $G$, the cluster state $\ket{\psi_{G[k]'}}$ corresponding to $G[k]'$ is, by definition [cf.~Eqs.~\eqref{cluster_def} and~\eqref{G[k]'}],
\begin{equation} \label{psiGkdef}
\ket{\psi_{G[k]'}} = Z_{\Q,k}\left[\prod_{(i,j) \in E[k]}Z_{i,j}\right]\ket{+}_\Q\bigotimes_{i'=1}^k\ket{+}_{i'}.
\end{equation}

First, we prove by induction that
\begin{equation}\label{add}  \ket{\psi_{G[k]'}} = \left[\prod_{j=1}^k \textsc{Add}_j\right]\ket{+}_\Q \bigotimes_{i=1}^k\ket{+}_i,\end{equation}
where
\begin{equation} \label{add_def} \textsc{Add}_j \coloneqq Z_{Q,j}\textsc{Swap}_{\Q,j}Z_{\Q, j-1}\prod_{i:(i,j)\in E[j]} Z_{\Q, i}.\end{equation}
Here, $\textsc{Swap}_{a,b}$ denotes the \textsc{Swap} gate between qubits $a$ and $b$. For the base case $k =1$, $Z_{\Q,k-1} = I$ and $E[k] = \varnothing$, so
\begin{align*} 
\textsc{Add}_1\ket{+}_\Q \ket{+}_1 &= Z_{\Q,1}\textsc{Swap}_{\Q,1}\ket{+}_\Q \ket{+}_1 \\
&= Z_{\Q,1}\ket{+}_\Q \ket{+}_q \\
&= \ket{\psi_{G[1]'}} 
\end{align*}
since $G[1]'$ contains only the edge $\{Q,1\} $. Assume that Eq.~\eqref{add} holds for $k-1$. Then, 
\begin{widetext}
\begin{align*}
    \left[\prod_{j=1}^k\textsc{Add}_j\right]\ket{+}_\Q \bigotimes_{i=1}^k\ket{+}_i &= \textsc{Add}_k \ket{\psi_{G[k-1]'}} \ket{+}_k \\
    &= \left[Z_{\Q,k}\textsc{Swap}_{\Q,k} Z_{\Q,k-1}\prod_{l:(l,k) \in E[k]}Z_{\Q,l} \right]\left[Z_{\Q,k-1}\left(\prod_{(i,j) \in E[k-1]}Z_{i,j}\right)\ket{+}_\Q \bigotimes_{i'=1}^{k-1}\ket{+}_{i'}\right]\ket{+}_k \\
    &= Z_{\Q,k} \left[\prod_{l:(l,k)\in E[k]}Z_{k,l}\right]\left[\prod_{(i,j) \in E[k-1]}Z_{i,j}\right]\textsc{Swap}_{\Q,k}\ket{+}_\Q \bigotimes_{i'=1}^k\ket{+}_{i'} \\
    &= Z_{\Q,k}\left[\prod_{(i,j) \in E[k]}Z_{i,j}\right]\ket{+}_\Q \bigotimes_{i'=1}^k\ket{+}_{i'} \\
    &= \ket{\psi_{G[k]'}},
\end{align*}
\end{widetext}
where the third equality uses the identity $\textsc{Swap}_{\Q,k}Z_{\Q,l} = Z_{k,l}\textsc{Swap}_{\Q,k}$ for $l \neq k$, and the fourth equality follows from the fact that $E[k] = E[k-1] \cup \{(l,k) \in E[k]\}$ [cf.~Eq.~\eqref{E[k]}]. 

Next, we observe from Eqs.~\eqref{add_def} and \eqref{add} that for every $j \in [n]$, when $\textsc{Swap}_{\Q,j}$ is applied to $\Q$ and $j$, the qubit $j$ is in the fixed initial state $\ket{+}$. This implies that we do not need to implement a $\textsc{Swap}$ gate that works correctly on arbitrary states. We can instead use an operation that has the same effect when one of the qubits is in the state $\ket{+}$.\footnote{Of course, if the $\textsc{Swap}$ gate were experimentally available, this transformation would be unnecessary. We were primarily motivated to replace the gates in Eq.~\eqref{add} by ones that are more amenable to experimental implementation in the setup considered in Section \ref{sec:experiment}.} In particular, we use the identity
\begin{equation} \label{HQ identity}
Z_{\Q,j}\textsc{Swap}_{\Q,j}\ket{\varphi}_\Q\ket{+}_j = H_\Q X_{\Q,j}\ket{\varphi}_{\Q}\ket{0}_j,
\end{equation}
for any arbitrary state $\ket{\varphi}$ of $\Q$ (potentially entangled with other qubits).
This follows from
\begin{widetext}
\[{\small
\Qcircuit @C=1.1em @R = 0.3em @!R {
&\lstick{\ket{\varphi}} &\qswap &\ctrl{2} &\qw &&&&&&\lstick{\ket{\varphi}} &\ctrl{2} &\qswap &\qw &&&&&&\lstick{\ket{\varphi}} &\qw &\ctrl{2} &\qswap &\qw &&&&&&\lstick{\ket{\varphi}} &\ctrl{2} &\qswap &\gate{H} &\qw \\
&&&&&& = &&&&&&&&& =  &&&&&&&&&& = 
\\
&\lstick{\ket{+}} &\qswap \qwx[-2] &\control\qw &\qw &&&&&&\lstick{\ket{+}} &\control \qw &\qswap \qwx[-2] &\qw &&&&&&\lstick{\ket{0}} &\gate{H} &\control \qw &\qswap \qwx[-2] &\qw &&&&&&\lstick{\ket{0}} &\targ &\qswap \qwx[-2] &\qw &\qw 
\gategroup{1}{28}{3}{31}{1.2em}{--}
} }
\]
\end{widetext}
and the fact that the state prepared in the dashed box is invariant under \textsc{Swap}, for any $\ket{\varphi}$. 
Substituting Eq.~\eqref{HQ identity} into Eq.~\eqref{add}, we arrive at Eq.~\eqref{psiG[k]'}, which forms the basis for Algorithm~\ref{alg2}.

\subsection{Correctness of Algorithm~\ref{alg1}}
\label{appendix:correctness_measurement_based}

To prove the correctness of Algorithm~\ref{alg1},
we show by induction that for every $k \in [n]$, after Line \ref{alg1: Bj} in the $k$th iteration of the \textbf{for} loop has been executed, the state of $\Q$ and the first $k$ data qubits is $\ket{\psi_{G[k]'}}$ [cf.~Eq.~\eqref{psiGkdef}]. 

For the base case $k=1$, $E[1] = \varnothing$ so in Line~\ref{alg1: Bj}, $H_\Q X_{\Q,1}$ is applied to $\Q$ and qubit $1$, which are in their initial states $\ket{+}$ and $\ket{0}$, respectively. By Eq.~\eqref{psiG[k]'} with $k = 1$, this yields the state
\[ H_\Q X_{\Q,1}\ket{+}_\Q\ket{0}_1 = \ket{\psi_{G[1]'}}.\]

For the inductive step, there are two cases to consider, $(k-1,k) \in E$ and $(k-1,k) \notin E$. For both cases, it will be useful to observe from Eq.~\eqref{psiG[k]'} that for any $1 < k \leq n$,
\begin{widetext}
\begin{equation} \label{psiGk recursive} 
\ket{\psi_{G[k]'}} = \left[H_\Q X_{\Q,k}Z_{\Q,k-1}\prod_{i:(i,k) \in E[k]}Z_{\Q,i}\right]\ket{\psi_{G[k-1]'}}\ket{0}_k.
\end{equation}
\end{widetext}
Also note from the definition of $E[j]$ in Eq.~\eqref{E[k]} that the controlled-$Z$ gates applied in Line~\ref{alg1: Bj} can be equivalently written as
\[ \prod_{\substack{i < j -1 \\(i,j) \in E}}Z_{\Q,i} = \prod_{\substack{i \neq j-1 \\ (i,j) \in E[j]}}Z_{\Q,i}. \] 

In the first case $(k-1,k) \in E$, Lines \ref{alg1: if1}--\ref{alg1: reset} in the $(k-1)$th iteration are skipped, so by the inductive hypothesis, the state at the start of the $k$th iteration is $\ket{\psi_{G[k-1]'}}$. Then, Lines \ref{alg1: initialise j} and \ref{alg1: Bj} in the $k$th iteration produce the state
\begin{align*}
\left[H_\Q X_{\Q,k}\prod_{\substack{i \neq k-1\\(i,k) \in E[k]}}Z_{\Q,i} \right]\ket{\psi_{G[k-1]'}}\ket{0}_k = \ket{\psi_{G[k]'}}.
\end{align*}
This follows from Eq.~\eqref{psiGk recursive}, noting that Eq.~\eqref{cancel Zs} holds since $(k-1,k) \in E$.

In the second case $(k-1,k) \not\in E$, the \textbf{if} condition of Line \ref{alg1: if1} is satisfied and $\Q$ is measured in the $Z$-basis. By the inductive hypothesis, $\Q$ and the first $k-1$ data qubits are in the state $\ket{\psi_{G[k-1]'}}$ immediately before this measurement. By Eqs.~\eqref{stabilisers} and~\eqref{G[k]'}, the stabilisers of $\ket{\psi_{G[k-1]'}}$ are generated by $\{X_\Q Z_{k-1}\} \cup \{S_i : i \in [k-1]\}$, where 
\[ S_i = \begin{dcases}X_i \prod_{j:(i,j) \in E[k-1]}Z_j &i < k-1 \\ X_{k-1} Z_{\Q} \prod_{j:(j,k-1) \in E[k-1]} Z_j &i = k-1.
\end{dcases}\]
Hence, the stabiliser generators of post-measurement state of the first $k-1$ data qubits are $\{S_i': i \in [k-1]\}$, where $S_i'= S_i$ for $i <k-1$ and 
\[ S_{k-1}' = \pm X_{k-1}\prod_{j:(j,k-1) \in E[k-1]}Z_j. \]
Here, the $+$ sign corresponds to the post-measurement state of $\Q$ being $\ket{0}$, and the $-$ sign to $\ket{1}$. If the outcome is $\ket{1}$, Line \ref{alg1: Zj} applies $Z_{k-1}$, which negates $S_{k-1}'$ and leaves the other stabiliser generators unchanged. Thus, the state of $\Q$ (which is re-initialised in $\ket{+}$ by Line \ref{alg1: reset}) and the first $k-1$ data qubits at the end of the $(k-1)$th iteration is the cluster state
\begin{equation} \label{endstate2} \left[\prod_{(i,j) \in E[k-1]} Z_{i,j} \right]\ket{+}_\Q\bigotimes_{i'=1}^{k-1}\ket{+}_{i'} = Z_{\Q,k-1}\ket{\psi_{G[k-1]'}}. \end{equation}
Applying Lines~\ref{alg1: initialise j} and~\ref{alg1: Bj} in the $k$th iteration then leads to the state
\[ \left[H_\Q X_{\Q,k}\prod_{\substack{i \neq k-1 \\(i,k) \in E[k]}}Z_{\Q,i}\right]Z_{\Q,k-1}\ket{\psi_{G[k-1]'}}\ket{0}_k = \ket{\psi_{G[k]'}}\]
by Eq.~\eqref{psiGk recursive}.

In both cases, therefore, the state of $\Q$ and the first $k$ data qubits is $\ket{\psi_{G[k]'}}$ at the end of Line \ref{alg1: Bj} of the $k$th iteration, as claimed. In particular, in the $n$th iteration, the state is $\ket{\psi_{G[n]'}}$ at the end of Line \ref{alg1: Bj}. $G[n]'$ differs from $G$ by an extra edge $\{(Q,n)\}$, so by measuring $\Q$ in the $Z$-basis and applying $Z_n$ if necessary in Lines \ref{alg1: measure}--\ref{alg1: Zj}, we obtain the desired state $\ket{\psi_G}$.

\section{Error analysis for arbitrary graphs} \label{appendix:error}

In this appendix, we analyse how errors propagate through Algorithms~\ref{alg1} and~\ref{alg2}, both of which can be used to prepare cluster states $\ket{\psi_G}$ defined by arbitrary graphs $G = (V,E)$. We show that for any $G$, any single-qubit error occurring during either algorithm results in an {effective error} [cf.~Eq.~\eqref{effective error}] whose weight scales with the maximum degree of $G$. For Algorithm~\ref{alg1} and for certain instances of Algorithm~\ref{alg2}, the effective error resulting from a single-qubit error is always \textit{geometrically local}, meaning that it is supported on some subset of $\{i\} \cup N(i)$ for some qubit $i \in V$, where
\begin{equation} \label{N(i)} N(i) \coloneqq \{j:(i,j) \in E\}\end{equation}
denotes the nearest neighbours of $i$ in the graph $G$.\footnote{In the context of quantum error correction, the notion of geometric locality often applies only to graphs that can be embedded into finite-dimensional space. The definition we use here extends to arbitrary graphs, including \textit{e.g.}, expander graphs.} 

A crucial feature of both algorithms---and one of the main intuitions behind the proof below---is that at any point in the procedure, the ancilla qubit $\Q$ is entangled with a restricted number of data qubits (depending on the maximum degree of $G$). Moreover, all or almost all of the qubits with which $\Q$ is entangled at a given point are close to each other in $G$. As a result, even though $\Q$ interacts at least once with every data qubit in $V$, most (and in some cases, all) of the errors that could occur on $\Q$ lead to effective errors on the final state that are localised to neighbourhoods of $G$.

\subsection{Error analysis for Algorithm \ref{alg1}} \label{appendix:errors1}

In this subsection, we prove that for any input graph, single-qubit errors occurring during Algorithm~\ref{alg1} induce geometrically local effective errors. 

\begin{claim} \label{claim1}
Consider an arbitrary graph $G = (V,E)$, and choose any ordering of the qubits in Algorithm~\ref{alg1} by labelling the vertices in $V$ from $1$ to $n$. Then, any single-qubit error occurring between the elementary operations in Algorithm~\ref{alg1} results in effective error [cf.~Eq.~\eqref{effective error}] that is supported on some (possibly empty) subset of $\{i\} \cup N(i)$ for some $i \in [n]$.
\end{claim}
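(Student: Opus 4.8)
The plan is to reduce the claim to a short list of base cases by propagating a Pauli error forward through the circuit one elementary operation at a time, exploiting the stabiliser structure of the intermediate states established in Appendix~\ref{appendix:correctness_measurement_based}, and then to verify those base cases by direct computation. First I would use Eq.~\eqref{multiple errors} to pass from an arbitrary single-qubit Pauli error to $X$ and $Z$ errors: a $Y_m$ error at a given spacetime location produces, up to sign, the product of the effective errors of an $X_m$ and a $Z_m$ error at that same location, so it suffices to handle $X$ and $Z$ separately and, in addition, to check that at every location the two effective errors can be taken to be supported in the closed neighbourhood $\{i\}\cup N(i)$ of a \emph{common} vertex $i$.

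Next I would isolate the two facts that drive everything. The first is Observation~1 of Section~\ref{sec:error}: a $Z$ error on a data qubit $i$ that occurs before the gate $X_{\Q,i}$ in its block acts on $\ket{0}_i$ and is trivial, while one occurring after $X_{\Q,i}$ commutes with every remaining operation (the only non-diagonal gate touching $i$, namely $X_{\Q,i}$, is in the past) and hence propagates to $Z_i$ on $\ket{\psi_G}$; either way the effective error lies in $\{i\}$. The second is the structure of the intermediate states: by the induction in Appendix~\ref{appendix:correctness_measurement_based}, the state right after block $B_k$ is the cluster state $\ket{\psi_{G[k]'}}$, during a block it is that state with an extra $H_\Q$ on the ancilla, and between blocks where $\Q$ is measured it is the cluster state on the first data qubits with $\Q$ reset (the conditional $Z_j$ exactly cancels the measurement back-action, so Lines~\ref{alg1: measure}--\ref{alg1: reset} act as a fixed Clifford channel, cf.\ the footnote in Section~\ref{sec:error}). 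From the construction one checks that in every intermediate graph $G'$ the edges between data qubits all lie in $E$, and that the ancilla neighbourhood satisfies $N_{G'}(\Q)\subseteq\{k\}\cup N(k)$, where $k$ indexes the current block (equivalently, $\Q$ is never entangled with two data qubits that are far apart in $G$).

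With these in hand, the core argument propagates the error forward and, whenever the circuit reaches a ``clean'' moment---immediately after the $H_\Q$ of a block $B_k$, where the state is the honest cluster state $\ket{\psi_{G[k]'}}$---multiplies the propagated Pauli by a stabiliser generator $S_m=X_m\prod_{\ell\in N_{G'}(m)}Z_\ell$ of that state (legitimate because stabilisers of an intermediate state map under the remaining circuit to stabilisers of $\ket{\phi_\textrm{final}}$, by which an effective error may be multiplied). One then shows: a $Z_\Q$ error commutes through the diagonal and $\Q$-controlled gates of its block and becomes an $X_\Q$ error just after that block's $H_\Q$; since $\Q$ is then adjacent only to $k$, $X_\Q S_\Q=Z_k$, so the effective error is $Z_k$ (or it is annihilated by an intervening $Z$-measurement of $\Q$, or carried one block further to $Z_{k+1}$ with $(k,k{+}1)\in E$). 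An $X_i$ error on a data qubit commutes forward to a clean moment, where $X_i S_i$ replaces it by a product of $Z$'s on $N_{G'}(i)$ together with possibly $Z_\Q$; the $Z$'s on data qubits lie in $N(i)$ (intermediate data-edges are in $E$) and propagate unchanged, while the $Z_\Q$ is handled as above and, because $\Q$ is adjacent to a data qubit only when that qubit is the current one, contributes $Z_i$ or $Z_{i+1}\in N(i)$. Errors on $\Q$ just after an initialisation or reset are trivial ($X_\Q\ket{+}=\ket{+}$) or reduce to the $Z_\Q$ case; errors on qubit $i$ just after its initialisation are trivial ($Z_i$) or reduce, after block $B_i$, to the $X_i$ case. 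Assembling these shows every single-qubit error yields an effective error supported in $\{i\}\cup N(i)$ for some $i$, with $X$- and $Z$-errors at a fixed location landing in the closed neighbourhood of the same $i$, which with the $Y$-reduction finishes the proof; the outcome of this bookkeeping is precisely the general-graph analogue of Tables~\ref{table:errors_cubic} and~\ref{table:errors_bcc} (Table~\ref{table:errors1}), and filling in its entries is the bulk of the work.

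The main obstacle I anticipate is not any single computation but the case management around the ancilla. Because $\Q$ interacts with every data qubit and an error on it can, as a gate identity, spread arbitrarily far, the localisation only materialises after multiplying by the \emph{right} stabiliser at the \emph{right} intermediate time; the delicate part is verifying that whichever of the (up to three) gates of a block an ancilla error lands between, and whether or not that block is followed by a measurement--reset, the propagated-then-corrected error always collapses onto $\{k\}\cup N(k)$ for the current $k$. The data-qubit cases and the $Y$-consistency check are comparatively routine once the intermediate-state structure is pinned down.
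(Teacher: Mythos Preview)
Your proposal is correct and takes essentially the same approach as the paper: both reduce to a case analysis of single-qubit $X$ and $Z$ errors on $\Q$ and on data qubits, and both exploit the fact (from Appendix~\ref{appendix:correctness_measurement_based}) that the intermediate state after each block is the cluster state $\ket{\psi_{G[k]'}}$, so that an $X$ error at that moment can be traded for a product of $Z$'s on the neighbours of the affected vertex in the intermediate graph. The only notable difference is in the $X_i$ case: the paper propagates $X_i$ forward through each subsequent $Z_{\Q,i}$ using the gate identity $Z_{\Q,i}X_i = X_i Z_\Q Z_{\Q,i}$, spawning an $X_\Q$ after every block $\mathcal{B}_j$ with $j\in I_i$ and $j\ge k$, and then applies the $X_\Q\to Z_j$ reduction block by block; you instead multiply by the stabiliser $S_i$ of the intermediate cluster state at a clean moment to convert $X_i$ directly into $Z$'s on $N_{G'}(i)$ (plus possibly $Z_\Q$) and then push those forward. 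These are equivalent bookkeeping schemes and yield the same Table~\ref{table:errors1}.
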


\begin{proof}
For each $j \in [n]$, let
\begin{equation} \label{mathcalBj}
    \mathcal{B}_j \coloneqq H_\Q X_{\Q,j}\prod\limits_{\substack{i< j-1 \\(i,j) \in E}}Z_{\Q,i}
\end{equation}
denote the block of gates applied in the $j$th iteration of the \textbf{for} loop in Algorithm~\ref{alg1} (Line~\ref{alg1: Bj}). We consider the effect of all possible $X$ and $Z$ errors that could occur during Algorithm~\ref{alg1}. (The support of arbitrary errors can then be deduced using Eq.~\eqref{multiple errors}.) Spatially, these errors may inflict the ancilla $\Q$ or one of the data qubits $1, \dots, n$, and temporally, they may occur between two blocks $\mathcal{B}_j$ and $\mathcal{B}_{j+1}$,\footnote{Recall that if $(j, j+1) \not\in E$, $\Q$ is also measured in the $Z$-basis and reset to $\ket{+}$ (Lines~\ref{alg1: measure} and~\ref{alg1: reset}) in the $j$th iteration, between $\mathcal{B}_{j}$ and $\mathcal{B}_{j+1}$. In this case, we consider errors (on $\Q$) both before the measurement (\textit{i.e.}, immediately after $\mathcal{B}_j$) and after the re-initialisation (\textit{i.e.}, immediately before $\mathcal{B}_{j+1}$).} before the first block $\mathcal{B}_1$, after the last block $\mathcal{B}_n$, or between two elementary gates in the same block. We do not consider where the errors occur relative to the $Z_j$ corrections that may be applied for certain $j$ (Line~\ref{alg1: Zj}), as a Pauli error $P$ immediately before $Z_j$ is equivalent to $P$ immediately after $Z_j$ up to a sign. 

\medskip

\noindent \textbf{$Z_i$ errors}

Suppose that a $Z$ error occurs on a data qubit $i \in [n]$. Clearly, $Z_i$ commutes with every operation in Algorithm~\ref{alg1} except for the $X_{\Q,i}$ gate in $\mathcal{B}_i$. This $X_{\Q,i}$ gate is the first gate that acts on qubit $i$, which is initially in the state $\ket{0}$. Therefore, if the $Z_i$ error occurs somewhere before the $X_{\Q,i}$, it has no effect, whereas if it occurs after the $X_{\Q,i}$, it is equivalent to a $Z_i$ error on the final state.

\medskip

\noindent \textbf{$X_\Q$ errors}

We first show that an $X_\Q$ error occurring immediately after $\mathcal{B}_k$ results in an effective error $Z_k$ on the final state, in the sense of Eq.~\eqref{effective error}. This is a consequence of the fact (proven in Appendix~\ref{appendix:correctness_measurement_based}) that immediately after $\mathcal{B}_k$ has been applied, $\Q$ and the first $k$ data qubits are in the cluster state $\ket{\psi_{G[k]'}}$ (and the rest of the qubits are still in their initial state, $\ket{0}^{\otimes n - k})$. Recall from Eq.~\eqref{G[k]'} that $G[k]'$ is the graph with vertices $[k] \cup \Q$ and edges $E[k] \cup \{(\Q,k)\}$. Importantly, $k$ is the only qubit that shares an edge with $\Q$ in $G[k]'$, so it follows from Eq.~\eqref{stabilisers} that $X_\Q Z_k$ is a stabiliser of $\ket{\psi_{G[k]'}}$, which implies
\begin{equation} \label{XtoZ} X_\Q \ket{\psi_{G[k]'}} = Z_k \ket{\psi_{G[k]'}}. \end{equation}
Hence, an $X_\Q$ error immediately after $\mathcal{B}_k$ is equivalent to a $Z_k$ error immediately after $\mathcal{B}_k$, and we know from above that the latter results in a $Z_k$ error on the final state.

Trivially, an $X_\Q$ error occurring at the beginning of the circuit, \textit{i.e.}, before $\mathcal{B}_1$, has no effect since the initial state $\ket{+}$ of $\Q$ is stabilised by $X$. The same goes for $X_\Q$ errors that occur immediately after the re-initialisation of $\Q$ (to $\ket{+}$) in the iterations where $\Q$ is measured.

It remains to consider $X_\Q$ errors that occur between two consecutive gates in the same block $\mathcal{B}_k$. Suppose that an $X_\Q$ error occurs somewhere in $\mathcal{B}_k$ before the $X_{\Q,k}$ gate. At this point, the first $k - 1$ iterations of the \textbf{for} loop have been performed, followed by some subset of the controlled-$Z$ gates in $\mathcal{B}_k$. To be precise, let $J$ denote the subset of qubits $j$ for which $Z_{\Q,j}$ is in $\mathcal{B}_k$ and is located before the $X_\Q$ error. From Eq.~\eqref{mathcalBj}, we have\footnote{The qubits that are included in $J$ depend on the location of the $X_\Q$ error as well as on the order in which the controlled-$Z$ gates in $\mathcal{B}_k$ are actually applied (since they mutually commute, they can be applied in any order).}
\begin{equation} \label{eq:J}
    J \subseteq \{j < k - 1: (j,k) \in E\}.
\end{equation}
As shown in Appendix~\ref{appendix:correctness_measurement_based}, the state of the first $k - 1$ qubits at the end of the $(k-1)$th iteration is $\ket{\psi_{G[k-1]'}}$ if $(k-1,k) \in E$, or $Z_{\Q,k-1}\ket{\psi_{G[k-1]'}}$ if $(k-1,k) \not\in E$ [cf.~Eq.~\eqref{endstate2}]. Consider the case $(k - 1, k) \in E$. The state of the first $k - 1$ qubits at the point where the $X_\Q$ error occurs is then $[\prod_{j \in J}Z_{\Q,j}]\ket{\psi_{G[k-1]'}}$. This state is a cluster state in which $\Q$ shares an edge with $k -1$ [cf.~Eq.~\eqref{G[k]'}] and with every $j \in J$, and is therefore stabilised by $X_\Q Z_{k-1}\prod_{j\in J}Z_j$. It follows that the $X_\Q$ error is equivalent to $Z_{k-1}\prod_{j \in J}Z_j$, which commutes with all subsequent operations in the circuit. The effective error on the final state is therefore $Z_{k-1}\prod_{j\in J}Z_j$.
Similarly, in the case $(k-1,k) \not\in E$, the state at the point where the $X_\Q$ error occurs is $[\prod_{j \in J}Z_{\Q,j}]Z_{\Q,k-1}\ket{\psi_{G[k-1]'}}$. This is stabilised by $X_\Q \prod_{j \in J}Z_j$, so by the same argument, the $X_\Q$ error results in an effective error $\prod_{j \in J}Z_j$.

The only other possibility is that an $X_\Q$ error occurs after the $X_{\Q,k}$ and before the $H_\Q$ in $\mathcal{B}_k$. Since $H_a X_a = Z_a H_a$, this is equivalent to a $Z_\Q$ error occurring after the $H_\Q$, \textit{i.e.}, immediately after $\mathcal{B}_k$. As shown directly below, such an error results in a $Z_{k+1}$ error on the final state if $(k,k+1) \in E$, and no error if $(k,k+1) \not\in E$.

\medskip

\noindent \textbf{$Z_\Q$ errors}

The identities $Z_{a,b}Z_a = Z_a Z_{a,b}$ and $H_a Z_a = X_a H_a$ imply that a $Z_\Q$ error occurring immediately before $\mathcal{B}_k$ or within $\mathcal{B}_k$ (\textit{i.e.}, between any two of the gates in $\mathcal{B}_k$) is equivalent to an $X_\Q$ error immediately after $\mathcal{B}_k$, which in turn results in a $Z_k$ error on the final state [cf.~Eq.~\eqref{XtoZ}].

A $Z_\Q$ error could also occur immediately after $\mathcal{B}_k$. If $(k, k+1) \in E$, this is equivalent to a $Z_\Q$ error immediately before $\mathcal{B}_{k+1}$, and therefore results in a $Z_{k+1}$ error on the final state. On the other hand, if $(k,k+1) \not\in E$, $\Q$ is measured in the $Z$-basis before $\mathcal{B}_{k+1}$ is applied. In this case, the $Z_\Q$ error has no effect since it directly precedes a $Z$-measurement.

\medskip

\noindent \textbf{$X_i$ errors}

Finally, consider the effect of $X_i$ errors. The only gates in Algorithm~\ref{alg1} (besides the $Z_j$ corrections) with which $X_i$ does not commute are the $Z_{\Q,i}$. From Eq.~\eqref{mathcalBj}, we see that a $Z_{\Q,i}$ gate appears in block $\mathcal{B}_j$ for every $j > i + 1$ such that $(i, j) \in E$. Hence, for each $i \in [n]$, define the index set
\begin{equation} \label{eq:I_i}
    I_i \coloneqq \{j > i+1:(i,j) \in E\}
\end{equation}
so that $\mathcal{B}_j$ includes a $Z_{\Q,i}$ gate iff $j \in I_i$.

Suppose that an $X_i$ error occurs before $\mathcal{B}_k$ (but after $\mathcal{B}_{k-1}$) for some $k$. It can be checked using the identities $Z_{a,b}X_b = X_b Z_a Z_{a,b}$ and $H_a X_a = Z_a H_a$ that this is equivalent to an $X_i$ error at the end of the circuit, along with an $X_\Q$ error immediately after each block $\mathcal{B}_j$ for all $j \in I_i$ such that $j \geq k$. As proven above, $X_\Q$ immediately after $\mathcal{B}_j$ results in an effective error $Z_j$ on the final state. Therefore, it follows from Eq.~\eqref{multiple errors} that an $X_i$ occurring immediately after $\mathcal{B}_k$ results in an effective error $X_i \prod_{j \in I_i: j \geq k} Z_j$ on the final state, up to a sign.

\medskip

These results are summarised in Table~\ref{table:errors1}. Note that the effect of any circuit-level $Z$ error is at worst a single-qubit $Z$ error, while the effective error resulting from any $X$ error is (equivalent under stabilisers to) a product of $Z$ errors supported within $N(i)$. It can be verified using Table~\ref{table:errors1} that $X$ and $Z$ errors occurring at the same spacetime location in the circuit result in effective errors supported within $\{i\} \cup N(i)$ \emph{for the same $i$}. This implies via Eq.~\eqref{multiple errors} that \emph{any} single-qubit error at that location leads to an effective error supported within $\{i\} \cup N(i)$, as claimed.

\begin{table*}
\begin{center}
  \begin{tabular}{ c | l | l }
    circuit-level error &location in circuit &effective error on final state \\
    \hhline{=|=|=} 
    \multirow{4}{*}{$X_\Q$} &\multicolumn{1}{l|}{before $\mathcal{B}_1$ or after a re-initialisation of $\Q$}  &none \\
    &\multicolumn{1}{l|}{immediately after the gates $\prod\limits_{j \in J}Z_{\Q,j}$ in $\mathcal{B}_k$}  &$Z_{k-1}\prod\limits_{j\in J}Z_j$ if $(k-1,k) \in E$; $\prod\limits_{j\in J}Z_j$ if $(k-1,k)\not\in E$ \\ 
    &\multicolumn{1}{l|}{between $X_{\Q,k}$ and $H_\Q$ in $\mathcal{B}_k$}  &$Z_{k+1}$ if $(k,k+1) \in E$; none if $(k,k+1)\not\in E$ \\
    &\multicolumn{1}{l|}{immediately after $\mathcal{B}_k$ (\textit{i.e.}, after $H_\Q$ in $\mathcal{B}_k$)} &$Z_k$ \\ \hline
    \multirow{2}{*}{$Z_\Q$} 
    &\multicolumn{1}{l|}{before or within $\mathcal{B}_k$} &$Z_k$ \\
    &\multicolumn{1}{l|}{before a $Z$-measurement of $\Q$} &none \\ \hline
    \multirow{1}{*}{$X_i$} 
    &\multicolumn{1}{l|}{before $\mathcal{B}_k$ (but after $\mathcal{B}_{k-1}$, if $k > 1$)} &$X_i\prod\limits_{j \in I_i:j \geq k}Z_j$ \\
    \hline 
    \multirow{2}{*}{$Z_i$} 
    &\multicolumn{1}{l|}{before $X_{\Q,i}$ (in $\mathcal{B}_i$)} &none\\
    &\multicolumn{1}{l|}{after $X_{\Q,i}$ (in $\mathcal{B}_i$)} &$Z_i$\\
  \end{tabular}\caption{A complete list of $X$ and $Z$ errors that could occur during Algorithm \ref{alg1} (applied to an arbitrary graph $G = (V,E)$) and their effect on the final state [cf.~Eq.~\eqref{effective error}], up to a sign. $\mathcal{B}_j$ is defined in Eq.~\eqref{mathcalBj}. Note from Eqs.~\eqref{eq:J} and~\eqref{eq:I_i} that for each $\mathcal{B}_k$, $J$ is always a subset of $N(k)$, and that for all $i \in [n]$, $I_i \subset N(i)$.
  \label{table:errors1}}
\end{center}
\end{table*}

\end{proof}

Tables~\ref{table:errors_cubic} and~\ref{table:errors_bcc}, which are used in our simulations in Sections~\ref{sec:threshold_result} and~\ref{sec:delay_lineA}, are both special cases of
Table~\ref{table:errors1}. Specifically, Table~\ref{table:errors1} reduces to Table~\ref{table:errors_cubic} (which corresponds to the first step of Protocol~\ref{protA}) for $G = \Gc$ and $\mathcal{B}_j = A_j$ [cf.~Eq.~\eqref{Aj}], and to Table~\ref{table:errors_bcc} (which corresponds to Protocol~\ref{protB}) for $G = \Gbcc$ and $\mathcal{B}_j = B_j$ [cf.~Eq.~\eqref{Bj}].

We remark that unlike Algorithm~\ref{alg1}, not all cluster state preparation circuits have the property that any single-qubit circuit-level error results in a geometrically local effective error. For instance, Algorithm~\ref{alg2} can likewise be used to generate $\ket{\psi_G}$ for any graph $G = (V,E)$. However, as discussed in the following subsection, there exist single-qubit errors in Algorithm~\ref{alg2} that lead to nonlocal effective errors\footnote{More precisely (since multiple operators fit the definition of an effective error resulting from a particular circuit-level error [cf.~Eq.~\eqref{effective error}]), none of the effective errors corresponding to these single-qubit circuit-level errors are geometrically local.} unless the $n$ qubits are ordered such that $(i,i+1) \in E$ for every $i \in [n-1]$. It follows that for any graph that does not contain a Hamiltonian path, Algorithm~\ref{alg2} does not yield---for any possible ordering of the qubits---a preparation circuit for which the effective errors are all geometrically local. As another example, certain single-qubit errors occurring in the circuit given by Equation 2 of Ref.~\cite{Pichler2017}, which prepares a two-dimensional cluster state, would lead to nonlocal errors. This results from the inclusion of several redundant controlled-$Z$ gates. (It should be noted, however, that the experimental protocol proposed in Ref.~\cite{Pichler2017} does not actually apply these redundant gates.) More generally, by adding redundant controlled-$Z$ gates, it is in fact possible to construct circuits in which certain single-qubit errors induce effective errors whose weights necessarily scale with the number of qubits. 

\subsection{Error analysis for Algorithm~\ref{alg2}}

A similar result holds for Algorithm~\ref{alg2}. However, in contrast to Algorithm~\ref{alg1}, the effective errors resulting from single-qubit errors occurring during Algorithm~\ref{alg2} are \textit{not} all geometrically local for every instance.

\begin{claim} \label{claim2}
Consider an arbitrary graph $G = (V,E)$, and choose any ordering of the qubits in Algorithm~\ref{alg2} by labelling the vertices in $V$ from $1$ to $n$. Then, any single-qubit error occurring between the elementary gates in Algorithm~\ref{alg2} results in an effective error [cf.~Eq.~\eqref{effective error}] that is supported on some (possibly empty) subset of $\{i\} \cup N(i) \cup \{i\pm 1\}$, for some $i \in [n]$.
\end{claim}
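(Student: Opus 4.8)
```latex
\subsection*{Proof proposal}

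The plan is to proceed exactly as in the proof of Claim~\ref{claim1}, by enumerating all possible single-qubit $X$ and $Z$ errors on $\Q$ or on a data qubit $i$, at every spacetime location in Algorithm~\ref{alg2}, and tracking the effective error each induces via Eq.~\eqref{effective error}. The arbitrary-error case then follows from Eq.~\eqref{multiple errors}, so it suffices to handle Pauli $X$ and $Z$. The crucial structural fact to exploit is Eq.~\eqref{psiG[k]'}: after the $k$th iteration of the main loop, the state of $\Q$ and the first $k$ data qubits is the cluster state $\ket{\psi_{G[k]'}}$ (this is already established in Appendix~\ref{appendix:correctness_measurement_free}), so that at any point the instantaneous state is a cluster state (up to $H_\Q$) whose only ancilla edges go to $k$, to $k-1$ in the ``else'' branch where $Z_{\Q,k-1}$ is applied, and to whichever $Z_{\Q,i}$ gates within the current block have already been executed --- and those $i$ all lie in $N(k)$. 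From the stabiliser condition Eq.~\eqref{stabilisers}, any $X_\Q$ error is then equivalent to a product of $Z$ operators on this neighbour set, hence supported within $N(k) \cup \{k-1\}$; writing $k-1 = i$ or $k = i$ appropriately, and absorbing into $\{i\} \cup N(i) \cup \{i\pm 1\}$, gives the claimed locality.

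The steps, in order, would be: (1) dispose of $Z_i$ errors on data qubits, which as in Algorithm~\ref{alg1} either act before the $X_{\Q,i}$ gate (no effect, since qubit $i$ starts in $\ket{0}$) or commute past everything to give $Z_i$ on the output; (2) treat $Z_\Q$ errors --- using $Z_{a,b}Z_a = Z_a Z_{a,b}$ and $H_aZ_a = X_aH_a$, a $Z_\Q$ error before or within block $j$ is equivalent to $X_\Q$ immediately after the block, and a $Z_\Q$ before the final $Z_{\Q,n}$ is just $Z_n$; (3) treat $X_\Q$ errors, which is the heart of the argument: an $X_\Q$ error between gates inside block $j$, or immediately after block $j$, is converted to $Z$ errors on the current ancilla-neighbour set via the stabiliser identity, and then these $Z$'s commute to the end --- but one must note that $Z_{\Q,j-1}$ appears in block $j$ for \emph{every} $j$ with $(j-1,j)\notin E$, which is why an $X_\Q$ error generally picks up a $Z_{j-1}$ factor, giving the extra $\{i\pm 1\}$ slack relative to Claim~\ref{claim1}; (4) treat $X_i$ errors on data qubits: $X_i$ fails to commute only with the $Z_{\Q,i}$ gates, which occur in blocks $j$ with $(i,j)\in E[j]$ and (in the ``else'' branch) possibly in block $i+1$ as well, so propagating $X_i$ forward produces $X_i$ on the output plus $X_\Q$ errors immediately after those blocks, each of which reduces to a $Z_j$ by step (3); collecting, the effective error is $X_i \prod_j Z_j$ over a subset of $N(i)$, again within $\{i\}\cup N(i)$. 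Finally (5), assemble a table analogous to Table~\ref{table:errors1} and verify that $X$ and $Z$ errors at the same location yield effective errors localised to $\{i\}\cup N(i)\cup\{i\pm 1\}$ for a common $i$, so any single-qubit Pauli --- hence any single-qubit error --- does too.

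The main obstacle is the bookkeeping around the $Z_{\Q,j-1}$ gate and the two branches of the conditional in Algorithm~\ref{alg2}. Because the ``else'' branch explicitly inserts $Z_{\Q,j-1}$ (and the ``if'' branch omits it precisely because it would cancel the $Z_{\Q,j-1}$ coming from $E[j]$, cf.\ Eq.~\eqref{cancel Zs}), the ancilla at the start of block $j$ may be entangled with $j-1$ even when $(j-1,j)\notin E$ --- this is the single place where Algorithm~\ref{alg2} differs from Algorithm~\ref{alg1}, and it is exactly what forces the weaker locality window $\{i\}\cup N(i)\cup\{i\pm1\}$ rather than $\{i\}\cup N(i)$. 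One must check carefully that this extra $Z_{j-1}$ (or $Z_{j+1}$, when the error is propagated through a later block) never compounds across multiple blocks into something of unbounded weight; the point is that a single circuit-level error is ``born'' in one block, and propagating it forward through block $m$ only ever generates an $X_\Q$ error \emph{immediately after} block $m$, which by step (3) collapses to a single $Z_m$ --- so the total accumulated weight stays bounded by the degree of $G$ plus a constant. Care is also needed at the boundary cases (errors before $\mathcal{B}_1$, after $\mathcal{B}_n$, and the relation of $X_\Q$ errors between $X_{\Q,k}$ and $H_\Q$ to $Z_\Q$ errors, using $H_aX_a = Z_aH_a$), but these are routine once the template from Claim~\ref{claim1} is in hand.
```
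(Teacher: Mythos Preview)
Your proposal is correct and follows essentially the same approach as the paper, which likewise reduces Claim~\ref{claim2} to the template of Claim~\ref{claim1} with modified index sets (the paper calls them $\widetilde{I}_i$ and $\widetilde{J}$) to account for the extra $Z_{\Q,j-1}$ gate in the ``else'' branch. One small slip to fix in step~(4): after correctly observing that $Z_{\Q,i}$ may appear in block $i+1$ when $(i,i+1)\notin E$, you conclude the effective error from $X_i$ lies ``over a subset of $N(i)$, again within $\{i\}\cup N(i)$'' --- but the resulting $Z_{i+1}$ factor is \emph{not} in $N(i)$ in that case, so the support should read $\{i\}\cup N(i)\cup\{i+1\}$, consistent with the weaker window you have already motivated.
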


Specifically, Table~\ref{table:errors2} gives the effect of all possible single-qubit $X$ and $Z$ errors that could occur during Algorithm~\ref{alg2}. In this table, $\widetilde{\mathcal{B}}_j$ is used to denote the $j$th block of gates applied in Algorithm~\ref{alg2}:
\begin{equation} \label{mathcalBjtilde}
    \widetilde{B}_j \coloneqq \begin{dcases} 
    H_\Q X_{\Q,j} \prod_{\substack{i \neq j - 1 \\ (i,j) \in E[j]}}Z_{\Q,i} &(i,i+1) \in E \\
    H_\Q X_{\Q,j} Z_{\Q,j-1}
\prod_{i:(i,j) \in E[j]} Z_{\Q,i} &(i,i+1) \not\in E.    \end{dcases}
\end{equation}

\begin{table*}
\begin{center}
  \begin{tabular}{ c | l | l }
    circuit-level error &location in circuit &effective error on final state \\
    \hhline{=|=|=} 
    \multirow{4}{*}{$X_\Q$} 
    &{before $\widetilde{\mathcal{B}}_1$} &none \\
    &{immediately after the gates $\prod\limits_{j\in \widetilde{J}}Z_{\Q,j}$ in $\widetilde{\mathcal{B}}_k$}  &$Z_{k-1}\prod\limits_{j\in \widetilde{J}}Z_j$ \\ 
    &{between $X_{\Q,k}$ and $H_{\Q}$ in $\widetilde{\mathcal{B}}_k$} &$Z_{k+1}$ \\
    &{immediately after $\widetilde{\mathcal{B}}_k$ (\textit{i.e.}, after $H_\Q$ in $\widetilde{\mathcal{B}}_k$)}  &$Z_k$ \\
    \hline
    \multirow{1}{*}{$Z_\Q$} 
    &{before or within $\widetilde{\mathcal{B}}_k$} &$Z_k$ \\
    \hline
    \multirow{1}{*}{$X_i$} 
    &{before $\widetilde{\mathcal{B}}_k$ (but after $\widetilde{\mathcal{B}}_{k-1}$)} &$X_i\prod\limits_{j\in \widetilde{I}_i:j \geq k}Z_j$ \\ 
    \hline 
    \multirow{2}{*}{$Z_i$} 
    &{before $X_{\Q,i}$ (in $\widetilde{\mathcal{B}}_i$)} &none\\
    &{after $X_{\Q,i}$ (in $\widetilde{\mathcal{B}}_i$)} &$Z_i$\\
  \end{tabular}\caption{A complete list of $X$ and $Z$ errors that could occur in Algorithm~\ref{alg2} and their effect on the final state [cf.~Eq.~\eqref{effective error}], up to a sign. $\widetilde{\mathcal{B}}_j$ is defined in Eq.~\eqref{mathcalBjtilde}. Note from Eqs.~\eqref{eq:tildeI_i} and~\eqref{eq:tildeJ} that $I_i \subset N(i) \cup \{i+1\}$ for all $i \in [n]$, and that for each $\widetilde{\mathcal{B}}_k$, $\widetilde{J}$ is always a subset of $N(k) \cup \{k-1\}$.
  \label{table:errors2}}
\end{center}
\end{table*}

The proof of Claim~\ref{claim2} is essentially the same as that of Claim~\ref{claim1}, requiring only the following modifications. First, there are no intermediate measurements of $\Q$ in Algorithm~\ref{alg2}, so unlike for Algorithm~\ref{alg1} there is no need to consider errors occurring before a measurement or after a re-initialisation of $\Q$. Second, the index sets $J$ and $I_i$ [cf.~Eqs.~\eqref{eq:J} and~\eqref{eq:I_i}] considered in the proof of Claim~\ref{claim1} are slightly different for Claim~\ref{claim2}. Specifically, $I_i$ should be replaced by its analogue $\widetilde{I}_i$, defined as
\begin{equation} \label{eq:tildeI_i}
\widetilde{I}_i \coloneqq \begin{dcases} \{j > i+1:(i,j) \in E\} &(i,i+1) \in E \\
\{i+ 1\} \cup \{j > i:(i,j) \in E\} &(i,i+1)\not\in E, \end{dcases}
\end{equation}
so that the gate block $\widetilde{\mathcal{B}}_j$ includes a $Z_{\Q,i}$ gate iff $j \in \widetilde{I}_i$, as can be seen from Eq.~\eqref{mathcalBjtilde}. (Note that in contrast to $I_i$, which is always contained within the nearest neighbours $N(i)$ of $i$, $\widetilde{I}_i$ may contain $i+1$ even if $(i,i+1) \not\in E$.) Similarly, if $\widetilde{J}$ is a subset of the controlled-$Z$ gates in $\widetilde{\mathcal{B}}_k$, it follows from Eq.~\eqref{mathcalBjtilde} that 
\begin{equation} \label{eq:tildeJ}
\widetilde{J} \subseteq \begin{dcases}
\{j < k-1:(j,k) \in E\} &(k-1,k) \in E \\
\{k-1\} \cup \{j < k: (j,k) \in E\} &(k-1,k) \not\in E.
\end{dcases}
\end{equation}

It can then be verified using Table~\ref{table:errors2} (in conjunction with Eq.~\eqref{multiple errors}) that any single-qubit error occurring in Algorithm~\ref{alg1} results in an effective error supported within $\{i\} \cup N(i) \cup \{i\pm 1\}$. These effective errors are not geometrically local in general, as $i -1$ and $i + 1$ are not necessarily nearest neighbours of $i$.

It is worth noting, however, that the effective errors would always be geometrically local if $(i,i+1) \in E$ for all $i \in [n-1]$. This is possible iff
\begin{enumerate}[1)]
    \item the underlying graph $G$ of the target cluster state contains a Hamiltonian path, and
    \item we use the ordering of the vertices along the Hamiltonian path as the ordering of qubits in Algorithm~\ref{alg2}.
\end{enumerate}
For instance, the cubic lattice $\Gc$ prepared in Protocol~\ref{protA} contains a Hamiltonian path, and the vertices are ordered in Eq.~\eqref{cubic edges} such that $(i,i+1)$ is an edge for every $i \in [n-1]$. For $G = \Gc$
Algorithms~\ref{alg1} and~\ref{alg2} reduce to the exact same circuit when we use this ordering of the vertices, and Table~\ref{table:errors_cubic} shows that all of the effective errors are indeed geometrically local. 

More generally, even if the graph $G$ does not contain a Hamiltonian path, Claim~\ref{claim2} shows that every effective error has weight at most $D(G) + 3$, where $D(G)$ is the maximum degree of $G$, regardless of the ordering of vertices that we choose. 

\bibliography{bib.bib}

\end{document}